\newcommand{\lv}[1]{#1}
\newcommand{\sv}[1]{}

\documentclass[a4paper,USenglish]{lipics}

\usepackage{times}

 
\usepackage{tikz}
\usetikzlibrary{positioning,calc}

\usepackage{url}
\usepackage{amsmath,amssymb}
\usepackage{xspace,tikz,enumitem}
\usepackage{mathrsfs,verbatim,hyperref}
\usepackage{enumitem,todonotes,microtype}

\usepackage{color}
\usepackage{boxedminipage}

\usepackage[]{algorithm2e}

\newcommand{\Nat}{\mathbb{N}}
\newcommand{\hy}{\hbox{-}\nobreak\hskip0pt}

\newcommand{\bigoh}{\mathcal{O}}

\def\hy{\hbox{-}\nobreak\hskip0pt} 

\newcommand{\SB}{\{\,} \newcommand{\SM}{\;{|}\;} \newcommand{\SE}{\,\}}
 
\newcommand{\Card}[1]{|#1|}

\newcommand{\CC}{\mathcal{C}} 
 
\newcommand{\FF}{\mathcal{F}} 
\newcommand{\WW}{\mathcal{W}} 
\newcommand{\RR}{\mathcal{R}} 
\newcommand{\WWW}{\mathcal{W}} 

\newcommand{\NP}{\text{\normalfont NP}}


\newcommand{\mtext}[1]{\text{\normalfont\itshape #1}} 

\newcommand{\gleq}{\mathop{\diamondsuit}}

\def\mx#1{\mbox{\boldmath$#1$}}

\def\MS#1{\mbox{MSO}}

\newcommand{\MSO}{\mbox{MSO}\xspace}

\def\EE{{\mathcal E}}

\def\HH{{\mathcal H}}
\def\KK{{\mathcal K}}

\def\PP{{\mathcal P}}
\def\ZZ{{\mathcal Z}}

\newtheorem{proposition}{Proposition}
\newtheorem{fact}{Fact}


{\bfseries}{\itshape}
\newtheorem{observation}{Observation}{\bfseries}{\itshape}



\def\sm{split-module}

\def\wsm{well-structured modulator}

\newcommand{\MSOMC}[1]{\textsc{MSO-MC${}_{#1}$}}
\newcommand{\MSOOPT}[2]{\textsc{MSO-Opt${}^{#1}_{#2}$}}

\newcommand{\wsn}{\mtext{wsn}}
\newcommand{\rwc}{\mtext{rwc}}
\newcommand{\vcn}{\mtext{vcn}}

\newcommand{\rw}{\mtext{rw}}
\newcommand{\md}{\mtext{mod}}
\newcommand{\hit}{\mtext{hit}}

\pagestyle{plain}

\title{Meta-Kernelization using Well-Structured Modulators\footnote{Supported by the Austrian Science Fund (FWF), project P26696.}}

\author{Eduard Eiben}
\author{Robert Ganian}
\author{Stefan Szeider}

\affil{Algorithms and Complexity Group, TU Wien\\
  Vienna, Austria}
\authorrunning{E. Eiben and R. Ganian and S. Szeider}

\Copyright{Eduard Eiben and Robert Ganian and Stefan Szeider}

\subjclass{F.1.3 Complexity Measures and Classes, G.2.1 Combinatorics}
\keywords{Kernelization, Parameterized complexity, Structural parameters, Rank-width, Split decompositions}

\begin{document}

\maketitle

\begin{abstract}
  \noindent Kernelization investigates exact preprocessing algorithms
  with performance guarantees.  The most prevalent type of parameters
  used in kernelization is the solution size for optimization
  problems; however, also structural parameters have been successfully
  used to obtain polynomial kernels for a wide range of problems. Many
  of these parameters can be defined as the size of a smallest
  \emph{modulator} of the given graph into a fixed graph class (i.e.,
  a set of vertices whose deletion puts the graph into the graph
  class). Such parameters admit the construction of polynomial kernels
  even when the solution size is large or not applicable. This work
  follows up on the research on meta-kernelization frameworks in terms
  of structural parameters.

  We develop a class of parameters which are based on a more general
  view on modulators: instead of size, the parameters employ a
  combination of rank-width and split decompositions to measure
  structure inside the modulator. This allows us to lift kernelization
  results from modulator-size to more general parameters, hence
  providing smaller kernels. We show (i)~how such large but
  well-structured modulators can be efficiently approximated, (ii)~how
  they can be used to obtain polynomial kernels for any graph problem
  expressible in Monadic Second Order logic, and (iii)~how they allow
  the extension of previous results in the area of structural
  meta-kernelization.
\end{abstract}


\section{Introduction}


Kernelization investigates exact preprocessing algorithms with
performance guarantees. Similarly as in parameterized complexity
analysis, in kernelization we study \emph{parameterized problems}:
decision problems where each instance $I$ comes with a parameter
$k$. A parameterized problem is said to admit a kernel of size
$f:\Nat \rightarrow \Nat$ if every instance $(I,k)$ can be reduced in
polynomial time to an equivalent instance (called the \emph{kernel})
whose size and parameter are bounded by $f(k)$. For practical as well
as theoretical reasons, we are mainly interested in the existence of
\emph{polynomial kernels}, i.e., kernels whose size is polynomial in
$k$. The study of kernelization has recently been one of the main
areas of research in parameterized complexity, yielding many important
new contributions to the theory. 

The by far most prevalent type of parameter used in kernelization is the \emph{solution size}. Indeed, the existence of polynomial kernels and the exact bounds on their sizes have been studied for a plethora of distinct problems under this parameter, and the rate of advancement achieved in this direction over the past $10$ years has been staggering. Important findings were also obtained in the area of \emph{meta-kernelization}~\cite{BodlaenderEtal09,FominLokshtanovSaurabh10,KimLangerPaulReidlRossmanith13}, which is the study of general kernelization techniques and frameworks used to establish polynomial kernels for a wide range of distinct problems. 

In parameterized complexity analysis, an alternative to
parameterization by solution size has traditionally been the use of
\emph{structural parameters}. But while parameters such as
\emph{treewidth} and the more general \emph{rank-width} allow the
design of FPT algorithms for a range of important problems, it is
known that they cannot be used to obtain polynomial kernels for
problems of interest. Instead, the structural parameters used for
kernelization often take the form of the size of minimum
\emph{modulators} (a modulator of a graph is a set of vertices whose
deletion puts the graph into a fixed graph class). Examples of such
parameters include the size of a minimum vertex
cover~\cite{FominJansenP14,BodlaenderJansenKratsch13} (modulators into
the class of edgeless graphs) or of a minimum feedback vertex
set~\cite{BodlaenderJansenKratsch13b,JansenBodlaender13} (modulators
into the class of forests). While such structural parameters are not
as universal as the structural parameters used in the context of
fixed-parameter tractability, these results nonetheless allow
efficient preprocessing of instances where the solution size is large
and for problems where solution size simply cannot be used (such as
$3$-coloring).

This paper follows up on the recent line of research which studies meta-kernelization in terms of structural parameters. Gajarsk\' y et al.~\cite{GajarskyHlinenyObdrzalek13} developed a meta-kernelization framework parameterized by the size of a modulator to the class of graphs of bounded treedepth on sparse graphs. Ganian et al.~\cite{GanianSlivovskySzeider13} independently developed a meta-kernelization framework using a different parameter based on rank-width and modular decompositions (see Subsection~\ref{sub:parcomp} for details). Our results build upon both of the aforementioned papers by fully subsuming the meta-kernelization framework of~\cite{GanianSlivovskySzeider13} and lifting the meta-kernelization framework of~\cite{GajarskyHlinenyObdrzalek13} to more general graph classes. The class of problems investigated in this paper are problems which can be expressed using \emph{Monadic Second Order} (MSO) logic (see Subsection~\ref{sub:mso}).

The parameters for our kernelization results are also based on
modulators. However, instead of parameterizing by the \emph{size} of
the modulator, we instead measure the \emph{structure} of the
modulator through a combination of rank-width and split
decompositions. Due to its technical nature, we postpone the
definition of our parameter, the \emph{well-structure number}, to
Section~\ref{sec:kcwsm}; for now, let us roughly describe it as the
number of sets one can partition a modulator into so that each set
induces a graph with bounded rank-width and a simple neighborhood. We
call modulators which satisfy our conditions \emph{well-structured}. A
less restricted variant of the well-structure number has recently been
used to obtain meta-theorems for FPT algorithms on graphs of unbounded
rank-width~\cite{EibenGanianSzeider15}.

After formally introducing the parameter, in Section~\ref{sec:vc} we showcase its applications on the special case of generalizing the \emph{vertex cover number} by considering \wsm s to edgeless graphs. While it is known that there exist MSO-definable problems which do not admit a polynomial kernel parameterized by the vertex cover number on general graphs, on graphs of bounded expansion this is no longer the case (as follows for instance from~\cite{GajarskyHlinenyObdrzalek13}). On the class of graphs of bounded expansion, we prove that every \MSO{}-definable problem admits a linear kernel parameterized by the well-structure number for edgeless graphs. As a corollary of our approach, we also show that every \MSO{}-definable problem admits a linear kernel parameterized by the well-structure number for the empty graph (without any restriction on the expansion). We remark that the latter result represents a direct generalization of the results in~\cite{GanianSlivovskySzeider13}. The proof is based on a combination of a refined version of the replacement techniques developed in~\cite{EibenGanianSzeider15} together with the annotation framework used in~\cite{GanianSlivovskySzeider13}.

Before we can proceed to wider applications of our parameter in kernelization, it is first necessary to deal with the subproblem of finding a suitable \wsm{} in polynomial time. We resolve this question for well-structured modulators to a vast range of graph classes. In particular, in Subsection~\ref{sub:fvs} we obtain a $3$-approximation algorithm for finding \wsm s to acyclic graphs, and in the subsequent Subsection~\ref{sub:fis} we show how to approximate \wsm s to any graph class characterized by a finite set of forbidden induced subgraphs within a constant factor.

Section~\ref{sec:using} then contains our most general result,
Theorem~\ref{thm:main-use}, which is the key for lifting kernelization
results from modulators to \wsm s. The theorem states that whenever a
modulator to a graph class $\HH$ can be used to poly-kernelize some
\MSO{}-definable problem, this problem also admits a polynomial kernel
when parameterized by the well-structure number for $\HH$ as long as
\wsm s to $\HH$ can be approximated in polynomial time. The remainder
of Section~\ref{sec:using} then deals with the applications of this
theorem. Since the class of graphs of treedepth bounded by some fixed
integer can be characterized by a finite set of forbidden induced
subgraphs, we can use \wsm s to lift the results
of~\cite{GajarskyHlinenyObdrzalek13} from modulators to \wsm s for all
\MSO{}-definable decision problems. Furthermore, by applying the
\emph{protrusion} machinery
of~\cite{BodlaenderEtal09,KimLangerPaulReidlRossmanith13} we show
that, in the case of bounded degree graphs, parameterization by a
modulator to acyclic graphs (i.e., a feedback vertex set) allows the
computation of a linear kernel for any \MSO{}-definable decision
problem. By our framework it then follows that such modulators can
also be lifted to \wsm s.

\sv{\smallskip \noindent {\emph{Statements whose proofs are located in the appendix are marked with $\star$.}}}

\section{Preliminaries}\label{sec:prel}
The set of natural numbers (that is, positive integers) will be denoted by
$\Nat$. For $i \in \Nat$ we write $[i]$ to denote the set $\{1,
\dots, i \}$. If $\sim$ is an equivalence relation over a set $A$, then for $a\in A$ we use $[a]_{\sim}$ to denote the equivalence class containing $a$.

\subsection{Graphs} We will use standard graph theoretic terminology and notation
(cf. \cite{Diestel00}). All graphs in this document are simple and undirected. 

Given a graph $G=(V(G),E(G))$ and $A\subseteq V(G)$, we denote by $N(A)$ the set of neighbors of $A$ in $V(G)\setminus A$; if $A$ contains a single vertex $v$, we use $N(v)$ instead of $N(\{v\})$. We use $V$ and $E$ as shorthand for $V(G)$ and $E(G)$, respectively, when the graph is clear from context. $G-A$ denotes the subgraph of $G$ obtained by deleting $A$.
For $A\subseteq V(G)$ we use $G[A]$ to denote the subgraph of $G$ obtained by deleting $V(G)\setminus A$.

\subsection{Splits and Split-Modules}
  A \emph{split} of a connected graph $G=(V,E)$ is a vertex bipartition $\{A,B\}$ of $V$
 such that every vertex of $A' = N(B)$ has the same neighborhood in $B'=N(A)$. The sets $A'$ and $B'$ are called \emph{frontiers} of the split. 
 \newcommand{\prelimsplitsa}[0]{
 A split is said to be \emph{non-trivial} if both sides have at least two vertices. A connected graph which does not contain a non-trivial split is called \emph{prime}. A bipartition is \emph{trivial} if one of its parts is the empty set or a singleton. Cliques and stars are called \emph{degenerate} graphs; notice that every non-trivial bipartition of their vertices is a split.}
 \lv{\prelimsplitsa}

Let $G=(V,E)$ be a graph. To simplify our exposition, we will use the notion of \emph{\sm s} instead of splits where suitable. A set $A\subseteq V$ is called a \emph{\sm} of $G$ if there exists a connected component $G'=(V',E')$ of $G$ such that $\{A,V'\setminus A\}$ forms a split of $G'$. Notice that if $A$ is a {\sm} then $A$ can be partitioned into $A_1$ and $A_2$ such that $N(A_2)\subseteq A$ and for each $v_1, v_2\in A_1$ it holds that $N(v_1)\cap (V'\setminus A)=N(v_2)\cap (V'\setminus A)$; $A_1$ is then called the frontier of $A$. For technical reasons, $V$ and $\emptyset$ are also considered \sm s. We say that two disjoint \sm s~$X, Y \subseteq V$ are \emph{adjacent} if there exist
$x \in X$ and $y \in Y$ such that $x$ and $y$ are adjacent. We use $\lambda(A)$ to denote the frontier of \sm{} $A$.

\subsection{Rank-Width}

For a graph $G$ and $U,W\subseteq V(G)$, let $\mx A_G[U,W]$ denote the
$U\times W$-submatrix of the adjacency matrix over the two-element
field $\mathrm{GF}(2)$, i.e., the entry $a_{u,w}$, $u\in U$ and $w\in
W$, of $\mx A_G[U,W]$ is $1$ if and only if $\{u,w\}$ is an edge
of~$G$.  The {\em cut-rank} function $\rho_G$ of a graph $G$ is
defined as follows: For a bipartition $(U,W)$ of the vertex
set~$V(G)$, $\rho_G(U)=\rho_G(W)$ equals the rank of $\mx A_G[U,W]$
over $\mathrm{GF}(2)$. 

A \emph{rank-decomposition} of a graph $G$ is a pair $(T,\mu)$
where $T$ is a tree of maximum degree 3 and $\mu:V(G)\rightarrow \{t:
\text{$t$ is a leaf of $T$}\}$ is a bijective function. For an edge~$e$ of~$T$, the connected components of $T - e$ induce a
bipartition $(X,Y)$ of the set of leaves of~$T$.  The \emph{width} of
an edge $e$ of a rank-decomposition $(T,\mu)$ is $\rho_G(\mu^{-1} (X))$.
The \emph{width} of $(T,\mu)$ is the maximum width over all edges of~$T$.  The \emph{rank-width} of $G$, $\rw(G)$ in short, is the minimum width over all
rank-decompositions of $G$. We denote by $\RR_i$ the class of all graphs of rank-width at most $i$, and say that a graph class $\HH$ is of \emph{unbounded rank-width} if $\HH\not \subseteq \RR_i$ for any $i\in \Nat$.

\newenvironment{psmallmatrix}
  {\left(\begin{smallmatrix}}
  {\end{smallmatrix}\right)}

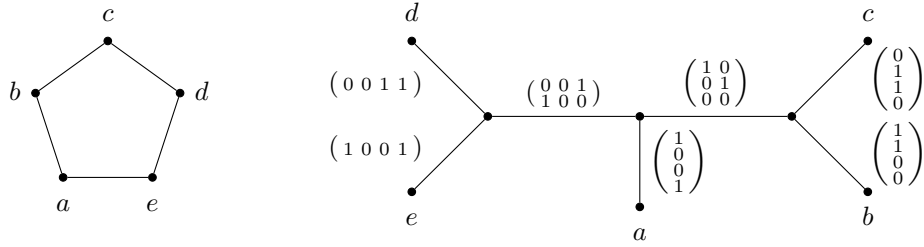
\begin{figure}[ht]
\vspace{-0.3cm}
  \centering
  \tikzstyle{every circle node}=[circle,draw,inner sep=1.0pt, fill=black]
  \begin{tikzpicture}
    \begin{scope}[xshift=-7cm]
      \draw
      (90+4*72:1) node[circle] (d)  {} node[right=2pt] {$\strut d$} 
      (90+0*72:1) node[circle] (c)  {} node[above] {$\strut c$} 
      (90+1*72:1) node[circle] (b)  {} node[left=2pt] {$\strut b$} 
      (90+2*72:1) node[circle] (a)  {} node[below] {$\strut a$} 
      (90+3*72:1) node[circle] (e)  {} node[below] {$\strut e$} 
      (a)--(b)--(c)--(d)--(e)--(a)
      ;
    \end{scope}

 \draw 
 (-3,1)   node[circle] (d) {} node[above] {$\strut d$} 
 (-3,-1)  node[circle] (e) {} node[below] {$\strut e$} 
 (-2,0)   node[circle] (U) {} 
 (0,0)    node[circle] (V) {} 
 (0,-1.2)    node[circle] (X) {} node[below] {$\strut a$} 
 (2,0)    node[circle] (W) {}
 (3,1)   node[circle] (c) {} node[above] {$\strut c$} 
 (3,-1)  node[circle] (b) {} node[below] {$\strut b$} 

 (d) to node [auto,near start, swap] {$\begin{psmallmatrix}
                              0&0&1&1
                            \end{psmallmatrix}$} (U)
 (e) to node [auto, near start] {$\begin{psmallmatrix}
                              1&0&0&1
                            \end{psmallmatrix}$} (U)
 (U) to node [auto] {$\begin{psmallmatrix}
                              0&0&1\\
                              1&0&0  
                            \end{psmallmatrix}$} (V)
 (V) to node [auto] {$\begin{psmallmatrix}
                              1&0\\
                              0&1\\
                              0&0
                            \end{psmallmatrix}$} (W)
 (V)  to node [auto] {$\begin{psmallmatrix}
                              1\\
                              0\\
                              0\\
                              1
                            \end{psmallmatrix}$} (X)
 (c)--(W)--(b)
(3.4,0.5) node {$\begin{psmallmatrix}
                              0\\
                              1\\
                              1\\
                              0
                            \end{psmallmatrix}$} 
(3.4,-0.5) node {$\begin{psmallmatrix}
                              1\\
                              1\\
                              0\\
                              0
                            \end{psmallmatrix}$} 
;
\end{tikzpicture}
\vspace{-0.5cm}
\caption{A rank-decomposition of the cycle $C_5$.}
\label{fig:rdecC5}
\end{figure}

\begin{fact}[\cite{HlinenyOum08}]\label{fact:rankdecomp} Let $k \in \Nat$ be a constant and
  $n \geq 2$. For an $n$-vertex graph $G$, we can output a
  rank-decomposition of width at most $k$ or confirm that the
  rank-width of $G$ is larger than $k$ in time $\bigoh(n^3)$.
\end{fact}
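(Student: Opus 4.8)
The plan is to treat rank-width as the branch-width of the symmetric submodular connectivity function $\rho_G$ and to proceed in two stages: first compute a rank-decomposition whose width is bounded by a linear function of $k$, and then refine this coarse object into an optimal test by a finite-state dynamic program running along it. The $\bigoh(n^3)$ bound will be dominated by the first stage.

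First I would recall that a rank-decomposition of $G$ is exactly a branch-decomposition of the function $\rho_G$, which is symmetric (since $\rho_G(U)=\rho_G(V\setminus U)$) and submodular, and that the width of an edge is the value of $\rho_G$ on the leaf-set it separates. This places the problem inside the general theory of branch-width of submodular connectivity functions. Using the approximation machinery for such functions (every symmetric submodular function admits a polynomial-time computable branch-decomposition of width within a constant factor of the optimum), I would compute, in time $\bigoh(n^3)$, either a rank-decomposition of $G$ of width at most $ck$ for some absolute constant $c$, or a certificate (a tangle of order exceeding $k$) witnessing that $\rw(G)>k$. Evaluating $\rho_G$ on a set reduces to a rank computation over $\mathrm{GF}(2)$ on a submatrix of the adjacency matrix, which is polynomial; this is where the cubic cost is incurred.

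The second, harder stage turns an approximate decomposition of width $\bigoh(k)$ into an exact test. Here I would exploit that the class $\RR_k$ of graphs of rank-width at most $k$ is closed under taking vertex-minors and is characterized by a finite set of forbidden vertex-minors (Oum). Since the approximate decomposition has width bounded by a function of $k$, I would process its decomposition tree from the leaves upward, maintaining for each subtree a bounded-size \emph{type} recording the behaviour of the induced boundaried graph across its frontier with respect to attainable rank-decomposition widths. The number of distinct types is bounded by a function of $k$ alone: this is the crux, and it follows from a Myhill--Nerode style argument, namely that the equivalence identifying two boundaried pieces that are interchangeable across a boundary of bounded cut-rank has finite index. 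From these types the program decides whether $\rw(G)\le k$, and, by storing a witnessing partial decomposition with each type, reconstructs an actual decomposition of width at most $k$ when one exists.

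The main obstacle is precisely the finiteness and constructiveness of the second stage: one must show that, relative to a boundary of bounded cut-rank, a piece can exhibit only finitely many behaviours with respect to the widths achievable by rank-decompositions refining it, and that a globally optimal decomposition can be assembled from locally optimal representatives without super-polynomial overhead. Care is also needed to keep all bookkeeping within $\bigoh(n^3)$: the approximation dominates, but the dynamic program must run in time linear in $n$ with constants depending only on $k$, which requires that both the number of types and the cost of combining two child types be bounded purely in terms of $k$.
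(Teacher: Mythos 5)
The paper offers no proof of this statement: it is imported verbatim as a known result of Hlin\v{e}n\'y and Oum, so there is no in-paper argument to compare yours against. Your outline does reproduce the architecture of the actual proof in the cited literature --- first an approximate branch-decomposition of the symmetric submodular cut-rank function, then an exact finite-state test run along it --- so at the level of strategy you are on the right track.

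As a proof, however, the proposal defers exactly the two steps where all the difficulty lives. First, the generic Oum--Seymour approximation for an arbitrary symmetric submodular function given by an oracle does \emph{not} run in $\bigoh(n^3)$; its running time is a much larger polynomial. Obtaining a cubic-time $\bigoh(k)$-approximation for rank-width specifically requires Oum's refined algorithm, which exploits the structure of $\rho_G$ and vertex-minors rather than mere oracle access, so the sentence ``compute, in time $\bigoh(n^3)$, either a decomposition of width at most $ck$ or a tangle of order exceeding $k$'' is asserting the hard half of the theorem rather than deriving it. Second, the ``bounded number of types'' claim in your dynamic program is precisely the content of the deep results you would need to invoke: the finiteness (with an explicit size bound of roughly $(6^{k+1}-1)/5$ vertices per obstruction) of the forbidden vertex-minor characterization of $\RR_k$, together with the parse-tree/automaton machinery that makes membership testable and a witnessing decomposition constructible along an approximate decomposition. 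You correctly identify this as ``the crux,'' but naming the crux is not the same as resolving it. In short: a faithful roadmap of the known proof, with the two load-bearing theorems left as black boxes --- which is acceptable if the intent is to cite them, but not if the intent is to prove the fact from scratch.
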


More properties of rank-width can be found, for instance, in~\cite{OumSeymour06}.

\subsection{Fixed-Parameter Tractability and Kernels}
\label{sub:parcomp}

\newcommand{\QQ}{Q}

A \emph{parameterized problem} $\PP$ is a subset of $\Sigma^* \times
\Nat$ for some finite alphabet $\Sigma$. For a problem instance $(x,k)
\in \Sigma^* \times \Nat$ we call $x$ the main part and $k$ the
parameter.  A parameterized problem $\PP$ is \emph{fixed-parameter
  tractable} (FPT in short) if a given instance $(x, k)$ can be solved in time
$O(f(k) \cdot p(\Card{x}))$ where $f$ is an arbitrary computable
function of $k$ and $p$ is a polynomial function.

A \emph{bikernelization} for a parameterized problem $\PP \subseteq
\Sigma^* \times \Nat$ into a parameterized problem $\QQ \subseteq
\Sigma^* \times \Nat$ is an algorithm that, given $(x, k) \in \Sigma^*
\times \Nat$, outputs in time polynomial in $\Card{x}+k$ a pair $(x',
k') \in \Sigma^* \times \Nat$ such that (i)~$(x,k) \in \PP$ if and
only if $(x',k') \in \QQ$ and (ii)~$\Card{x'}+k'\leq g(k)$, where $g$
is an arbitrary computable function.  The reduced instance $(x',k')$
is the \emph{bikernel}.  If $\PP=\QQ$, the reduction is called a
\emph{kernelization} and $(x',k')$ a \emph{kernel}.  The function $g$
is called the \emph{size} of the (bi)kernel, and if $g$ is a
polynomial then we say that $\PP$ admits a \emph{polynomial
  (bi)kernel}.
  
The following fact links the existence of bikernels to the existence of kernels.
  
\begin{fact}[\cite{AlonGutinKimSY11}]
\label{fact:bikernels}
Let $\PP,\QQ$ be a pair of decidable parameterized problems such that $\QQ$ is in \NP{} and $\PP$ is \NP-complete. If there is a bikernelization from $\PP$ to $\QQ$ producing a polynomial bikernel, then $\PP$ has a polynomial kernel.
\end{fact}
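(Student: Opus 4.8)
The plan is to compose the given bikernelization with a classical polynomial-time reduction, exploiting the complexity assumptions on $\PP$ and $\QQ$. First I would run the bikernelization on the input instance $(x,k)$ of $\PP$ to obtain, in time polynomial in $\Card{x}+k$, an equivalent instance $(x',k')$ of $\QQ$ satisfying $\Card{x'}+k'\le g(k)$ for some polynomial $g$. At this point the instance has already been shrunk to size polynomial in $k$; the only remaining issue is that it lives in $\QQ$ rather than in $\PP$, so no size bound need be re-established once we transform it back.

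Next I would convert $(x',k')$ into an instance of $\PP$. Since $\QQ$ is in $\NP$ and $\PP$ is $\NP$-complete (hence $\NP$-hard as a classical decision problem), the derived classical problem associated with $\QQ$ admits a polynomial-time many-one reduction to the derived classical problem associated with $\PP$. Applying this reduction to the encoding of $(x',k')$ yields, in time polynomial in $\Card{x'}+k'$, a string $x''$ that is a yes-instance of classical $\PP$ if and only if $(x',k')\in \QQ$. Because the reduction runs in polynomial time, $\Card{x''}$ is bounded by a polynomial in $\Card{x'}+k'$, and therefore by a polynomial in $k$.

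Finally I would re-parameterize, outputting the instance $(x'',k'')$ with $k''=\Card{x''}$. Chaining the two equivalences gives $(x'',k'')\in \PP \iff (x',k')\in \QQ \iff (x,k)\in \PP$, so the reduction is correct, while $\Card{x''}+k''=2\Card{x''}$ is polynomial in $k$; thus $(x'',k'')$ is a polynomial kernel for $\PP$. The step I expect to be the main obstacle, more conceptually than technically, is the passage from a parameterized bikernel to a classical reduction: the reduction from $\QQ$ to $\PP$ does not respect the parameter, so it is only legitimate to discard the original parameter and set $k''=\Card{x''}$ because the bikernel has already made the whole instance polynomially small in $k$. Care is also needed to check that the derived classical decision problems are well-defined and that both transformations keep $\Card{x''}+k''$ polynomial in $k$; decidability of $\PP$ and $\QQ$ guarantees that all intermediate objects are genuine, computable instances.
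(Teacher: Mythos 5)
The paper does not actually prove this statement --- it is imported verbatim as a cited fact from Alon et al.~--- so there is nothing to compare against except the standard argument, which is indeed the one you outline: run the bikernelization, then exploit $\QQ\in\NP$ and the $\NP$-hardness of $\PP$ to reduce the small $\QQ$-instance back to a $\PP$-instance of size polynomial in $k$. That composition is the right idea and is essentially the whole proof.

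However, your final ``re-parameterization'' step is wrong as written, and it is exactly the step you flagged as delicate. The polynomial-time reduction from the classical language derived from $\QQ$ to the classical language derived from $\PP$ outputs a string $x''$ which \emph{encodes a parameterized instance} $(y,\ell)$ of $\PP$; correctness of the reduction says $(y,\ell)\in\PP \iff (x',k')\in\QQ$. You are not free to discard $\ell$ and declare the kernel to be $(x'',\Card{x''})$: membership in the parameterized problem $\PP$ depends on the parameter, so $(x'',\Card{x''})\in\PP$ need not coincide with $(y,\ell)\in\PP$. (Concretely, if $\PP$ is \textsc{Vertex Cover} parameterized by solution size, any graph paired with a parameter equal to its own order is trivially a yes-instance, so your output would always be ``yes''.) The smallness of the bikernel justifies the \emph{size} bound but cannot repair the \emph{equivalence}. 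The fix is standard: output the pair $(y,\ell)$ produced by the reduction itself. Under the usual convention that the parameter is encoded in unary in the derived classical problem, both $\Card{y}$ and $\ell$ are bounded by the length of the reduction's output, which is polynomial in $\Card{x'}+k'\le g(k)$ and hence in $k$; and if the reduction ever emits a malformed string, map it to a fixed trivial yes- or no-instance, which exist because $\PP$ is decidable and $\NP$-complete. With that correction the argument is complete.
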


Within this paper, we will also consider (and compare to) various structural parameters which have been used to obtain polynomial kernels. We provide a brief overview of these parameters below.

A \emph{modulator} of a graph $G$ to a graph class $\HH$ is a vertex set $X\subseteq V(G)$ such that $G-X\in \HH$. We denote the cardinality of a minimum modulator to $\HH$ in $G$ by $\md^\HH(G)$.
The \emph{vertex cover number} of a graph $G$ ($\vcn(G)$) is a special case of $\md^\HH(G)$, specifically for $\HH$ being the set of edgeless graphs. The vertex cover number has been used to obtain polynomial kernels for problems such as \textsc{Largest Induced Subgraph}~\cite{FominJansenP14} or \textsc{Long Cycle} along with other path and cycle problems~\cite{BodlaenderJansenKratsch13}. 
Similarly, a \emph{feedback vertex set} is a modulator to the class of acyclic graphs, and the size of a minimum feedback vertex set has been used to kernelize, for instance, \textsc{Treewidth}~\cite{BodlaenderJansenKratsch13b} or \textsc{Vertex Cover}~\cite{JansenBodlaender13}.

For the final considered parameter, we will need the notion of \emph{module}, which can be defined as a \sm{} with the restriction that every vertex in the \sm{} lies in its frontier. Then the \emph{rank-width$_c$ cover number}~\cite{GanianSlivovskySzeider13} of a graph $G$ ($\rwc_c(G)$) is the smallest number of \emph{modules} the vertex set of G can be partitioned into such that each module induces a subgraph of rank-width at most $c$. A wide range of problems, and in particular all $\MSO$-definable problems, have been shown admit linear kernels when parameterized by the rank-width$_c$ cover number~\cite{GanianSlivovskySzeider13}.


\subsection{Monadic Second Order Logic on Graphs}
\label{sub:mso}
We assume that we have an infinite supply of individual variables,
denoted by lowercase letters $x,y,z$, and an infinite supply of set
variables, denoted by uppercase letters $X,Y,Z$. \emph{Formulas} of
\emph{monadic second-order logic} (MSO) are constructed from atomic
formulas $E(x,y)$, $X(x)$, and $x = y$ using the connectives $\neg$
(negation), $\wedge$ (conjunction) and existential quantification
$\exists x$ over individual variables as well as existential
quantification $\exists X$ over set variables. Individual variables
range over vertices, and set variables range over sets of
vertices. The atomic formula $E(x,y)$ expresses adjacency, $x = y$
expresses equality, and $X(x)$ expresses that vertex $x$ in the set
$X$. From this, we define the semantics of monadic second-order logic
in the standard way (this logic is sometimes called $\MSO_1$).

\emph{Free and bound variables} of a formula are defined in the usual way. A
\emph{sentence} is a formula without free variables. We write $\varphi(X_1,
\dots, X_n)$ to indicate that the set of free variables of formula $\varphi$
is $\{X_1, \dots, X_n\}$. If $G = (V,E)$ is a graph and $S_1, \dots, S_n
\subseteq V$ we write $G \models \varphi(S_1, \dots, S_n)$ to denote that
$\varphi$ holds in $G$ if the variables $X_i$ are interpreted by the sets
$S_i$, for $i \in [n]$. The problem framework we are mainly interested in is formalized below.

\begin{quote}
  \textsc{MSO Model Checking} ($\MSOMC{\varphi}$)\\
  \nopagebreak \emph{Instance}: A graph $G$. \\ \nopagebreak 
  \emph{Question}: Does $G \models \varphi$ hold?
\end{quote}


While MSO model checking problems already capture many important graph
problems, there are some well-known problems on graphs
that cannot be captured in this way, such as \textsc{Vertex
  Cover}, \textsc{Dominating Set}, and \textsc{Clique}. Many such problems can be formulated in the form of \emph{MSO optimization problems}.  Let $\varphi=\varphi(X)$ be an MSO
formula with one free set variable~$X$ and $\gleq \in \{\leq,\geq\}$.

\begin{quote}
  $\MSOOPT{\gleq}{\varphi}$\\
  \nopagebreak \emph{Instance}: A graph $G$ and an integer~$r\in\Nat$. \\ \nopagebreak 
  \emph{Question}: Is there a set $S \subseteq V(G)$ such that $G \models \varphi(S)$ and $\Card{S} \gleq r$?
\end{quote}

It is known that MSO formulas can be checked efficiently as long as the graph has bounded rank-width.

\begin{fact}[\cite{GanianHlineny10}]\label{fact:msorankwidth}
  Let $\varphi$ and $\psi=\psi(X)$ be fixed \MSO formulas and let $c$ be a constant. Then $\MSOMC{\varphi}$ and $\MSOOPT{\gleq}{\varphi}$ can be solved in $\bigoh(n^3)$ time on the class of graphs of rank-width at most $c$, where $n$ is the order of the input graph. Moreover, if $G$ has rank-width at most $c$ and and $S\subseteq V(G)$, it is possible to check whether $G\models \psi(S)$ in $\bigoh(n^3)$ time.
\end{fact}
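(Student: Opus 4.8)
The plan is to reduce both model checking and optimization to a bottom-up computation over a bounded-width parse tree of the input graph. First I would invoke Fact~\ref{fact:rankdecomp} to compute, in $\bigoh(n^3)$ time, a rank-decomposition of $G$ of width at most $c$ (or to confirm $\rw(G)>c$, in which case $G$ is not in our class). Since bounded rank-width is equivalent to bounded clique-width, I would then convert this rank-decomposition into a clique-width expression using only $k$ labels, where $k$ depends solely on $c$ (concretely one may take $k\le 2^{c+1}-1$). This conversion runs in polynomial time and yields a labeled parse tree $\TT$ of size $\bigoh(n)$ whose internal nodes are the clique-width operations (disjoint union, label-join, relabel) and whose leaves correspond to the vertices of $G$.

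The core step is to compile the fixed formula into a finite tree automaton running bottom-up over $\TT$. For $\MSOMC{\varphi}$, standard Courcelle-style techniques translate $\varphi$ into a deterministic bottom-up tree automaton $\AA_\varphi$ over the alphabet of clique-width operations on $k$ labels; the number of states of $\AA_\varphi$ is bounded by a function of $\varphi$ and $k$, hence of $\varphi$ and $c$, and is independent of $n$. Running $\AA_\varphi$ on $\TT$ then decides $G\models\varphi$ in time linear in the size of $\TT$, i.e.\ $\bigoh(n)$, so the overall running time is dominated by the $\bigoh(n^3)$ decomposition step.

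To handle the optimization problem $\MSOOPT{\gleq}{\varphi}$ and the membership test for a given set, I would enrich the automaton. For the test whether $G\models\psi(S)$, I would encode $S$ as an additional unary vertex predicate, i.e.\ a coloring of the leaves of $\TT$, expand the alphabet accordingly, and run the automaton $\AA_\psi$ built for $\psi$ on the colored tree; this again takes $\bigoh(n)$ once $\TT$ is available. For $\MSOOPT{\gleq}{\varphi}$, I would attach to each state of $\AA_\varphi$ a cost recording the extremal size of a set interpreting the free variable $X$ that drives the run into that state, and propagate these costs by a bottom-up dynamic program over $\TT$: at a union node the two incoming costs add, at relabel and join nodes they pass through unchanged, and at the root one reads off the minimum or maximum $\Card{S}$ over accepting runs, which is then compared against $r$.

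The main obstacle is the construction and correctness of the automaton: one must show that the clique-width operations respect an MSO-type congruence of bounded (constant) index, so that finitely many states suffice uniformly in $n$, and that the cost annotation in the optimization variant is compatible with this congruence, so that the extremal value tracked per state is exactly the optimum restricted to runs reaching that state. Granting this, the rank-width-to-clique-width translation and the linear-time evaluation of $\AA_\varphi$ over $\TT$ are routine, and the claimed $\bigoh(n^3)$ bounds follow.
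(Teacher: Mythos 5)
The paper does not actually prove this statement: it is imported as a black-box Fact cited from \cite{GanianHlineny10}, so there is no in-paper argument to compare against. Your sketch is nonetheless a correct rendering of the standard proof, and it differs from the cited source mainly in one design choice. You route through clique-width, converting the width-$c$ rank-decomposition obtained from Fact~\ref{fact:rankdecomp} into a $k$-expression with $k\le 2^{c+1}-1$ and then running a Courcelle--Makowsky--Rotics-style bottom-up automaton, whereas Ganian and Hlin\v{e}n\'y deliberately avoid that detour and define labeling parse trees directly on the rank-decomposition, precisely to keep the number of labels (and hence the automaton state space) from incurring the extra exponential jump in $c$. For the statement as used in this paper---fixed $\varphi$, $\psi$ and constant $c$, with all hidden constants allowed to depend on both---the two routes are interchangeable: the $\bigoh(n^3)$ bound is dominated by the decomposition step either way, your cost-annotated automaton is the standard LinEMSO dynamic program for $\MSOOPT{\gleq}{\varphi}$, and encoding $S$ as a leaf coloring correctly handles the $G\models\psi(S)$ test. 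The one caveat worth recording is that the ``main obstacle'' you defer---finiteness of the MSO-type congruence under the composition operations and its compatibility with the cost annotation---is exactly the substance of the theorem being invoked, so your text is an accurate proof plan rather than a self-contained proof; since the paper's intent is to cite this result rather than re-derive it, that is appropriate here.
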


We review \MSO \emph{types} roughly following the presentation in
\cite{Libkin04}. The \emph{quantifier rank} of an \MSO formula $\varphi$ is
defined as the nesting depth of quantifiers in $\varphi$. For non\hy negative
integers $q$ and $l$, let $\MSO_{q,l}$ consist of all \MSO formulas of
quantifier rank at most $q$ with free set variables in $\{X_1, \dots, X_l\}$.

Let $\varphi = \varphi(X_1,\dots,X_l)$ and $\psi = \psi(X_1,\dots,X_l)$ be
\MSO formulas. We say $\varphi$ and $\psi$ are \emph{equivalent}, written $\varphi
\equiv \psi$, if for all graphs $G$ and $U_1, \dots, U_l \subseteq V(G)$, $G
\models \varphi(U_1,\dots, U_l)$ if and only if $G \models \psi(U_1,\dots,
U_l)$.  Given a set $F$ of formulas, let ${F/\mathord\equiv}$ denote the set
of equivalence classes of $F$ with respect to $\equiv$. A system of
representatives of $F/\mathord\equiv$ is a set $R \subseteq F$ such that $R
\cap C \neq \emptyset$ for each equivalence class $C \in F/\mathord\equiv$.
The following statement has a straightforward proof using normal forms (see
\cite[Proposition~7.5]{Libkin04} for details).
\begin{fact}[\cite{GanianSlivovskySzeider13}]
\label{fact:representatives}
  Let $q$ and $l$ be fixed non\hy negative integers. The set
  $\MSO_{q,l}/\mathord\equiv$ is finite, and one can compute a system of
  representatives of $\MSO_{q,l}/\mathord\equiv$.
\end{fact}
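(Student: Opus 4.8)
The plan is to prove a slightly stronger statement that is amenable to induction: for every non-negative integer $q$ and every finite set of variables $W = \{x_1, \dots, x_m, X_1, \dots, X_l\}$ (allowing both individual and set variables), the set of \MSO formulas of quantifier rank at most $q$ with free variables in $W$ is finite up to $\equiv$, and a finite set meeting every equivalence class can be computed. The Fact then follows by taking $m = 0$. A key simplification I would exploit is that a system of representatives is only required to \emph{meet} every class, not to contain exactly one formula per class; hence it suffices to produce \emph{any} finite set $R \subseteq \MSO_{q,l}$ such that every formula is equivalent to a member of $R$. In particular I will not need to decide $\equiv$, which sidesteps the undecidability of \MSO equivalence in general.

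I would argue by induction on $q$. For the base case $q = 0$, every formula is a Boolean combination of the atomic formulas over $W$, namely $E(x_i,x_j)$, $x_i = x_j$, and $X_k(x_i)$; there are only finitely many such atoms, say $a_1, \dots, a_p$. Any Boolean combination is determined up to $\equiv$ by the set of truth assignments to $a_1, \dots, a_p$ under which it holds, so there are at most $2^{2^p}$ classes, and a representative of each is obtained effectively by enumerating all such Boolean combinations.

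For the inductive step I would first put an arbitrary formula of quantifier rank at most $q$ into the normal form in which negations are pushed down to the atoms, so that it is built from atoms using $\wedge$, $\vee$ (by De Morgan), and quantifiers. Treating each maximal quantified subformula $\exists x\,\psi$ or $\exists X\,\psi$ as a single \emph{block}, the whole formula is a Boolean combination of such blocks together with atoms. Each block has quantifier rank at most $q$, so its body $\psi$ has quantifier rank at most $q-1$ and free variables in the enlarged set $W \cup \{x\}$ (resp. $W \cup \{X\}$). By the induction hypothesis there is a finite computable set of representatives for these bodies, and replacing each body by an equivalent representative yields only finitely many blocks up to $\equiv$, since prefixing a fixed quantifier to equivalent formulas preserves equivalence. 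Hence every rank-$q$ formula is, up to $\equiv$, a Boolean combination of finitely many blocks and atoms, and exactly as in the base case there are finitely many such combinations, all effectively enumerable.

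The constructive bound is then obtained by running this argument bottom-up: start from the finitely many atoms over the original variables together with at most $q$ fresh ones, and at each level form all single quantifications of the previously computed representatives, add the atoms, and close under Boolean combinations. The main thing to get right — and the only real content of the argument — is the normal-form step that exhibits a finite generating set of blocks at each level; once that is in place, both finiteness and computability follow from the elementary fact that Boolean combinations of a fixed finite set of formulas fall into finitely many equivalence classes.
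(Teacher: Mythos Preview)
Your proposal is correct and follows the standard normal-form argument; the paper does not supply its own proof but states the result as a cited fact, remarking that it ``has a straightforward proof using normal forms (see \cite[Proposition~7.5]{Libkin04} for details),'' which is precisely the route you take. One minor remark: the detour through pushing negations to the atoms is unnecessary, since already in the given syntax any rank-$q$ formula is a Boolean combination (via $\neg,\wedge$) of atoms and formulas $\exists z\,\psi$ with $\psi$ of rank $\le q-1$, which is all the induction needs.
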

We will assume that for any pair of non\hy negative integers $q$ and $l$ the
system of representatives of $\MSO_{q,l}/\mathord\equiv$ given by
Fact~\ref{fact:representatives} is fixed.

\newcommand{\introtypes}[0]{
\begin{definition}[\MSO Type]
  Let $q,l$ be non\hy negative integers. For a graph $G$ and an $l$\hy
  tuple $\vec{U}$ of sets of vertices of $G$, we define
  $\mathit{type}_q(G,\vec{U})$ as the set of formulas $\varphi \in
  \MSO_{q,l}$ such that $G \models \varphi(\vec{U})$. We call
  $\mathit{type}_q(G,\vec{U})$ the \MSO \emph{$q$-type of
    $\vec{U}$ in $G$}. 
\end{definition}
It follows from Fact~\ref{fact:representatives} that up to logical
equivalence, every type contains only finitely many formulas. The following Lemma~\ref{lem:typeformula} is obtained as an adaptation of a technical lemma from~\cite{GanianSlivovskySzeider13} to our setting, and allows us to represent types using \MSO formulas.

\begin{lemma}[see also~\cite{GanianSlivovskySzeider13}]
\label{lem:typeformula}
  Let $q$, $c$ and $l$ be non\hy negative integer constants, let $G$ be an $n$-vertex graph of rank-width at most $c$,
  and let $\vec{U}$ be an $l$\hy tuple of sets of vertices of $G$. One can
  compute a formula $\Phi \in \MSO_{q,l}$ such that for any graph
  $G'$ and any $l$\hy tuple $\vec{U}'$ of sets of vertices of $G'$ we have $G'
  \models \Phi(\vec{U}')$ if and only if $\mathit{type}_q(G,\vec{U}) =
  \mathit{type}_q(G',\vec{U}')$. Moreover, $\Phi$ can be computed in time $\bigoh(n^3)$.
\end{lemma}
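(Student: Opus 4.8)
The plan is to combine three ingredients already available: the finiteness of $\MSO_{q,l}/\mathord\equiv$ (Fact~\ref{fact:representatives}), the fact that on bounded-rank-width graphs one can decide $G\models\psi(S)$ for a fixed $\MSO$ formula $\psi$ in cubic time (Fact~\ref{fact:msorankwidth}), and the observation that a $q$-type is, up to equivalence, a finite object. First I would fix, once and for all, the system $R$ of representatives of $\MSO_{q,l}/\mathord\equiv$ guaranteed by Fact~\ref{fact:representatives}; by the remark following the definition of types, every $q$-type is determined by which of these finitely many representatives it contains. Thus $\mathit{type}_q(G,\vec U)$ is completely captured by the finite subset
$$P \;=\; \SB \psi\in R \SM G\models\psi(\vec U) \SE,$$
together with its complement $N = R\setminus P$ inside $R$.

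The core of the argument is then the construction of the target formula. I would set
$$\Phi \;=\; \Bigl(\bigwedge_{\psi\in P}\psi\Bigr)\wedge\Bigl(\bigwedge_{\psi\in N}\neg\psi\Bigr).$$
Since both $P$ and $N$ are finite and every conjunct lies in $\MSO_{q,l}$ (note $\neg\psi$ does not raise the quantifier rank), $\Phi$ itself lies in $\MSO_{q,l}$ up to logical equivalence, so it is a legitimate formula of the required quantifier rank and free variables. For correctness, suppose $G'\models\Phi(\vec U')$ for some graph $G'$ and $l$-tuple $\vec U'$. Then $G'$ satisfies exactly the representatives in $P$ and fails exactly those in $N$, so $\vec U'$ realizes the same subset of $R$ as $\vec U$ does in $G$; since the representatives exhaust all equivalence classes, this forces $\mathit{type}_q(G',\vec U')=\mathit{type}_q(G,\vec U)$. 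The converse is immediate: if the $q$-types agree, then $G'$ satisfies precisely the members of $P$ among $R$ and none of $N$, so $G'\models\Phi(\vec U')$.

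It remains to address the running time, which is where the rank-width hypothesis enters and which I expect to be the only genuinely delicate point. The set $R$ is fixed and finite, so computing $\Phi$ amounts to deciding, for each $\psi\in R$, whether $G\models\psi(\vec U)$ and sorting $\psi$ into $P$ or $N$ accordingly. Here I would invoke the final clause of Fact~\ref{fact:msorankwidth}: because $G$ has rank-width at most the constant $c$ and each $\psi$ is a fixed formula with free set variables interpreted by the components of $\vec U$, each such check runs in $\bigoh(n^3)$ time. As $|R|$ is a constant depending only on $q$ and $l$, the total work is a constant number of $\bigoh(n^3)$ tests, giving the claimed $\bigoh(n^3)$ bound; writing down the bounded-size conjunction $\Phi$ adds only constant overhead. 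The subtlety to flag is that Fact~\ref{fact:representatives} only asserts computability of $R$, with no explicit bound, but since $R$ is independent of the input graph it can be treated as a fixed precomputed object, so it contributes nothing to the per-instance complexity.
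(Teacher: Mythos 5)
Your proposal is correct and follows essentially the same route as the paper's own proof: the same formula $\Phi$ as a conjunction of the satisfied representatives and the negations of the unsatisfied ones, with Fact~\ref{fact:representatives} supplying the finite fixed set $R$ and Fact~\ref{fact:msorankwidth} giving the $\bigoh(n^3)$ bound for each of the constantly many model-checking calls. The only cosmetic difference is that the paper argues the converse direction by contradiction via a formula in the symmetric difference of the two types, whereas you argue directly that agreement on $R$ propagates to all of $\MSO_{q,l}$ by equivalence; these are the same observation.
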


\begin{proof}
  Let $R$ be a system of representatives of $\MSO_{q,l}/\mathord\equiv$ given
  by Fact~\ref{fact:representatives}. Because $q$ and $l$ are constants, we can
  consider both the cardinality of $R$ and the time required to compute it as
  constants. Let $\Phi \in \MSO_{q,l}$ be the formula defined as $\Phi =
  \bigwedge_{\varphi \in S} \varphi \wedge \bigwedge_{\varphi \in R \setminus
    S} \neg \varphi$, where $S = \SB \varphi \in R \SM G \models
  \varphi(\vec{U}) \SE$. We can compute $\Phi$ by deciding $G \models
  \varphi(\vec{U})$ for each $\varphi \in R$. Since the number of formulas in
  $R$ is a constant, this can be done in time $\bigoh(n^3)$ by applying Fact~\ref{fact:msorankwidth}.

  Let $G'$ be an arbitrary graph and let $\vec{U}'$ be an $l$\hy tuple of subsets of
  $V(G')$. We claim that $\mathit{type}_q(G, \vec{U}) = \mathit{type}_q(G',
  \vec{U'})$ if and only if $G' \models \Phi(\vec{U}')$. Since $\Phi \in
  \MSO_{q,l}$ the forward direction is trivial. For the converse, assume
  $\mathit{type}_q(G, \vec{U}) \neq \mathit{type}_q(G', \vec{U'})$. First
  suppose $\varphi \in \mathit{type}_q(G, \vec{U}) \setminus
  \mathit{type}_q(G', \vec{U'})$. The set $R$ is a system of representatives
  of $\MSO_{q,l}/\mathord\equiv$ , so there has to be a $\psi \in R$ such that
  $\psi \equiv \varphi$. But $G' \models \Phi(\vec{U}')$ implies $G' \models
  \psi(\vec{U}')$ by construction of $\Phi$ and thus $G' \models
  \varphi(\vec{U}')$, a contradiction. Now suppose $\varphi \in
  \mathit{type}_q(G', \vec{U}') \setminus \mathit{type}_q(G, \vec{U})$. An
  analogous argument proves that there has to be a $\psi \in R$ such that
  $\psi \equiv \varphi$ and $G' \models \neg \psi(\vec{U}')$. It follows that
  $G' \not \models \varphi(\vec{U}')$, which again yields a contradiction.
\end{proof}}
\lv{\introtypes} 

\newcommand{\MSOgames}[0]{
\begin{definition}[Partial isomorphism]\label{def:partialisomorphism}
  Let $G, G'$ be graphs, and let $\vec{V} = (V_1, \dots, V_l)$ and
  $\vec{U} = (U_1, \dots, U_l)$ be tuples of sets of vertices with
  $V_i \subseteq V(G)$ and $U_i \subseteq V(G')$ for each $i \in
  [l]$. Let $\vec{v} = (v_1, \dots, v_m)$ and $\vec{u} = (u_1, \dots,
  u_m)$ be tuples of vertices with $v_i \in V(G)$ and $u_i \in V(G')$
  for each $i \in [m]$. Then $(\vec{v}, \vec{u})$ defines a
  \emph{partial isomorphism between $(G, \vec{V})$ and $(G',
  \vec{U})$} if the following conditions hold:
  \begin{itemize}
    \item For every $i,j \in [m]$,
    \begin{align*}
      v_i = v_j \: \Leftrightarrow \: u_i = u_j \text{ and }
      v_iv_j \in E(G)\: \Leftrightarrow \: u_iu_j \in E(G').
    \end{align*}
    \item For every $i \in [m]$ and $j \in [l]$,
      \begin{align*}
        v_i \in V_j \: \Leftrightarrow u_i \in U_j.
      \end{align*}
    \end{itemize}
\end{definition}

\begin{definition}
  Let $G$ and $G'$ be graphs, and let $\vec{V_0}$ be a $k$\hy tuple of subsets
  of $V(G)$ and let $\vec{U_0}$ be a $k$\hy tuple of subsets of $V(G')$. Let
  $q$ be a non\hy negative integer. The \emph{$q$\hy round \MSO game on $G$
    and $G'$ starting from $(\vec{V_0}, \vec{U_0})$} is played as follows.
  The game proceeds in rounds, and each round consists of one of the following
  kinds of moves.
\begin{itemize}
  \item \textbf{Point move} The Spoiler picks a vertex in either $G$ or $G'$; the Duplicator responds by picking a vertex in the other graph.
  \item \textbf{Set move} The Spoiler picks a subset of $V(G)$ or a
    subset of $V(G')$; the Duplicator responds by picking a subset of the
    vertex set of the other graph.
  \end{itemize}
  Let $\vec{v}=(v_1,\dots,v_m), v_i \in V(G)$ and $
\vec{u}=(u_1,\dots,u_m), u_i \in V(G')$ be the point
  moves played in the $q$-round game, and let $\vec{V}=(V_1, \dots, V_l), V_i\subseteq V(G)$ and $\vec{U}=(U_1,\dots, U_l), U_i \subseteq V(G')$ be the set moves played in the
  $q$\hy round game, so that $l + m = q$ and moves belonging to same round
  have the same index. Then the Duplicator wins the game if $(\vec{v},
  \vec{u})$ is a partial isomorphism of $(G, \vec{V_0}\cup\vec{V})$ and $(G',
  \vec{U_0}\cup \vec{U})$. If the Duplicator has a winning strategy, we write $(G,
  \vec{V_0}) \equiv^{\MSO}_q (G', \vec{U_0})$.
\end{definition}

\begin{fact}[\cite{Libkin04}, Theorem 7.7]\label{thm:msogames} Given two graphs $G$ and $G'$ and two $l$\hy tuples $\vec{V_0}, \vec{U_0}$ of sets of vertices of $G$ and $G'$, we have \begin{center}
$    \mathit{type}_q(G, \vec{V_0}) = \mathit{type}_q(G, \vec{U_0}) \:
    \Leftrightarrow \: (G, \vec{V_0}) \equiv^{\MSO}_q (G', \vec{U_0}).$
\end{center}
\end{fact}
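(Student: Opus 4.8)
The plan is to prove the equivalence by induction on the number of rounds $q$, with the two implications handled in the inductive step and the engine of the argument being that every $q$-type is finitely axiomatizable: by Fact~\ref{fact:representatives} and Lemma~\ref{lem:typeformula}, for each $q$-type $\theta$ realized by some tuple there is a single formula $\psi_\theta \in \MSO_{q,l+1}$ that holds of a pair $(G,\vec{W})$ precisely when $\mathit{type}_q(G,\vec{W}) = \theta$. For the base case $q=0$ no moves are played, so the point-move tuples in Definition~\ref{def:partialisomorphism} are empty and both conditions hold vacuously; hence the Duplicator always wins the $0$-round game. On the logical side, $\MSO_{0,l}$ contains no atomic formula without a free individual variable, so its formulas collapse (up to equivalence) to Boolean constants and every pair realizes the same $0$-type. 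Thus both sides of the equivalence hold unconditionally and the base case is immediate.

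For the inductive step I would assume the statement for $q$ and prove it for $q+1$. \emph{Equal types imply the Duplicator wins:} I describe the Duplicator's reply to the Spoiler's first move. Suppose (without loss of generality) the Spoiler plays a set move $S\subseteq V(G)$, and let $\theta = \mathit{type}_q(G,(\vec{V_0},S))$ with defining formula $\psi_\theta$ as above. Then $(G,\vec{V_0})$ satisfies $\exists X_{l+1}\,\psi_\theta \in \MSO_{q+1,l}$; since the $(q+1)$-types of $(G,\vec{V_0})$ and $(G',\vec{U_0})$ coincide, $(G',\vec{U_0})$ satisfies it too, producing a set $S'\subseteq V(G')$ with $\mathit{type}_q(G',(\vec{U_0},S')) = \theta$. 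The Duplicator answers with $S'$, and by the induction hypothesis wins the remaining $q$-round game from $((\vec{V_0},S),(\vec{U_0},S'))$. Point moves are handled identically, replacing the set quantifier $\exists X_{l+1}$ by a point quantifier $\exists x$, and a Spoiler move in $G'$ is symmetric.

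\emph{The Duplicator winning implies equal types:} here I would show that every $\varphi \in \MSO_{q+1,l}$ with $(G,\vec{V_0}) \models \varphi$ also satisfies $(G',\vec{U_0}) \models \varphi$; since a winning strategy from $(\vec{V_0},\vec{U_0})$ is also one from $(\vec{U_0},\vec{V_0})$, this symmetry gives the reverse containment and dispatches the negation and conjunction cases by structural induction on $\varphi$. The one substantive case is $\varphi = \exists X_{l+1}\,\chi$ with $\chi \in \MSO_{q,l+1}$: a witnessing set $S$ for $(G,\vec{V_0}) \models \varphi$ is played by the Spoiler, the Duplicator's winning reply $S'$ yields $\mathit{type}_q(G,(\vec{V_0},S)) = \mathit{type}_q(G',(\vec{U_0},S'))$ by the induction hypothesis, and since $\chi$ belongs to this common type we obtain $(G',\vec{U_0}) \models \exists X_{l+1}\,\chi$; the point-quantifier case is analogous. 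The hard part is the forward direction, and specifically the passage from the Spoiler's arbitrary move $S$ to a concrete matching reply $S'$: this is exactly where finite axiomatizability of types (Lemma~\ref{lem:typeformula}) is indispensable, since without a single formula pinning down the \emph{entire} $q$-type of $(\vec{V_0},S)$ one could not transfer the existence of a suitable $S'$ to $G'$ from equality of $(q+1)$-types alone. The remaining care is purely bookkeeping: tracking which round contributes a set versus a point move and exploiting the symmetry of the game so that Spoiler moves in $G'$ and the Boolean cases need no separate treatment.
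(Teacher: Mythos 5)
The paper does not prove this statement at all: it is imported verbatim as a Fact with a citation to Libkin's textbook (Theorem 7.7), so there is no in-paper proof to compare against. Your argument is the standard Ehrenfeucht--Fra\"{\i}ss\'e back-and-forth proof of that theorem, and it is essentially correct: the finiteness of $\MSO_{q,l}/\mathord\equiv$ is indeed the engine that lets you convert the Spoiler's move $S$ into an existential formula $\exists X_{l+1}\,\psi_\theta$ of quantifier rank $q+1$ and pull a matching $S'$ out of $G'$. Two small repairs are needed for a clean write-up. First, the induction hypothesis must be stated for positions carrying both set \emph{and} point parameters, because after the first point move the remaining $q$-round game starts from such a mixed position; consequently the base case is not vacuous in general --- it asserts that the accumulated point tuples form a partial isomorphism if and only if they have the same atomic (quantifier-free) type, which is exactly the winning condition of the game. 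Your base case as written is only the special case where no points have yet been played, and "point moves are handled identically" quietly relies on this strengthened statement. Second, the existence of the type-defining formula $\psi_\theta$ should be attributed to Fact~\ref{fact:representatives} alone (take the conjunction of the representatives in $\theta$ and the negations of those outside it); Lemma~\ref{lem:typeformula} as stated carries a bounded-rank-width hypothesis that is neither available nor needed here, since that hypothesis only serves the polynomial-time computability claim.
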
}

\section{$(k,c)$-Well-Structured Modulators}\label{sec:kcwsm}

\begin{definition}
\label{def:wsm}
Let $\HH$ be a graph class and let $G$ be a graph. A set $\vec{X}$ of pairwise-disjoint split-modules of $G$ is called a $(k,c)$-\emph{\wsm}{} to $\HH$ if
\begin{enumerate}
\item $|\vec{X}|\leq k$, and
\item $\bigcup_{X_i\in \vec{X}} X_i$ is a modulator to $\HH$, and
\item $\rw(G[X_i])\leq c$ for each $X_i\in \vec{X}$.
\end{enumerate}
\end{definition}

For the sake of brevity and when clear from context, we will sometimes identify $\vec{X}$ with $\bigcup_{X_i\in\vec{X}} X_i$ (for instance $G-\vec{X}$ is shorthand for $G-\bigcup_{X_i\in\vec{X}} X_i$). 
To allow a concise description of our parameters, for any hereditary graph class $\HH$ we let the \emph{well-structure number} ($\wsn_c^{\HH}$ in short) denote the minimum $k$ such that $G$ has a $(k,c)$-{\wsm} to $\HH$. 

\begin{figure}[ht]
\centering
\begin{tikzpicture}[every node/.style={circle, fill=black, draw, scale=.3}, scale=0.5, rotate = 180, xscale = -0.8, yscale = 0.5]

\filldraw[fill opacity=0.7,fill=gray!20] (0,0) ellipse (3.7cm and 2.8cm);
\filldraw[fill opacity=0.7,fill=gray!20] (1.6,5) ellipse (2.2cm and 1.7cm);

 \foreach \x in {1,...,8}{%
   \pgfmathparse{(\x-1)*45+floor(\x/9)*45}
    \node (N-\x) at (\pgfmathresult:2.4cm) [thick] {};
 } 
 \foreach \x [count=\xi from 1] in {2,...,8}{%
    \foreach \y in {\x,...,8}{%
        \path (N-\xi) edge[-] (N-\y);
  }
 }

 \newlength{\gridsize}
\setlength{\gridsize}{0.3cm}

\node[above right=0.2cm and 1cm of N-1] (1)  {};
\node[right=\gridsize of 1] (2) {};
\node[below=\gridsize of 1] (3) {};
\node[below right=\gridsize and \gridsize of 2] (13) {};
\node[right=\gridsize and \gridsize of 2] (14) {};

\node[below=0.3cm of 3] (4) {};
\node[below=\gridsize of 4] (5) {};
\node[below right=\gridsize and \gridsize of 4] (6) {};
\node[above= \gridsize of 6] (15) {};
\node[above right=\gridsize and \gridsize of 6] (16) {};
\node[right=\gridsize of 6] (17) {};

\node[above right=0.5cm and 1cm of N-1] (7)  {};
\node[right=\gridsize of 7] (8) {};
\node[above=\gridsize of 7] (9) {};
\node[right=\gridsize of 8] (18) {};

\node[below=0.3cm of 5] (10) {};
\node[below=\gridsize of 10] (11) {};
\node[below right=\gridsize and \gridsize of 10] (12) {};
\node[above= \gridsize of 12] (19) {};
\node[above right=\gridsize and \gridsize of 12] (20) {};
\node[right=\gridsize of 12] (21) {};

\node[below left=0.45cm and 0.7cm of 5] (a1) {};
\node[below left=0.45cm and 0.7cm of 4] (a2) {};
\node[left=\gridsize of a1] (a3) {};
\node[left=\gridsize of a3] (a4) {};
\node[left=\gridsize of a2] (a5) {};
\node[left=\gridsize of a4] (a6) {};
\node[above=\gridsize of a4] (a7) {};

\draw (N-1)--(1)--(N-2);
\draw (N-2)--(3)--(N-1);
\draw (N-1)--(4)--(N-2);
\draw (N-1)--(7)--(N-2);

\draw (a3)--(a2)--(a1)--(a3)--(a4)--(a5)--(a3);
\draw (a4)--(a6)--(a7)--(a4);
\draw (a1)--(4)--(a2);
\draw (a1)--(5)--(a2);
\draw (a1)--(11)--(a2);

\draw (7)--(8)--(9);
\draw (2)--(1)--(3);
\draw (4)--(5)--(6);
\draw (10)--(11)--(12);
\draw (14)--(2)--(13);
\draw (15)--(6)--(16)--(6)--(17);
\draw (8)--(18);
\draw (13)--(16);
\draw (19)--(12)--(20)--(12)--(21);

\end{tikzpicture}
\caption{A graph with a $(2,1)$-\wsm{} to forests (in the two shaded areas).}
\label{fig:wsm}
\end{figure}

We conclude this section with a brief discussion on the choice of the parameter. The specific conditions restricting the contents of the modulator $\bigcup \vec{X}$ have been chosen as the most general means which allow both (1) the efficient finding of a suitable \wsm{}, and (2) the efficient use of this \wsm{} for kernelization. In this sense, we do not claim that there is anything inherently special about rank-width or split modules, other than being the most general notions which are currently known to allow the achievement of these two goals. 

In some of the applications of our results, we will consider graphs which have bounded expansion or bounded degree. We remark that in these cases, our results could equivalently be stated in terms of treewidth (instead of rank-width) and $\MSO_2$ logic (instead of $\MSO_1$ logic).\sv{~$(\star)$}
\lv{Details follow.}

\newcommand{\sparsity}[0]{
We say that a class $\HH$ of graphs is \emph{uniformly $k$-sparse} 
if there exists $k$ such that for every $G\in\HH$ every finite subgraph of $G$ has a number of edges bounded by $k$ times the number of vertices. 

\begin{fact}[\cite{Courcelle03}]\label{fact:MSOcolaps}
 For each integer $k$, one can effectively transform a given monadic
second-order formula using edge set quantifications into one that uses only vertex set
quantifications and is equivalent to the given one on finite, uniformly
$k$-sparse, simple, directed or undirected graphs. 
\end{fact}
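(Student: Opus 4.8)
Fact~\ref{fact:MSOcolaps} is a classical result of Courcelle, so I would not reprove it; the plan below reconstructs the standard argument. Write $\mathrm{MSO}_2$ for monadic second-order logic with quantification over both vertex sets and edge sets, and $\mathrm{MSO}_1$ for the vertex-set-only fragment used throughout this paper. The goal is, for each fixed $k$, a computable map $\varphi \mapsto \varphi^\ast$ sending an $\mathrm{MSO}_2$ formula $\varphi$ to an $\mathrm{MSO}_1$ formula $\varphi^\ast$ equivalent to $\varphi$ on every uniformly $k$-sparse graph. The plan is to simulate a single edge-set quantifier $\exists F$ together with the incidence atom ``vertex $x$ is an endpoint of edge $e$'', since all other connectives and the vertex-set quantifiers carry over unchanged; the translation is then applied inductively to the whole formula.

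The combinatorial backbone is that uniform $k$-sparsity is exactly bounded degeneracy: since every finite subgraph has at most $k$ times as many edges as vertices, the sum of degrees in every subgraph is at most $2k$ times the number of vertices, so every subgraph has a vertex of degree at most $2k$, and hence $G$ admits an orientation in which every vertex has out-degree at most $d := 2k$ (equivalently, bounded arboricity via Nash--Williams). I would fix such an orientation and order the at most $d$ out-arcs of each vertex; this splits the arc set into $d$ partial functions $f_1,\dots,f_d$, and within each $f_j$ the arcs are in bijection with their tails. Consequently every edge can be \emph{named by a vertex} (its tail) together with bounded colour information identifying which out-arc it is, and an arbitrary edge set $F$ is encoded by a bounded tuple of vertex sets $(S_1,\dots,S_N)$, where membership of a tail in $S_i$ records that the corresponding out-arc lies in $F$. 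To make the ``which out-arc'' choice definable, I would existentially guess a vertex colouring $\gamma$ with $N=O(d^2)$ colours under which the out-neighbours of every vertex receive pairwise distinct colours; such a $\gamma$ exists because the auxiliary graph joining two vertices that share a common in-neighbour has at most $\binom{d}{2}$ times as many edges as vertices, hence is again uniformly $O(d^2)$-sparse and properly colourable with $O(d^2)$ colours.

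Given this apparatus the translation becomes mechanical: $\exists F\,\psi$ is replaced by $\exists S_1\cdots\exists S_N\,\psi^\ast$ over vertex sets, the incidence atom is rewritten using the $\mathrm{MSO}_1$-definable predicate ``$w$ is the out-neighbour of $v$ of colour $i$'', and one verifies by induction that $G\models\varphi$ iff $G\models\varphi^\ast$ on uniformly $k$-sparse $G$. The forward direction uses that the required orientation and colouring \emph{exist}; the backward direction uses that any guessed valid apparatus genuinely encodes some edge set. The extension to directed graphs and the effectiveness of $\varphi\mapsto\varphi^\ast$ for each fixed $k$ are then immediate from the construction.

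The hard part, and the only genuinely delicate point, is that an orientation is \emph{a priori an edge-indexed object} and therefore not directly available to $\mathrm{MSO}_1$: the encoding above presupposes the bounded-out-degree orientation that we are trying to eliminate. Resolving this is the technical heart of the argument — one must guess the orientation together with $\gamma$ using only monadic (vertex-set) parameters, express its validity (each edge is an out-arc of exactly one endpoint, and out-degrees stay bounded) in $\mathrm{MSO}_1$, and prove that uniform $k$-sparsity guarantees such a valid guess always exists. This is precisely where the sparsity hypothesis is essential, and it is exactly what makes the statement fail on dense graphs; the full details are carried out in Courcelle's work.
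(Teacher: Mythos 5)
First, a point of reference: the paper does not prove Fact~\ref{fact:MSOcolaps} at all --- it is imported verbatim from Courcelle's work and used as a black box --- so there is no in-paper argument to compare yours against. Judged on its own terms, your overall strategy is the standard one: uniform $k$-sparsity gives $2k$-degeneracy and hence an orientation of out-degree at most $d=2k$, an edge set is encoded by boundedly many sets of tail vertices, and the whole apparatus is guessed existentially; you also correctly identify that the delicate point is making the orientation itself accessible to $\MSO_1$.

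However, one step that you present as finished, with a justification, is false. You claim that the auxiliary graph joining two vertices that share a common in-neighbour is ``again uniformly $O(d^2)$-sparse'' and hence properly colourable with $O(d^2)$ colours, which is what you need so that the out-neighbours of every vertex receive pairwise distinct colours. Uniform sparsity is not inherited here, because an edge of the auxiliary graph induced on a set $S$ may be witnessed by an in-neighbour lying outside $S$. Concretely, take an independent set $a_1,\dots,a_s$ and, for each pair $\{i,j\}$, a fresh vertex $v_{ij}$ adjacent exactly to $a_i$ and $a_j$. This graph is uniformly $2$-sparse, and the degeneracy orientation sends each $v_{ij}$ to both $a_i$ and $a_j$; the auxiliary graph restricted to $\{a_1,\dots,a_s\}$ is then the complete graph $K_s$, so no colouring with a bounded number of colours makes the out-neighbours of every vertex pairwise distinct (and redistributing at most $d$ out-arcs to each $a_i$ cannot destroy more than $sd$ of the $\binom{s}{2}$ auxiliary edges, so no bounded-out-degree orientation escapes this). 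Consequently the mechanism you propose for disambiguating \emph{which} out-arc of a tail vertex a set $S_i$ refers to does not exist, and your encoding of an edge set by $(S_1,\dots,S_N)$ is not well defined. The disambiguation has to be done differently --- e.g., via a Nash--Williams-style partition of the edges into boundedly many forests together with guessed root sets, relative to which the parent relation (and hence ``the edge below $v$'') is $\MSO_1$-definable --- and encoding that partition by vertex sets is exactly the technical heart you defer to Courcelle. Deferring it is fine; resting an intermediate step on a false combinatorial lemma is the gap.
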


\begin{fact}[\cite{Courcelle03}]\label{fact:twcwcolaps}
 A class of finite, uniformly $k$-sparse, simple, directed or undirected graphs
has bounded tree-width if and only if it has bounded clique-width.
\end{fact}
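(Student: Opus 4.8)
The statement is an equivalence, and I would prove the two implications by quite different means, with only the backward implication relying on the sparsity hypothesis.

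For the forward direction, that bounded tree-width implies bounded clique-width, I would use the fact that $\mathrm{cw}(G)\le 3\cdot 2^{\tw(G)-1}$ holds for \emph{every} graph, with no sparsity assumption. Concretely, given a tree decomposition of width $w$, one processes its tree bottom-up and maintains a clique-width expression in which labels record, for each vertex still relevant, its position within the current bag; since bags have at most $w+1$ vertices, $\bigoh(2^{w})$ labels suffice, and each edge is added by a single join operation at the moment its later endpoint is introduced. For directed inputs the same construction goes through using the directed clique-width operations and the tree-width of the underlying undirected graph, so this direction needs no hypothesis on the class.

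The backward direction, that bounded clique-width together with uniform $k$-sparsity implies bounded tree-width, is where the hypothesis is indispensable. I would first observe that sparsity forbids large complete bipartite subgraphs: a copy of $K_{t,t}$ has $2t$ vertices and $t^2$ edges, so if it occurred inside a uniformly $k$-sparse graph we would need $t^2\le 2kt$, hence $t\le 2k$; thus every member of the class is $K_{t,t}$-free for $t=2k+1$ (for digraphs one applies this to the underlying undirected graph). This reduces the claim to the purely structural statement that $K_{t,t}$-free graphs of clique-width at most $c$ have tree-width bounded by a function of $c$ and $t$ alone.

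The main obstacle is exactly this last implication, which is the heart of the matter. Starting from a $c$-expression I would follow its syntax tree: each node $x$ defines an induced subgraph $G_x$ together with a labelling into $[c]$, and the crucial point is that the only edge-creating operation, the ``join all label-$i$ to all label-$j$'' operation, acts solely through labels. Consequently any two vertices sharing a label at node $x$ are \emph{external twins}, having identical neighbourhoods in $V(G)\setminus V(G_x)$, so the vertices of each of the $\le c$ label classes are joined completely to a common external set. Here $K_{t,t}$-freeness bites: for every label class $L$ either $|L|<t$ or its common external neighbourhood has fewer than $t$ vertices. Taking, at each node, the union of the small label classes together with the external neighbourhoods of the large ones yields a separation between $V(G_x)$ and the rest of order less than $2ct$; assembling these nested separations along the syntax tree by the standard separator-to-tree-decomposition correspondence produces a tree decomposition of width $\bigoh(ct)$. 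The delicate bookkeeping to get right is precisely this twin/dichotomy argument: once a label class grows past $t$ it can only ever be joined to small external sets without creating a forbidden $K_{t,t}$, and it is this that keeps every separator bounded throughout the expression.
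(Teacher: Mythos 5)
This statement is a \emph{Fact} that the paper imports from \cite{Courcelle03} without giving any proof of its own, so there is no in-paper argument to compare against; the fair comparison is with the literature proof that the citation stands on, and your proposal reconstructs essentially that route. The forward direction via $\mathrm{cw}(G)\le 3\cdot 2^{\tw(G)-1}$ is the theorem of Corneil and Rotics (any exponential bound suffices here, since only boundedness is claimed), and your backward direction — uniform $k$-sparsity forbids $K_{t,t}$ for $t=2k+1$ by the count $t^2\le 2kt$, after which one invokes that $K_{t,t}$-subgraph-free graphs of clique-width $c$ have treewidth bounded in $c$ and $t$ — is exactly the theorem of Gurski and Wanke that Courcelle's proof relies on. Your sketch of that last theorem gets the two key ideas right: same-labelled vertices at a node of the expression are external twins, and $K_{t,t}$-freeness forces each label class to be small or to have a small common external neighbourhood, yielding a separator of size $<2ct$ for every cut $(V(G_x),\,V\setminus V(G_x))$ of the syntax tree.

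One caution: the step you dismiss as ``the standard separator-to-tree-decomposition correspondence'' is the one place where a real argument is still owed, and the naive construction (bags formed as unions of the separators at a node and its children) does \emph{not} satisfy the edge condition. A vertex $u$ whose label class has grown past $t$ drops out of every separator above that point, yet $u$ can still acquire edges at later joins; one can arrange that $u$ and such a later neighbour $v$ never share a bag under the naive assignment. Two standard repairs exist: either follow Gurski and Wanke's explicit induction over the expression, which maintains bounded-size representative sets of each label class inside the current bags and gives $\tw(G)\le 3c(t-1)-1$; or argue non-constructively that since every cut of the syntax tree has a separator of size $<2ct$, the graph has no well-linked set of size $\ge 6ct$ (a balanced edge of the tree splits any candidate set), and treewidth is bounded by a constant multiple of the largest well-linked set. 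With either patch your argument is complete, and it matches the proof underlying the paper's citation rather than diverging from it.
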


For definitions of \emph{Shallow minor}, \emph{Greatest reduced average density}, and \emph{bounded expansion} we refer to Definition 2.1, Definition 2.5, and Definition 2.6  in \cite{GajarskyHlinenyObdrzalek13}, respectively. 

\begin{observation}\label{obs:beissparse}
 Every class of graphs of bounded expansion is uniformly $k$-sparse for some positive integer constant $k$.
\end{observation}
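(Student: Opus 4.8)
The plan is to observe that uniform sparsity is precisely the depth-$0$ instance of the bounded-expansion condition, so the statement follows by unwinding the definitions referenced from~\cite{GajarskyHlinenyObdrzalek13}. Recall that a class $\HH$ has bounded expansion if there is a function $f:\Nat\to\Nat$ such that for every $G\in\HH$ and every depth $r\geq 0$, the greatest reduced average density $\nabla_r(G)$ at rank $r$ satisfies $\nabla_r(G)\leq f(r)$, where $\nabla_r(G)$ is the supremum of $|E(M)|/|V(M)|$ taken over all finite shallow minors $M$ of $G$ at depth $r$.

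First I would specialize the bounded-expansion condition to $r=0$. The key observation is that a shallow minor at depth $0$ is obtained by contracting connected subgraphs of radius $0$ --- that is, single vertices --- and then possibly deleting vertices and edges; hence a depth-$0$ shallow minor of $G$ is exactly a subgraph of $G$, and conversely. Consequently $\nabla_0(G)$ is precisely the supremum of $|E(H)|/|V(H)|$ over all finite subgraphs $H$ of $G$.

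Combining the two facts, for every $G\in\HH$ and every finite subgraph $H$ of $G$ with $V(H)\neq\emptyset$ we obtain $|E(H)|\leq \nabla_0(G)\cdot |V(H)|\leq f(0)\cdot |V(H)|$. Since $f$ is fixed for the whole class, $f(0)$ is a constant independent of $G$ and $H$, so setting $k=\max\{1,\lceil f(0)\rceil\}$ yields a positive integer constant satisfying $|E(H)|\leq k\cdot|V(H)|$ for every finite subgraph $H$ of every $G\in\HH$. This is exactly the definition of $\HH$ being uniformly $k$-sparse.

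There is no genuine obstacle here: the statement is an immediate consequence of the definitions once one notes that depth-$0$ shallow minors coincide with subgraphs. The only point requiring minimal care is ensuring that $k$ is a \emph{positive integer}, which is handled by taking the ceiling of $f(0)$ and replacing it by $1$ in the degenerate case $f(0)=0$.
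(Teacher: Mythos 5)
Your proof is correct and follows essentially the same route as the paper's: both specialize the bounded-expansion condition to depth $0$, identify depth-$0$ shallow minors with subgraphs, and conclude that $f(0)$ bounds the edge-to-vertex ratio of every finite subgraph. Your version is slightly more careful in rounding $f(0)$ up to a positive integer, a detail the paper's one-line proof glosses over.
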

\begin{proof}
 Let $\HH$ be a class of graphs of bounded expansion and let $f$ be the expansion function of $\HH$. Then $f(0)$ or equivalently the greatest reduced average density of $\HH$ with rank $0$ is constant and is an exact upper bound on the ratio between the number of edges and vertices of any subgraph of a graph in $\HH$. Therefore, $\HH$ is uniformly $f(0)$-sparse.
\end{proof}
}
\lv{\sparsity}

\section{A Case Study: Vertex Cover}
\label{sec:vc}

In this section we show how \wsm s to edgeless graphs can be used to obtain polynomial kernels for various problems. In particular, this special case can be viewed as a generalization of the vertex cover number. We begin by comparing the resulting parameter to known structural parameters.
Let $c\in \Nat$ be fixed and $\EE$ denote the class of edgeless graphs. The class $\ZZ$ containing only the empty graph will also be of importance later on in the section; we remark that while $\md^\ZZ$ represents a very weak parameter as it is equal to the order of the graph, this is not the case for $\wsn_c^\ZZ$. We begin by comparing \wsm s to edgeless graphs with similar parameters used in kernelization.

\lv{\begin{proposition}}
\sv{\begin{proposition}[$\star$]}
\label{prop:betterVC}
Let $\EE$ be graph class of edgeless graphs. Then:
\begin{enumerate}
\item $\rwc_c(G)\geq \wsn_c^\EE(G)$ for any graph $G$. Furthermore, for every $i\in \Nat$ there exists a graph $G_i$ such that $\rwc_c(G_i)\geq 2i$ and $\wsn_{c}^\EE=2$.
\item $\vcn(G)\geq \wsn_{1}^\EE(G)$ for any graph $G$. Furthermore, for every $i\in \Nat$ there exists a graph $G_i$ such that $\text{vcn}(G)\geq i$ and $\wsn_{1}^\EE=1$.
\end{enumerate}
\end{proposition}

\newcommand{\pfbetterVC}[0]{
\begin{proof}
The first claim follows from the fact that rank-width cover is also a \wsm{} to the empty graph. 
For the second claim, let $G'_c$ be a graph of rank-width $c+1$, of bounded degree and of order at least $i$ containing at least one vertex, say $v$, such that $G'_c-v$ has rank-width $c$. Next, we construct the graph $G_c$ from $G'_c-v$ by exhaustively applying the following operation: for each module in the graph containing more than a single vertex, we create a new pendant and attach it to a single vertex in that module. Observe that this operation preserves the rank-width of the graph, and moreover the resulting graph only contains trivial modules (i.e., modules which contain a single vertex). Finally, let $G^*_c$ be obtained from $2$ disjoint copies of $G_c$, say $G^1_c$ and $G^2_c$, and making the vertices which were adjacent to $v$ in $G^1_c$ adjacent to the vertices which were adjacent to $v$ in $G^2_c$. Then $\wsn_c^\EE(G^*_c)=2$, since $G^1_c$ and $G^2_c$ are each a \sm{} of rank-width at most $c$. However, since $G^*_c$ is a (vertex-)supergraph of $G'_c$, it follows that $\rw(G^*_c)\geq c+1$ and furthermore $G^*_c$ only contains trivial modules. Hence $\rwc_c(G^*_c)\geq 2i$.

The third claim follows from the fact that any vertex cover of $G$ is also a well-structured modulator to $\EE$. Finally, consider a path $P$ of length $2i+1$. Then $\vcn(P)\geq i$ but $\wsn^\EE_1(P)=1$.
\end{proof}}
\lv{\pfbetterVC}

It will be useful to observe that the above Proposition~\ref{prop:betterVC} also holds when restricted to the class of graphs of bounded expansion and bounded degree, and even when the graph class $\EE$ is replaced by $\ZZ$.

As we have established that already $\wsn^\EE_1\leq \vcn(G)$, it is important to mention that an additional structural restriction on the graph is necessary to allow the polynomial kernelization of $\MSOOPT{}{}$ problems in general (as is made explicit in the following Fact~\ref{fact:nokernel}).

\begin{fact}[\cite{BodlaenderJansenKratsch14}]
\label{fact:nokernel}
	\textsc{Clique} parameterized by the vertex cover number does not admit a polynomial kernel, unless \NP{} $\subseteq$ \text{\normalfont coNP/poly}. 
\end{fact}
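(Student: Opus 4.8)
The plan is to prove this lower bound via \emph{cross-composition}, which is exactly the framework underlying the cited source~\cite{BodlaenderJansenKratsch14}. Recall that to rule out a polynomial kernel for a parameterized problem $Q$ (here \textsc{Clique} parameterized by $\vcn$) under the assumption $\NP \not\subseteq \text{\normalfont coNP/poly}$, it suffices to exhibit an OR-cross-composition: an algorithm that, given $t$ instances of some \NP-hard problem that are equivalent under a polynomial equivalence relation, produces in polynomial time a single instance of $Q$ which is a yes-instance if and only if at least one input instance is, and whose parameter is bounded by a polynomial in $\max_i |x_i| + \log t$. I would cross-compose \textsc{Clique} into itself. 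As the polynomial equivalence relation I would declare all malformed strings equivalent and group well-formed instances $(G,k)$ by the pair $(|V(G)|, k)$; it then suffices to compose $t$ instances $(G_1,k),\dots,(G_t,k)$ whose graphs all live on a common vertex set $[n]$.

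Next I would build the composed graph $H$. The two design goals in tension are (i)~that a single clique of $H$ should be able to \emph{select} one instance $G_i$ and verify a $k$-clique inside it, and (ii)~that the vertex cover number of $H$ must not grow with $t$. To meet (ii), the instance-selecting vertices $w_1,\dots,w_t$ (one per instance) form an \emph{independent set}, so they contribute nothing to a vertex cover and a clique can use at most one of them. The remaining, shared vertices are: \emph{slot} vertices $c^j_a$ for $j\in[k]$, $a\in[n]$ (encoding ``label $a$ sits in clique-slot $j$''), pairwise adjacent across distinct slots and non-adjacent within a slot; and \emph{edge} vertices $e^{j,j'}_{a,a'}$ for $j<j'$ and $a,a'\in[n]$, pinned to the slot vertices by making $e^{j,j'}_{a,a'}$ adjacent to $c^j_b$ only for $b=a$ and to $c^{j'}_b$ only for $b=a'$, and mutually adjacent only when they agree on every shared slot. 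Finally $w_i$ is made adjacent to every slot vertex and to $e^{j,j'}_{a,a'}$ exactly when $\{a,a'\}\in E(G_i)$. Setting the target clique size to $1+k+\binom{k}{2}$, a short counting-and-pinning argument shows that any clique of this size must consist of one selector $w_i$, one slot vertex per slot spelling out a set $A\subseteq[n]$ of size $k$, and one consistent edge vertex per slot-pair; adjacency of these edge vertices to $w_i$ then forces every pair of $A$ to be an edge of $G_i$, i.e.\ $A$ is a $k$-clique of $G_i$. Conversely, any $k$-clique of any $G_i$ yields such a clique in $H$.

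It remains to bound the parameter and conclude. Since the $w_i$ are pairwise non-adjacent, every edge of $H$ has an endpoint among the shared vertices, so the set of all slot and edge vertices is a vertex cover of size $kn + \binom{k}{2}n^2 = \bigoh(n^4)$; crucially this is polynomial in $n$ and independent of $t$, hence polynomial in $\max_i|x_i| + \log t$ as required. As \textsc{Clique} is \NP-complete and lies in \NP, the cross-composition theorem then yields that a polynomial kernel for \textsc{Clique} parameterized by $\vcn$ would place $\NP \subseteq \text{\normalfont coNP/poly}$. The main obstacle, and the only genuinely delicate point, is reconciling the two goals above: the edge relation of each $G_i$ is \emph{binary}, yet it must be checked through a single selector vertex whose adjacencies are unary. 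The device that resolves this is the pinned edge-vertex layer, which turns each binary edge-check into a unary incidence between $w_i$ and an edge vertex, while the slot/pinning structure (not $w_i$) enforces that the chosen edge vertices really correspond to the chosen $k$-set; verifying that this pinning admits no ``cheating'' clique on a no-instance is the part that needs the careful counting sketched above.
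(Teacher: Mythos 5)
The paper does not actually prove this statement; it is imported as a black-box Fact from the cited reference, whose own argument is precisely the OR-cross-composition of \textsc{Clique} into itself (grouped by $(n,k)$, with an independent set of instance selectors and a shared selection/verification gadget of size polynomial in $n$) that you describe. Your construction and counting are correct and match that standard approach, up to one detail left implicit that is needed for the yes-case: each edge vertex $e^{j,j'}_{a,a'}$ must also be made adjacent to \emph{all} slot vertices $c^{j''}_{b}$ with $j''\notin\{j,j'\}$, as otherwise the intended clique of size $1+k+\binom{k}{2}$ cannot be assembled at all.
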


However, it turns out that restricting the inputs to graphs of bounded expansion completely changes the situation: under this condition, it is not only the case that all all $\MSOMC{}$ and $\MSOOPT{}{}$ problems admit a linear kernel when parameterized by the vertex cover number, but also when parameterized by the more general parameter $\wsn_c^\EE$. To prove these claims, we begin by stating the following result.

\begin{fact}[\cite{GajarskyHlinenyObdrzalek13}]\label{fact:boundedexp}
Let $\KK$ be a graph class with bounded expansion. Suppose that for $G\in \KK$ and $S\in V(G)$, $\CC_1,\dots,\CC_s$ are sets of connected components of $G-S$ 
such that for all pairs $C, C'\in \cup_i\CC_i$ it holds that $C,C'\in \CC_j$ for some $j$ if and only if $N_S(C)=N_S(C')$. Let $\delta \ge 0$ be a constant bound on the diameter of these components, i.e., for all $C\in \cup_i\CC_i$, diam$(G[V(C)])\le \delta$.
Then there can be only at most $\bigoh(|S|)$ such sets $\CC_i$.
\end{fact}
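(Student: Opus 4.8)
The plan is to reduce the statement to a bound on the number of \emph{distinct neighborhoods} realised by single vertices in a suitable shallow minor of $G$, and then to invoke the fact that bounded-expansion classes have linear neighborhood complexity. For each class $\CC_i$ I would fix one representative component $C_i \in \CC_i$ and record its neighborhood $N_i := N_S(C_i) \subseteq S$. By the defining property of the classes these sets are pairwise distinct, so $s$ equals the number of distinct sets among $N_1,\dots,N_s$, and it suffices to bound this number by $\bigoh(|S|)$.

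First I would form the graph $H$ from $G[S \cup \bigcup_i V(C_i)]$ by contracting each $C_i$ into a single vertex $c_i$. Because every $C_i$ is connected with $\mathrm{diam}(G[V(C_i)]) \le \delta$, each branch set has radius at most $\delta$, so $H$ is a depth-$\delta$ minor of $G$. As $\KK$ has bounded expansion, the class of depth-$\delta$ minors of $\KK$-graphs has bounded greatest reduced average density; hence $|E(H)| \le d\,(|S| + s)$ for a constant $d = d(\KK,\delta)$, since $|V(H)| = |S| + s$, and the same density bound holds for all shallow minors of $H$. In $H$ each $c_i$ is adjacent in $S$ to exactly the set $N_i$, so the question becomes: how many vertices with pairwise-distinct neighborhoods into $S$ can $H$ contain? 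I would split by neighborhood size. There is at most one class with $N_i = \emptyset$ and at most $|S|$ classes with $|N_i| = 1$, contributing $\bigoh(|S|)$ in total.

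The classes with $|N_i| \ge 2$ are the crux. A naive count via $2\,s_{\ge 2} \le \sum_i |N_i| \le |E(H)| \le d(|S|+s)$ is circular once $d \ge 2$, because $s_{\ge 2}$ reappears in the vertex count of $H$; a single density inequality therefore does not suffice. To push the argument through one needs the full neighborhood-complexity bound for bounded-expansion classes: the number of distinct sets $N(v) \cap S$ over $v \in V(H)$ is $\bigoh(|S|)$, a known characterisation equivalent to bounded expansion whose proof proceeds by iterated transitive and fraternal augmentation (equivalently, by bounded weak colouring numbers) rather than by counting edges in a single minor. Applying this bound to $H$ gives $s = \bigoh(|S|)$, as required. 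The main obstacle is exactly this last step: the bounded-diameter hypothesis is used \emph{only} to legitimise the contraction into a shallow minor so that density stays controlled, but converting ``bounded density of all shallow minors'' into a \emph{linear} bound on the number of distinct neighborhoods is the genuinely nontrivial part and is where the augmentation machinery of bounded expansion is essential; the preceding reduction is routine.
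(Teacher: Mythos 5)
The paper does not prove this statement at all: it is imported verbatim as a Fact from \cite{GajarskyHlinenyObdrzalek13}, so there is no internal argument to compare your attempt against; I can only judge it on its own and against the cited source. On that basis your reduction is sound and in fact mirrors the route taken in \cite{GajarskyHlinenyObdrzalek13}: pick one representative component per class, contract each connected, diameter-$\delta$ (hence radius-$\delta$) representative into a single vertex to obtain a depth-$\delta$ minor $H$, dispose of the classes with $|N_i|\le 1$ separately, and reduce the rest to counting distinct vertex traces $N(v)\cap S$ in $H$. Your observation that the single density inequality $2s_{\ge 2}\le d(|S|+s_{\ge 2})$ is vacuous once $d\ge 2$ correctly pinpoints why this is not a one-line edge count, and the step you leave as a black box --- that a graph from a bounded-expansion class realises only $\bigoh(|S|)$ distinct neighbourhood traces on $S$ --- is a genuine known theorem (linear neighbourhood complexity of bounded-expansion classes), which is exactly the single-vertex counting lemma the cited paper proves and then applies after the same contraction. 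One sentence you should tighten: you write that ``the same density bound holds for all shallow minors of $H$,'' which is not literally true (deeper minors may be denser); what you actually need, in order to apply the neighbourhood-complexity theorem to $H$ rather than to $G$, is that the class of depth-$\delta$ minors of graphs in $\KK$ is again a class of bounded expansion, i.e., that \emph{every} greatest reduced average density of $H$ is bounded by a function of its depth. This is true because grads compose under iterated shallow minors, but it is a quantified statement over all depths, not a single constant $d$. With that sentence repaired the argument is complete modulo the cited neighbourhood-complexity theorem, and it is essentially the original proof.
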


This allows us to establish a key link between $\wsn_c^{\EE}$ and $\wsn_c^{\ZZ}$ on graphs of bounded expansion.

\begin{lemma}\label{lem:edgeless-empty}
 Let $\KK$ be a graph class with bounded expansion.
 Then there exists a constant $d$ such that for every $G\in \KK$ it holds that $\wsn_c^{\ZZ}(G) \le d\cdot(\wsn_c^{\EE}(G))$.
\end{lemma}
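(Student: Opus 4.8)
The plan is to take an optimal $(k,c)$-\wsm{} to $\EE$, where $k=\wsn_c^\EE(G)$, and enlarge it into a $(k',c)$-\wsm{} to $\ZZ$ with $k'=\bigoh(k)$. Since a modulator to $\ZZ$ must be all of $V(G)$, this amounts to covering the leftover vertices by $\bigoh(k)$ additional split-modules of bounded rank-width. So let $\vec{X}=\{X_1,\dots,X_k\}$ be such a modulator and write $X=\bigcup_i X_i$. As $G-X\in\EE$, the set $R=V(G)\setminus X$ is independent and every $v\in R$ satisfies $N(v)\subseteq X$. The split-modules $X_i$ already partition $X$ into pieces of rank-width at most $c$, so the whole task reduces to covering $R$.

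The key structural observation I would establish first is that each $v\in R$ ``sees'' every split-module in an all-or-nothing fashion. Indeed, the split defining $X_i$ forces the edges between $X_i$ and the rest of its component to form a complete bipartite graph between the frontier $\lambda(X_i)$ and its outside neighborhood; hence for every $v\in R$ we have $N(v)\cap X_i\in\{\emptyset,\lambda(X_i)\}$. Choosing one representative $r_i\in\lambda(X_i)$ for each non-empty frontier and letting $S=\{r_i\}$ (so $|S|\le k$), the neighborhood $N(v)=\bigcup_{r_i\in N(v)}\lambda(X_i)$ is completely determined by the trace $N(v)\cap S$. Consequently the number of distinct neighborhoods realized across $R$ equals the number of distinct such traces.

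Combinatorially this number is only bounded by $2^k$, whereas the lemma requires $\bigoh(k)$; obtaining this \emph{linear} bound is the crux of the argument and the single place where bounded expansion enters. I would apply Fact~\ref{fact:boundedexp} to the induced subgraph $\hat G=G[R\cup S]$, which inherits bounded expansion from $G$. In $\hat G-S=\hat G[R]$ every connected component is a single vertex (diameter $0$), and grouping these components by their neighborhood in $S$ is exactly grouping the vertices of $R$ by $N(v)\cap S$. Fact~\ref{fact:boundedexp} (with $\delta=0$) then yields that the number of groups is $\bigoh(|S|)=\bigoh(k)$.

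It remains to check that each neighborhood class $S_j$ of $R$ is itself a split-module of rank-width $0$: it is independent, and when its common neighborhood $N_j$ is non-empty the edges from $S_j$ to $N_j$ are complete bipartite, so $S_j$ is a split-module of the component containing $N_j$. The unique class with $N_j=\emptyset$ consists of isolated vertices of $G$ and is a split-module by the convention that $V(G)$, $\emptyset$, and more generally unions of connected components, are split-modules. Taking $\vec{X}$ together with these $\bigoh(k)$ classes gives pairwise-disjoint split-modules, each of rank-width at most $c$, whose union is $V(G)$ --- that is, a $(k+\bigoh(k),c)$-\wsm{} to $\ZZ$. Hence $\wsn_c^\ZZ(G)\le k+\bigoh(k)=d\cdot\wsn_c^\EE(G)$ for a constant $d$ depending only on the expansion of $\KK$. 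The main obstacle is the linear neighborhood-count bound of the previous paragraph; the empty-neighborhood class is only a minor technicality dispatched by the split-module convention.
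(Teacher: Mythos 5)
Your proof is correct and follows essentially the same route as the paper's: both select one representative vertex from each frontier to form a set $S$, observe that every remaining vertex is isolated once the rest of the modulator is removed so that Fact~\ref{fact:boundedexp} applies with singleton components, and conclude that the resulting $\bigoh(|S|)$ neighborhood classes are independent split-modules which, together with the original modulator, cover $V(G)$. Your write-up is in fact slightly more explicit than the paper's on two points it leaves implicit, namely the all-or-nothing adjacency of an outside vertex to a frontier and the treatment of the class of vertices with empty trace on $S$.
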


\begin{proof}
 Let $k=\wsn_c^{\EE}(G)$ and let $\vec{H}$ be a $(k,c)$-\wsm{} to $\EE$. Let $S$ be a set of vertices containing exactly one vertex from the frontier of every \sm{} in $\vec{H}$.
 The graph $G'=G-(\vec{H}-S)$ is a graph with bounded expansion and $S$ is its vertex cover. Clearly, the diameter of every connected component of $G'\setminus S$ is at most $1$ (every connected component is a singleton). Therefore, by Fact~\ref{fact:boundedexp} there exists a constant $d'$ such that there are at most $d'\cdot|S|=d'\cdot\wsn_c^{\EE}(G)$ sets of vertices $\CC_1,\dots,\CC_s$ in $G'-S$ 
such that for all pairs $v, v'\in \cup_i\CC_i$ it holds that $v,v'\in \CC_j$ for some $j$ if and only if $N_S(v)=N_S(v')$.
Clearly each such $\CC_i$ is a \sm{} in $G'$, and hence also in $G$. Furthermore, each such $\CC_i$ has rank-width at most $1$. Hence $\wsn_c^\ZZ(G)\leq \wsn_c^\EE(G)+d'\cdot \wsn_c^{\EE}(G)$.
\end{proof}

The above lemma allows us to shift our attention from modulators to $\EE$ to a partition of the vertex set into \sm s of bounded rank-width. The rest of this section is then dedicated to proving our results for \wsm s to $\ZZ$. Our proof strategy for this special case of \wsm s closely follows the replacement techniques used to obtain the kernelization results for the rank-width cover number~\cite{GanianSlivovskySzeider13}, with the distinction that many of the tools and techniques had to be generalized to cover splits instead of modules.

\newcommand{\MSOVC}[0]{
\begin{fact}
[\cite{EibenGanianSzeider15}]
\label{fact:constantrep}
Let $q$, $c$ be non\hy negative integer
  constants. Let $G$ be an $n$-graph of rank-width at most $c$ and $S\subseteq V(G)$. Then one can in time $\bigoh(n^3)$ compute a graph $G'$ and a set $S'\subseteq V(G')$ such that
  $\Card{V(G')}$ is bounded by a constant and $\mathit{type}_q(G,S) =
  \mathit{type}_q(G',S')$.
\end{fact}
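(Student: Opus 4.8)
The plan is to reduce the problem to a finite case analysis over $\MSO$ $q$-types with one free set variable, exploiting the fact that there are only constantly many such types and that each \emph{realizable} type admits a representative of constant size. The bounded-rank-width hypothesis will be used only to make the type of $(G,S)$ computable efficiently; the representative $G'$ we output need not have bounded rank-width, since it will have constant size anyway.

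First I would record the structural fact underlying the result. By Fact~\ref{fact:representatives}, the set $\MSO_{q,1}/\mathord\equiv$ is finite; fix a system of representatives $R$. Since a $q$-type $\mathit{type}_q(H,T)$ is completely determined by the subset of $R$ it contains (equivalent formulas are satisfied simultaneously), there are at most $2^{\Card{R}}$ distinct $q$-types over all pairs $(H,T)$ with $T\subseteq V(H)$, and in particular only finitely many are \emph{realizable}, i.e.\ equal to $\mathit{type}_q(H,T)$ for some graph $H$ and $T\subseteq V(H)$. For each realizable type $\tau$ I would choose a pair $(H_\tau,T_\tau)$ realizing it with $\Card{V(H_\tau)}$ minimum, and set $N=N(q)$ to be the maximum of these minima. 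Then $N$ is a constant depending only on $q$, and this yields a fixed, constant-size list of candidate representatives.

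Next I would make the computation algorithmic. Given $(G,S)$ with $\rw(G)\le c$, I would apply Lemma~\ref{lem:typeformula} (with $l=1$) to compute, in time $\bigoh(n^3)$, a formula $\Phi\in\MSO_{q,1}$ characterizing $\mathit{type}_q(G,S)$: for every graph $G'$ and every $S'\subseteq V(G')$ one has $G'\models\Phi(S')$ if and only if $\mathit{type}_q(G',S')=\mathit{type}_q(G,S)$. I would then enumerate graph-set pairs by increasing order and test, for each candidate $(H_\tau,T_\tau)$, whether $H_\tau\models\Phi(T_\tau)$. Because $\mathit{type}_q(G,S)$ is itself realizable (namely by $(G,S)$), some candidate of size at most $N$ satisfies $\Phi$; I would output the first such pair as $(G',S')$. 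By the defining property of $\Phi$ this gives $\mathit{type}_q(G',S')=\mathit{type}_q(G,S)$ with $\Card{V(G')}\le N$ bounded by a constant. For the running time, computing $\Phi$ dominates at $\bigoh(n^3)$, while the search ranges over only constantly many constant-size candidates with each model check costing constant time, contributing merely an additive constant; hence the total is $\bigoh(n^3)$.

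I expect the only genuine subtlety to be the justification that $N$ is a true constant, independent of the input. This rests entirely on the finiteness of $\MSO_{q,1}/\mathord\equiv$ from Fact~\ref{fact:representatives}, which bounds the number of realizable types and hence the size of their minimal representatives uniformly in $q$. Everything else is routine, since Lemma~\ref{lem:typeformula} already supplies the hard part, namely the efficient translation of the type of $(G,S)$ into a single checkable formula $\Phi$.
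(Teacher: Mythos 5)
Your argument is correct. Note that the paper itself does not prove this statement: it is imported verbatim as a Fact from~\cite{EibenGanianSzeider15}, so there is no in-paper proof to compare against. Your derivation is a valid self-contained reconstruction from the tools the paper does supply, and it follows the standard route: finiteness of $\MSO_{q,1}/\mathord\equiv$ (Fact~\ref{fact:representatives}) gives a uniform constant bound $N(q)$ on the size of a minimum realizer of any realizable $q$-type; Lemma~\ref{lem:typeformula} (with $l=1$, using the rank-width bound only through Fact~\ref{fact:msorankwidth}) turns $\mathit{type}_q(G,S)$ into a single constant-size formula $\Phi$ in time $\bigoh(n^3)$; and a brute-force enumeration of graph--set pairs in increasing order, checking $H\models\Phi(T)$, must succeed at size at most $N(q)$ because $(G,S)$ realizes its own type. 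You correctly avoid the trap of trying to precompute the list of minimal representatives (which would require deciding finite MSO satisfiability, i.e., realizability of types, and is undecidable): the enumerate-until-hit algorithm is well-defined and terminates within a constant bound even though that constant is not effectively computable from $q$, which is the usual non-uniformity accepted in such meta-theorems and does not affect the claimed $\bigoh(n^3)$ bound. The only cosmetic remark is that the output representative should be taken up to isomorphism on a canonical vertex set $\{1,\dots,m\}$ so that the enumeration is finite at each size; with that understood, the proof is complete.
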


We use the notion of \emph{similarity}~\cite{EibenGanianSzeider15} to prove that this procedure does not change the outcome of \MSOMC{\varphi}.

\begin{definition}\label{def:similarity}
  Let $q$ and $k$ be non\hy negative integers, $\HH$ be a graph class, and let $G$ and $G'$ be graphs with $(k,c)$-\wsm s $\vec{X}=\{X_1,\dots,X_k\}$ and $\vec{X'}=\{X_1',\dots,X_k'\}$ to $\HH$,
  respectively. For $1\leq i\leq k$, let $S_i=\lambda(X_i)$ and similarly let $S'_i=\lambda(X'_i)$.
We say that $(G, \vec{X})$ and $(G', \vec{X}')$ are \emph{$q$-similar}
  if all of the following conditions are met:
  \begin{enumerate} 
  \item There exists an isomorphism $\tau$ between $G-\vec{X}$ and $G'-\vec{X}'$. \label{cond:sameH}
  \item For every $v\in V(G)\setminus \vec{X}$ and $i\in [k]$, it holds that $v$ is adjacent to $S_i$ if and only if $\tau(v)$ is adjacent to $S'_i$.\label{cond:Hcon}
  \item if $k\geq 2$, then for every $1\leq i<j\leq k$ it holds that $S_i$ and $S_j$ are adjacent if and only if $S'_i$ and $S'_j$ are adjacent. \label{cond:splitcon}
  \item For each $i\in [k]$, it holds that $\mathit{type}_q(G[X_i],S_i)=\mathit{type}_q(G'[X'_i],S'_i)$. \label{cond:sametypes}
\end{enumerate}
\end{definition}

\begin{lemma}\label{lem:similar}
  Let $q$, $c$ be non\hy negative integer constants and $\HH$ be a graph class. Then given an $n$-vertex graph $G$ and a $(k,c)$-\wsm{} $\vec{X}=\{X_1,\dots X_k\}$ of $G$ into $\HH$, one can in time $\bigoh(n^3)$ compute a graph $G'$ with a $(k,c)$-\wsm{} $\vec{X'}=\{X'_1,\dots X'_k\}$ into $\HH$ such that $(G,\vec{X})$ and $(G',\vec{X'})$ are $q$-similar and for each $i \in [k]$ it holds that $|X_i'|$ is bounded by a constant.
\end{lemma}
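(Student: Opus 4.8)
The plan is to obtain $G'$ from $G$ by replacing each split-module $X_i$ with a constant-size split-module $X'_i$ that carries the same $q$-type relative to its frontier, while leaving $G-\vec{X}$ and the external adjacency pattern untouched. The engine for the replacement is Fact~\ref{fact:constantrep}: since $\vec{X}$ is a $(k,c)$-\wsm{}, each induced subgraph $G[X_i]$ has rank-width at most $c$, so applying Fact~\ref{fact:constantrep} to the pair $(G[X_i], S_i)$, where $S_i=\lambda(X_i)$, yields in time $\bigoh(|X_i|^3)$ a graph $Y_i$ together with a marked set $S'_i\subseteq V(Y_i)$ of constant size such that $\mathit{type}_q(G[X_i],S_i)=\mathit{type}_q(Y_i,S'_i)$. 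Setting $X'_i=V(Y_i)$, this already supplies condition~(\ref{cond:sametypes}) of $q$-similarity as well as the constant bound on $|X'_i|$ for free.

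It then remains to glue the $Y_i$ back into the graph. I would build $G'$ on the vertex set $(V(G)\setminus\vec{X})\cup\bigcup_i V(Y_i)$, keeping $G-\vec{X}$ exactly as it is, so that the isomorphism $\tau$ of condition~(\ref{cond:sameH}) is the identity. The external edges are added to mimic the original split structure. Recall that, because $X_i$ is a split-module, every vertex of $S_i$ has the same neighborhood outside $X_i$, and every external neighbor of $X_i$ lies either in $V(G)\setminus\vec{X}$ or in the frontier of another module $X_j$ (a vertex of $X_j$ with a neighbor outside $X_j$ is necessarily a frontier vertex, and frontier vertices share external neighborhoods, so $S_i$ and $S_j$ are either completely adjacent or completely non-adjacent). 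Accordingly, in $G'$ I would make every vertex of $S'_i$ adjacent to exactly those $v\in V(G)\setminus\vec{X}$ that were adjacent to $S_i$, and make $S'_i$ completely adjacent to $S'_j$ precisely when $S_i$ and $S_j$ were adjacent. This directly enforces conditions~(\ref{cond:Hcon}) and~(\ref{cond:splitcon}), and since only the vertices of $S'_i$ (and none of $V(Y_i)\setminus S'_i$) receive external edges, it makes $X'_i$ a split-module of $G'$ whose frontier $\lambda(X'_i)$ is exactly $S'_i$.

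Verifying that $\vec{X'}$ is a $(k,c)$-\wsm{} to $\HH$ then reduces to three observations: $|\vec{X'}|=k$; the graph $G'-\vec{X'}$ is isomorphic to $G-\vec{X}\in\HH$, so $\bigcup_i X'_i$ is a modulator to $\HH$; and $G'[X'_i]=Y_i$ so $\rw(G'[X'_i])\le c$. For the running time, a width-$c$ rank-decomposition of each $G[X_i]$ costs $\bigoh(|X_i|^3)$ by Fact~\ref{fact:rankdecomp}, each call to Fact~\ref{fact:constantrep} costs $\bigoh(|X_i|^3)$, and assembling $G'$ is polynomial; since $\sum_i|X_i|\le n$, we have $\sum_i\bigoh(|X_i|^3)=\bigoh(n^3)$ overall.

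I expect the delicate point to be the rank-width bound $\rw(Y_i)\le c$: a constant bound on $|V(Y_i)|$ only yields a constant bound on $\rw(Y_i)$, not the specific value $c$. This is resolved by the nature of the representative produced in Fact~\ref{fact:constantrep}, which is obtained by contracting a rank-decomposition of $G[X_i]$ of width at most $c$ and hence itself has width at most $c$; equivalently, a constant-size realizer of the type that has rank-width at most $c$ provably exists (the shrunk copy of $G[X_i]$) and can be located by exhaustive search over the constantly many candidate graphs. The second point requiring care is the claim that $X'_i$ is a genuine split-module with frontier exactly $S'_i$, which rests precisely on the all-or-nothing adjacency between frontiers of distinct split-modules noted above. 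Degenerate cases, such as $S_i=\emptyset$ (where $X_i$ is a union of connected components), are handled by the same construction, placing $Y_i$ as its own component, together with the convention that $V$ and $\emptyset$ count as split-modules.
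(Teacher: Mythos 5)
Your proposal is correct and follows essentially the same route as the paper: replace each $(G[X_i],\lambda(X_i))$ by a constant-size type-equivalent pair via Fact~\ref{fact:constantrep}, take the disjoint union with $G-\vec{X}$, and reattach edges along frontiers using the all-or-nothing adjacency between split-modules. You in fact spell out two points the paper leaves to the reader (why the reattachment yields genuine split-modules, and why the representatives retain rank-width at most $c$), but the construction is the same.
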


\begin{proof} 
  For $i\in [k]$, let $S_i=\lambda(X_i)$, $G_i=G[X_i]$, and let $G_0=G\setminus G[\vec{X}]$. We compute a graph $G_i'$ of constant size and a set $S_i'\subseteq V(G_i')$ with the same \MSO $q$-type as $(G_i,S_i)$. By Fact~\ref{fact:constantrep}, all of this can be done in 
  time $\bigoh(n^3)$. Now let $G'$ be the graph obtained by the following procedure:
  
  \begin{enumerate}
  \item Perform a disjoint union of $G_0$ and $G'_i$ for each $i\in [k]$;
  \item If $k\geq 2$ then for each $1\leq i<j \leq k$ such that $S_i$ and $S_k$ are adjacent in $G$, we add edges between every $v\in S'_i$ and $w\in S'_j$.
  \item for every $v\in V(G_0)$ and $i\in [k]$ such that $S_i$ and $\{v\}$ are adjacent, we add edges between $v$ and every $w\in S'_i$.
  \end{enumerate}
  
   It is easy to verify that $(G,\vec{X})$ and $(G', \vec{X}')$, where $\vec{X}'=\{V(G_1'),\dots,V(G_k')\}$, are $q$\hy similar.
\end{proof}

\begin{fact}[\cite{EibenGanianSzeider15}]
\label{fact:partitiongame}
  Let $q$, $c$, and $k$ be non\hy negative integers, $\HH$ be a graph class, and let $G$ and $G'$ be graphs with $(k,c)$-\wsm s 
  $\vec{X}=\{X_1,\dots,X_k\}$ and $\vec{X'}=\{X_1',\dots,X_k'\}$ to $\HH$,
  respectively. If $(G, \vec{X})$ and $(G', \vec{X}')$ are \emph{$q$-similar}, then 
  $\mathit{type}_q(G, \emptyset) = \mathit{type}_q(G',
  \emptyset)$.
  \end{fact}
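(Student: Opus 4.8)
The plan is to pass to the game characterisation of \MSO{} types and exhibit a winning strategy for the Duplicator. By Fact~\ref{thm:msogames}, the claimed equality $\mathit{type}_q(G,\emptyset)=\mathit{type}_q(G',\emptyset)$ is equivalent to $(G,\emptyset)\equiv^{\MSO}_q(G',\emptyset)$, so it suffices to describe a Duplicator strategy that survives $q$ rounds of the game on $G$ and $G'$ starting from $(\emptyset,\emptyset)$. The idea is to assemble this global strategy from strategies on the individual pieces of the two \wsm s: the isomorphism $\tau$ of condition~\ref{cond:sameH} on the remainders $G_0=G-\vec{X}$ and $G_0'=G'-\vec{X}'$, and, for each $i\in[k]$, a Duplicator strategy $\sigma_i$ witnessing $\mathit{type}_q(G[X_i],S_i)=\mathit{type}_q(G'[X_i'],S_i')$ (condition~\ref{cond:sametypes}), which exists by another application of Fact~\ref{thm:msogames} to the marked games on $(G[X_i],S_i)$ and $(G'[X_i'],S_i')$, where $S_i=\lambda(X_i)$ and $S_i'=\lambda(X_i')$.

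The composite strategy routes every move to the part it belongs to. For a point move the Duplicator identifies the unique part ($G_0$ or some $X_i$, on whichever side the Spoiler played) containing the chosen vertex and answers with the response prescribed by $\tau$ or by $\sigma_i$. For a set move on a set $W$, the Duplicator splits $W$ into its traces $W\cap V(G_0)$ and $W\cap X_i$ for $i\in[k]$, answers each trace by the corresponding part strategy, and returns the union of the answers. Since the whole game lasts only $q$ rounds, each part receives at most $q$ moves in total, so the $q$-round winning strategies $\sigma_i$ (and the isomorphism $\tau$, which wins for any number of rounds) are never exhausted; this is the invariant that makes the parallel simulation well defined.

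It remains to check that the pair of tuples produced after $q$ rounds is a partial isomorphism in the sense of Definition~\ref{def:partialisomorphism}. Equalities and set memberships are immediate because the parts are pairwise disjoint and each part map is injective and, on the marked sets, membership-preserving by construction. For adjacencies I would argue by cases according to which parts the two endpoints lie in. Edges internal to $G_0$ are handled by $\tau$, and edges internal to a single $X_i$ by $\sigma_i$. All remaining cross-edges are controlled by the \sm{} structure: only frontier vertices have neighbours outside their own part, and all vertices of a frontier share the same outside neighbourhood, so the bipartite graph between any two frontiers, and between a frontier and $G_0$, is either complete or empty. Hence a cross-edge between $X_i$ and $G_0$ exists exactly when its $X_i$-endpoint lies in $S_i$ and its $G_0$-endpoint is adjacent to $S_i$, which is preserved by condition~\ref{cond:Hcon} together with the fact that $\sigma_i$ preserves membership in the marked set $S_i$; and a cross-edge between $X_i$ and $X_j$ exists exactly when both endpoints lie in the respective frontiers and $S_i,S_j$ are adjacent, which is preserved by condition~\ref{cond:splitcon} and, again, by frontier-membership preservation under $\sigma_i$ and $\sigma_j$.

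The main obstacle is precisely this last bookkeeping across boundaries: the local strategies only guarantee partial isomorphisms of the marked structures $(G[X_i],S_i)$, so one must leverage the uniform-neighbourhood property of split frontiers to reduce every cross-adjacency to a question about frontier membership, and then use conditions~\ref{cond:Hcon} and~\ref{cond:splitcon} to transfer it. The point that enables this reduction is that $S_i$ is part of the \emph{initial} marking of the module games, so that the winning condition of each $\sigma_i$ already forces a chosen vertex to lie in $S_i$ on one side if and only if its partner lies in $S_i'$ on the other.
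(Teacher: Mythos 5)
This statement is imported verbatim as a Fact from~\cite{EibenGanianSzeider15}, so the paper contains no proof of its own to compare against; your argument is a correct reconstruction of the standard proof, namely the Ehrenfeucht--Fra\"{\i}ss\'e game composition used in the cited source. The one point that carries the whole argument --- reducing every cross-adjacency to frontier membership via the uniform-neighbourhood property of splits, and then transferring it through conditions~\ref{cond:Hcon} and~\ref{cond:splitcon} together with the fact that the local games are played with $S_i$, $S_i'$ as initial markings --- is exactly the right one, and your bookkeeping for routing point and set moves to the $k+1$ component games is sound.
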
}

\lv{\MSOVC}
  
\lv{\begin{theorem}}
\sv{\begin{theorem}[$\star$]}
\label{thm:msovc}
Let $\KK$ be a graph class of bounded expansion, $\EE$ be the class of edgeless graphs and $\ZZ$ be the class of empty graphs. 
For every $\MSO$ sentence $\varphi$ 
the problem $\MSOMC{\varphi}$ admits a linear kernel parameterized by $\wsn_c^{\ZZ}$. Furthermore, the problem $\MSOMC{\varphi}$ admits a linear kernel parameterized by $\wsn_c^\EE$ on $\KK$.
\end{theorem}

\newcommand{\pfmsovc}[0]{
\begin{proof}
 By Lemma~\ref{lem:edgeless-empty} it is sufficient to show that $\MSOMC{\phi}$ admits a linear kernel parameterized by $\wsn_c^{\ZZ}$.
 Let $G$ be a graph, $k=\wsn_c^{\ZZ}(G)$, and $q$ be the nesting depth of quantifiers in $\phi$. 
 By Fact~\ref{fact:equiv} we can find the set $\vec{X}$ of equivalence classes of $\sim_c^G$ in polynomial time. Clearly, the set $\vec{X}$ is a $(k,c)$-\wsm{} to the empty graph. We proceed by constructing $(G',\vec{X}')$ by Lemma~\ref{lem:similar}. Since each $X'_i\in\vec{X}'$ has size bounded by a constant, $|\vec{X}'|\leq k$, and $\bigcup\vec{X}' = V(G')$, it follows that $G'$ is an instance of $\MSOMC{\varphi}$ of size $\bigoh(k)$. Finally, since $G$ and $G'$ are $q$-similar, it follows from Fact~\ref{fact:partitiongame} that 
$G\models \phi$ if and only if $G'\models \phi$. 
\end{proof}}
\lv{\pfmsovc}
\sv{\begin{proof}[Sketch of Proof]
 By Lemma~\ref{lem:edgeless-empty} it is sufficient to show that $\MSOMC{\phi}$ admits a linear kernel parameterized by $\wsn_c^{\ZZ}$.
 Let $G$ be a graph, $k=\wsn_c^{\ZZ}(G)$, and $q$ be the nesting depth of quantifiers in $\phi$. 
 By Fact~\ref{fact:equiv} we can find the set $\vec{X}$ of equivalence classes of $\sim_c^G$ in polynomial time. Clearly, the set $\vec{X}$ is a $(k,c)$-\wsm{} to the empty graph. We proceed by using replacement techniques to construct an equivalent graph $(G',\vec{X}')$ such that each $X'_i\in\vec{X}'$ has size bounded by a constant. Since $|\vec{X}'|\leq k$ and $\bigcup\vec{X}' = V(G')$, it follows that $G'$ is an instance of $\MSOMC{\varphi}$ of size $\bigoh(k)$. 
\end{proof}}

\newcommand{\aMSOOPT}[2]{a\textsc{MSO-Opt${}^{#1}_{#2}$}} %

Next, we combine the approaches used in~\cite{GanianSlivovskySzeider13} and~\cite{EibenGanianSzeider15} to handle \MSOOPT{\gleq}{\varphi} problems by using our more general parameters. Similarly as in~\cite{GanianSlivovskySzeider13}, we use a more involved replacement procedure which explicitly keeps track of the original cardinalities of sets and results in an \emph{annotated version} of $\MSOOPT{\gleq}{\varphi}$. However, some parts of the framework (in particular the replacement procedure) had to be reworked using the techniques developed in~\cite{EibenGanianSzeider15}, since we now use \sm s instead of simple modules. 

\newcommand{\annotation}[0]{
Given a graph $G=(V,E)$, an \emph{annotation} $\WWW$ is a set of
triples $(X,Y,w)$ with $X \subseteq V, Y\subseteq V, w \in \Nat$. For every set $A \subseteq
V$ we define
\begin{center}
${\WWW}(A)=\sum_{(X,Y,w)
  \in {\WWW}, X\subseteq Z, Y\cap Z =\emptyset} w.$
\end{center}
The idea is that a triple $(X, Y, w)$ assigns weight $w$ to a vertex set
$X$. Specifying the set $Y$ allows us to control which subsets of $Z$ the
above sum is taken over. In the kernel, each set $X$ will be a subset of a
module $M$ (with weight $w$ corresponding to the optimum cardinality of a set
in the matching module of the original graph). Setting $Y = M \setminus X$
ensures that the sum $\WWW(Z)$ contains at most one term for each module $M$.
Note that an instance of $\MSOOPT{\gleq}{\varphi}$ can be represented as an
instance of $\aMSOOPT{\gleq}{\varphi}$ with the annotation
${\WWW}=\SB(\{v\},\emptyset,1) \SM v\in V(G)\SE$.  We call the pair $(G,
\WWW)$ an \emph{annotated graph}. If the integer $w$ is represented in binary,
we can represent a triple $(X,Y,w)$ in space $\Card{X}+\Card{Y}+\log_2(w)$.
Consequently, we may assume that the size of the encoding of an annotated
graph $(G,\WWW)$ is polynomial in $\Card{V(G)}+\Card{\WWW}+\max_{(X,Y,w)\in
  \WWW} \log_2 w$.  Each \MSO formula $\varphi(X)$ and $\gleq\in\{\leq,
\geq\}$ gives rise to an \emph{annotated MSO-optimization problem}.
\begin{quote}
  $\aMSOOPT{\gleq}{\varphi}$\\
  \nopagebreak \emph{Instance}: A graph $G$ with an annotation $\WWW$ and an integer~$r\in
  \Nat$. \\ \nopagebreak \nopagebreak \emph{Question}: Is there a set
  $Z\subseteq V(G)$ such that $G \models \varphi(Z)$ and ${\WWW}(Z)\gleq r$?
\end{quote}

\begin{lemma}\label{lem:anngraph}
  Let $\varphi = \varphi(X)$ be a fixed $\MSO$ formula.
  Then given an instance $(G, r)$ of $\MSOOPT{\le}{\varphi}$ and a $(k,c)$-\wsm $\vec{X} = {X_1, \dots , X_k }$ to $\ZZ$ of $G$, an annotated
graph $(G' , \WW)$ satisfying the following properties can be computed in polynomial time.
\begin{enumerate}
 \item $(G,r)\in \MSOOPT{\le}{\varphi}$ if and only if $(G', \WW,r)\in a\MSOOPT{\le}{\varphi}$.
 \item $|V(G')|\in \bigoh(k)$.
 \item The encoding size of $(G' , \WW)$ is $\bigoh(k \log(|V(G)|))$.
\end{enumerate}
\end{lemma}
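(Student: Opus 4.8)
The plan is to replace each piece $G[X_i]$ of the partition $\vec X=\{X_1,\dots,X_k\}$ (note that $\bigcup_i X_i=V(G)$, since $\vec X$ modulates to $\ZZ$) by a constant-size gadget, and to use the annotation to record, for every possible local behaviour of a solution inside $X_i$, the least number of original vertices realising it. Write $q$ for the quantifier rank of $\varphi$ and $S_i=\lambda(X_i)$. As for ordinary modules, when two \sm s $X_i,X_j$ are adjacent the edges between them form a complete bipartite graph between $S_i$ and $S_j$, since every frontier vertex of $X_i$ has the same neighbourhood outside $X_i$ (and symmetrically); thus $G$ is the disjoint union of the $G[X_i]$ with $S_i,S_j$ completely joined for each adjacent pair. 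The key tool is the resulting Feferman--Vaught composition: $\mathit{type}_q(G,S)$ is determined by the adjacency pattern among the frontiers together with the local types $\mathit{type}_q(G[X_i],S\cap X_i,S_i)$. This is the extension of Fact~\ref{fact:partitiongame} (and of Definition~\ref{def:similarity}) obtained by carrying the solution set $S$ as an extra marked set in each piece, and I would establish it by the same Ehrenfeucht--Fra\"{\i}ss\'e argument; the whole construction parallels Lemma~\ref{lem:similar} but additionally keeps track of cardinalities.

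For each $i$ I would compute the weights $w_{i,\theta}=\min\{\,|S|:S\subseteq X_i,\ \mathit{type}_q(G[X_i],S,S_i)=\theta\,\}$ over all realisable local types $\theta$. Since $G[X_i]$ has rank-width at most $c$, this is polynomial: by Lemma~\ref{lem:typeformula} each $\theta$ is captured by a formula $\Phi_\theta(X_1,X_2)$ with $G[X_i]\models\Phi_\theta(S,S_i)$ iff $(S,S_i)$ has type $\theta$, and minimising $|S|$ subject to $\Phi_\theta(S,S_i)$ --- with the fixed frontier $S_i$ read as a unary colour, which leaves the rank-width bounded --- is solvable by Fact~\ref{fact:msorankwidth}. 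In parallel I would apply Fact~\ref{fact:constantrep} to $(G[X_i],S_i)$ \emph{at quantifier rank $q+1$} to obtain a constant-size $G_i'$ with a marked frontier $S_i'$ and $\mathit{type}_{q+1}(G[X_i],S_i)=\mathit{type}_{q+1}(G_i',S_i')$. Rank $q+1$ is used because realisability of $\theta$ is the statement $\exists X_1\,\Phi_\theta(X_1,X_2)$ of quantifier rank $q+1$ in the single free set $X_2$; preserving the $(q+1)$-type of the frontier therefore makes the realisable local solution-types of $G_i'$ coincide exactly with those of $X_i$ (in both directions). I would then let $G'$ be the disjoint union of the $G_i'$ with $S_i',S_j'$ completely joined whenever $X_i,X_j$ are adjacent, so that each $V(G_i')$ is again a \sm{} of $G'$ with frontier $S_i'$.

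The decisive design choice for the annotation is to include a triple for \emph{every} subset of each gadget: $\WW=\{\,(U,\,V(G_i')\setminus U,\,w_{i,\theta(U)}):i\in[k],\ U\subseteq V(G_i')\,\}$ with $\theta(U)=\mathit{type}_q(G_i',U,S_i')$ (dropping the triples with $w=0$, which is harmless and also respects $0\notin\Nat$). Because $Y=V(G_i')\setminus U$ forces $Z\cap V(G_i')=U$, for any $Z$ exactly the triples matching $Z$ fire and $\WW(Z)=\sum_i w_{i,\theta(Z\cap V(G_i'))}$. Property~1 is then immediate from composition: from a feasible $S$ in $G$ with local types $\theta_i$, realising each $\theta_i$ inside $G_i'$ gives $Z$ with $\mathit{type}_q(G',Z)=\mathit{type}_q(G,S)$, hence $G'\models\varphi(Z)$ and $\WW(Z)=\sum_i w_{i,\theta_i}\le\sum_i|S\cap X_i|=|S|$; conversely, from feasible $Z$ with local types $\theta_i$, choosing a minimum realiser of each $\theta_i$ in $X_i$ gives $S$ with $\mathit{type}_q(G,S)=\mathit{type}_q(G',Z)$ and $|S|=\sum_i w_{i,\theta_i}=\WW(Z)$. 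Properties~2 and~3 are size bookkeeping: $|V(G')|=\sum_i|V(G_i')|\in\bigoh(k)$, while $\WW$ has $\bigoh(k)$ triples, each with constant first two coordinates and weight $w_{i,\theta}\le|X_i|\le|V(G)|$, giving encoding size $\bigoh(k\log|V(G)|)$.

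I expect the main obstacle to be the soundness half of Property~1 --- forcing $\WW$ to report a genuinely achievable original cost for an \emph{arbitrary} $Z$, not merely for the canonical solutions one builds from $S$. This is exactly what dictates the two non-obvious choices above: enumerating \emph{all} subsets of each gadget in $\WW$, so that $\WW(Z)$ equals the true sum of local minima however $Z$ meets a gadget; and preserving types at rank $q+1$, so that no gadget can realise a local type that is infeasible in the original $X_i$ (which would leave $w_{i,\theta}$ undefined). The only other point needing care is the one-free-variable strengthening of the composition theorem, which is routine but must be stated explicitly.
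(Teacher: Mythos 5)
Your proposal is correct and follows essentially the same route as the paper's proof: replace each split-module by a constant-size representative preserving the $(q{+}1)$-type relative to its frontier (this is exactly Lemma~\ref{lem:similar} built on Fact~\ref{fact:constantrep}), annotate every subset of every gadget with the minimum cardinality of a subset of the original piece realising the same local $q$-type (computed via Lemma~\ref{lem:typeformula} and Fact~\ref{fact:msorankwidth}), and conclude by the composition property underlying Fact~\ref{fact:partitiongame}. If anything, you are more explicit than the paper about two points it glosses over --- carrying the frontier as a marked set in the local types and stating the one-extra-free-variable strengthening of the composition theorem --- but the argument is the same.
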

\begin{proof}
  The proof proceeds analogously to the proof of Theorem~********* in~\cite{GanianSlivovskySzeider13}, with the distinction that we use the replacement procedure described in Fact~\ref{fact:constantrep}. 
  
In particular, using Lemma~\ref{lem:similar} we compute a graph $G'$ with a $(k,c)$-\wsm{} $\{X_1',\dots,X_k'\}$ to $\ZZ$ such that $(G, \vec{X})$ and $(G', \vec{X}')$ are $(q+1)$\hy similar and $|X_i'|$ is bounded by a constant for each $i\in [k]$. To compute the   annotation $\WWW$, we proceed as follows.  For each $i \in [k]$, we go    through all subsets $W' \subseteq X_i'$. By Lemma~\ref{lem:typeformula}, we  can compute a formula $\Phi$ such that for any graph $H$ and $W \subseteq   V(H)$ we have $\mathit{type}_q(G'[X_i'], W') = \mathit{type}_q(H, W)$ if and   only if $H \models \Phi(W)$. Since $\Card{X_i'}$ has constant size for every $i \in [k]$, this can be done within a constant time bound. Moreover, since $(G, \vec{X})$ and $(G', \vec{X}')$ are   $(q+1)$\hy similar, there has to exist a $W \subseteq X_i$ such that $G[X_i]   \models \Phi(W)$.  Using Fact~\ref{fact:msorankwidth}, we can compute a minimum\hy cardinality subset $W^* \subseteq X_i$ with this property in polynomial time. We then add the triple $(W', X_i' \setminus W',  \Card{W^*})$ to $\WWW$. In total, the number of subsets processed is in  $O(k)$. From this observation we get the desired bounds on the total  runtime, $\Card{V(G')}$, and the encoding size of $(G', \WWW)$.

  We claim that $(G', \WWW, r) \in \aMSOOPT{\leq}{\varphi}$ if and only if  $(G, r) \in \MSOOPT{\leq}{\varphi}$. Suppose there is a set $W \subseteq  V(G)$ of vertices such that $G \models \varphi(W)$ and $\Card{W} \leq  r$. Since $X_1,\dots,X_k$ is a partition of $V(G)$, we have $W = \bigcup_{i \in    [k]} W_i$, where $W_i = W \cap X_i$. For each $i \in [k]$, let $W_i^*  \subseteq X_i$ be a subset of minimum cardinality such that  $\mathit{type}_q(G[X_i], W_i) = \mathit{type}_q(G[X_i], W_i^*)$. From the $(q+1)$\hy similarity of $(G, \vec{U})$ and  $(G', \vec{U}')$, there is $W_i' \subseteq X_i'$ for each $i \in [k]$ such  that $\mathit{type}_q(G'[X_i'], W_i') = \mathit{type}_q(G[X_i], W_i^*)$. By  construction, $\WWW$ contains a triple $(W_i', X_i' \setminus W_i',  \Card{W_i^*})$. Observe that $(X, Y, w) \in \WWW$ and $(X, Y, w') \in \WWW$  implies $w = w'$. Let $W' = \cup_{i \in [k]} W_i'$. Then by $(q+1)$\hy  similarity of $(G, \vec{X})$ and $(G', \vec{X}')$ and  Fact~\ref{fact:partitiongame}, we must have $\mathit{type}_q(G, W) =  \mathit{type}_q(G', W')$. In particular, $G' \models   \varphi(W')$. Furthermore,
\begin{center}
$
     \WWW(W') = \sum_{(W_i', X_i' \setminus W_i', \Card{W_i^*}) \in \WWW, X_i'
       \cap W' = W'_i} \Card{W_i^*} \leq \sum_{i \in [k]} \Card{W_i} = \Card{W}
     \leq r.
$ \end{center}
   
   For the converse, let $W' \subseteq V(G')$ such that $\WWW(W') \leq r$ and
   $G' \models \varphi(W')$, let $W_i'$ denote $W' \cap X_i'$ for $i \in [k]$.
   By construction, there is a set $W_i \subseteq
   X_i$ for each $i \in [k]$ such that $\mathit{type}_q(G[X_i], W_i) =
   \mathit{type}_q(G'[X_i'], W'_i)$ and $\WWW(W') = \sum_{i \in [k]}
   \Card{W_i}$. Let $W = \cup_{i \in [k]} W_i$. Then by congruence and
   Fact~\ref{fact:partitiongame} we get $\mathit{type}_q(G, W) =
   \mathit{type}_q(G', W')$ and thus $G \models \varphi(W)$. Moreover,
   $\Card{W} = \WWW(W') \leq r$.

\end{proof}

To complete the proof, we will make use of a win-win argument based on the following fact.

\begin{fact}[Folklore]\label{fact:decidingMSO}
  Given an $\MSO$ sentence $\varphi$ and a graph $G$, one can decide whether $G \models \varphi$ in time $\bigoh(2^{nl})$, where $n = |V (G)|$ and $l = |\varphi|$.
\end{fact}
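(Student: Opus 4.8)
The plan is to give the straightforward brute-force model-checking algorithm that directly follows the inductive definition of the semantics of $\MSO$ from Subsection~\ref{sub:mso}, and to bound its running time by counting the branching introduced by the quantifiers. First I would precompute the adjacency matrix of $G$, so that each atomic formula $E(x,y)$ and $x=y$ can be tested in constant time, and represent the current value of every set variable as a length-$n$ bit vector, so that $X(x)$ can likewise be tested in constant time. The core is a recursive procedure $\mathrm{Eval}(\psi,\alpha)$ which, given a subformula $\psi$ of $\varphi$ and an assignment $\alpha$ of vertices to the free individual variables and vertex sets to the free set variables of $\psi$, returns whether $G\models\psi[\alpha]$. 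Following the finitely many syntactic cases: atomic formulas are evaluated directly; $\neg\psi'$ and $\psi_1\wedge\psi_2$ recurse on their immediate subformulas; for $\exists x\,\psi'$ the procedure loops over all $n$ candidate vertices for $x$, extending $\alpha$ and recursing; and for $\exists X\,\psi'$ it loops over all $2^n$ subsets of $V(G)$ for $X$ in the same way, returning true iff some branch succeeds. Since $\varphi$ is a sentence, the top-level call $\mathrm{Eval}(\varphi,\emptyset)$ returns the answer, and correctness is immediate by structural induction from the semantics.

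For the running time I would bound the total number of recursive calls. Let $C(\psi)$ denote the number of calls generated by evaluating $\psi$ on one assignment; then $C(\text{atomic})=1$, $C(\neg\psi')=1+C(\psi')$, $C(\psi_1\wedge\psi_2)=1+C(\psi_1)+C(\psi_2)$, $C(\exists x\,\psi')=1+n\cdot C(\psi')$, and $C(\exists X\,\psi')=1+2^n\cdot C(\psi')$. A routine induction on the structure of $\psi$ then gives $C(\psi)\le 2^{n|\psi|}$: the inequality $n\le 2^n$ lets the factor $n$ from a point quantifier be absorbed into the larger factor $2^n$ contributed by a set quantifier, and each quantifier contributes at least two symbols to the length $|\psi|$, which supplies enough slack in the exponent to dominate the additive $+1$ terms from the Boolean connectives. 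In particular $C(\varphi)\le 2^{nl}$. Intuitively this is just the product of the per-quantifier branching factors, at most $2^n$ each, over the at most $l$ quantifiers, namely $(2^n)^l=2^{nl}$.

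Finally, once the adjacency matrix is precomputed and set assignments are maintained incrementally as bit vectors, each individual call performs only constant work beyond its recursive subcalls; in the set-quantifier case I would enumerate the $2^n$ subsets in Gray-code order, so that consecutive assignments differ in a single element and the bit vector can be updated in $\bigoh(1)$ per iteration. Hence the total running time is $\bigoh(C(\varphi))=\bigoh(2^{nl})$. The step I would be most careful about is exactly this time accounting: the only way the clean $\bigoh(2^{nl})$ bound can fail is if enumerating subsets or copying the assignment $\alpha$ hides a super-constant per-step cost, which is precisely why the incremental bit-vector representation, the Gray-code enumeration, and the observation $n\le 2^n$ (folding point-quantifier branching into the set-quantifier bound) are the load-bearing parts of the argument.
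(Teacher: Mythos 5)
The paper states this fact as folklore and gives no proof of its own, so there is nothing to compare against line by line; your brute-force recursive evaluation is exactly the standard argument one would expect behind the citation, and it is correct. Your accounting is sound: the induction $C(\psi)\le 2^{n|\psi|}$ goes through for all four constructors of the paper's $\MSO$ syntax ($E(x,y)$/$X(x)$/$x=y$, $\neg$, $\wedge$, $\exists$), using $n+1\le 2^n$ to absorb point-quantifier branching and the fact that each quantifier occupies at least two symbols to absorb the additive terms, and the $\bigoh(1)$ work per call is justified by the precomputed adjacency matrix and incremental bit-vector representation (the Gray-code refinement is more care than is strictly needed for an $\bigoh(2^{nl})$ bound, but it does no harm).
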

}
\lv{\annotation}

\lv{\begin{theorem}}
\sv{\begin{theorem}[$\star$]}
\label{thm:ann}
Let $\EE$ be a class of edgeless graphs and $\ZZ$ be the class containing the empty graph. For every $\MSO$ formula $\varphi$ 
the problem $\MSOOPT{\leq}{\varphi}$ admits a linear bikernel parameterized by $\wsn_c^{\EE}$ on any class of graphs of bounded expansion, and a linear bikernel parameterized by $\wsn_c^{\ZZ}$.
\end{theorem}

\newcommand{\pfann}[0]{
\begin{proof}
 By Lemma~\ref{lem:edgeless-empty} it is sufficient to show that $\MSOMC{\phi}$ admits a linear bikernel parameterized by $\wsn_c^{\ZZ}$.
 By Fact~\ref{fact:equiv} we can find equivalence classes $\vec{X}=\{X_1,\dots,X_k\}$ of $\sim_c$ in polynomial time. 
  
 Let $(G', \WW)$ be the annotated graph computed from $G$ and $\vec{X}$ according to Lemma~\ref{lem:anngraph}.
 Let $n=|V(G)|$ and suppose $2^k\le n$. Then we can solve $(G',\WW,r)$ in time $n^{\bigoh(1)}$. To do this, we go through all $2^{\bigoh(k)}$ subsets $W$ of $G'$ and test whether $\WW(W)\le r$. If that is the case, we check whether $G' \models \varphi(W )$. By Fact~\ref{fact:decidingMSO} this check can be carried out in time $c_12^{c_2k}\le c_1 n^{c2}$
for suitable constants $c_1$ and $c_2$ depending only on $\varphi$. Thus we can find a constant $t$ such that the entire procedure
runs in time $n^t$ whenever $n$ is large enough.  If we find a solution $W\subseteq V(G')$  we return a trivial yes-instance;
otherwise, a trivial no-instance (of $a\MSOOPT{\leq}{\varphi}$).  Now suppose $n < 2^k$. Then $log(n) < k$ and so the encoding
size of $\WW$ is polynomial in $k$. Thus $(G',\WW,r)$ is a polynomial bikernel.
\end{proof}
}
\lv{\pfann}

\section{Finding $(k,c)$-Well-Structured Modulators}
\label{sec:finding}

For the following considerations, we fix $c$ and assume that the graph $G$ has rank-width at least $c+2$ (this is important for Fact~\ref{fact:equiv}). This assumption is sound, since the considered problems can be solved in polynomial time on graphs of bounded rank-width. Recall that given a \sm{} $A$ in $G$, we use $\lambda(A)$ to denote the frontier of $A$. This section will show how to efficiently approximate \wsm s to various graph classes; in particular, we give algorithms for the class of forests and then for any graph class which can be characterized by a finite set of forbidden induced subgraphs.

The following Fact~\ref{fact:equiv} linking rank-width and \sm s will be crucial for approximating our \wsm s.

\begin{definition}
  Let $G$ be a graph and $c \in \Nat$.  We define a relation
  $\sim^G_c$ on $V(G)$ by letting $v \sim^G_c w$ if and only if there
  is a {\sm} $M$ of $G$ with $v,w \in M$ and $\rw(G[M]) \leq c$. We
  drop the superscript from $\sim^G_c$ if the graph $G$ is clear from
  context.
\end{definition}

\begin{fact}[\cite{EibenGanianSzeider15}]
\label{fact:equiv}
Let $c\in \Nat$ be fixed and $G$ be a graph of rank-width at least $c+2$. The relation $\sim^G_c$ is an equivalence, and any graph $G$ has its vertex set uniquely partitioned by the equivalence classes of $\sim_c$ into inclusion-maximal \sm s of rank-width at most $c$. Furthermore, for $a,b\in V(G)$ it is possible to test $a\sim_c b$ in $\bigoh(n^3)$ time.
\end{fact}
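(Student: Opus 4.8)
The plan is to verify the three assertions in turn, with the structural heart of the argument being a \emph{union lemma} for overlapping split-modules. Reflexivity and symmetry are immediate: every singleton $\{v\}$ is a \sm{} with $\rw(G[\{v\}])=0\le c$, and the defining condition on $\sim^G_c$ is symmetric in $v$ and $w$. The content lies in transitivity, which I would derive from the following claim. \textbf{Union Lemma:} if $M_1,M_2$ are \sm s of $G$ with $M_1\cap M_2\neq\emptyset$ and $\rw(G[M_1]),\rw(G[M_2])\le c$, then $M_1\cup M_2$ is a \sm{} of $G$ with $\rw(G[M_1\cup M_2])\le c$. Granting this, if $u\sim_c v$ is witnessed by $M_1$ and $v\sim_c w$ by $M_2$, then $M_1$ and $M_2$ share $v$, so $M_1\cup M_2$ is a \sm{} of rank-width at most $c$ containing both $u$ and $w$, giving $u\sim_c w$.

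To prove the Union Lemma, write $A=M_1\setminus M_2$, $I=M_1\cap M_2$, $B=M_2\setminus M_1$. The key observation is that a \sm{} $M$ has cut-rank $\rho_G(M)\le 1$: the rows of $\mx A_G[M,V\setminus M]$ indexed by the frontier $\lambda(M)$ are all equal, while the remaining rows vanish. Consequently, inside $G[M_1\cup M_2]$ the sets $A$ and $B$ each have cut-rank at most $1$, being submatrices of the rank-$\le 1$ cut of $M_1$, respectively $M_2$. I would then glue a width-$\le c$ rank-decomposition of $G[M_1]$ to one of $G[M_2]$ along the rank-$\le 1$ interface carrying $B$, using that attaching a set across a cut of rank at most $1$ does not raise the width and that $\rw$ is monotone under induced subgraphs; submodularity of $\rho_G$ keeps the relevant boundary ranks controlled. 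That $M_1\cup M_2$ is again a \sm{} follows from the structure theory of overlapping splits, whose crossing sides recombine into splits; this is the step where the hypothesis $\rw(G)\ge c+2$ enters, ruling out the degenerate global low-rank split under which a union could escape the class.

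Applying the Union Lemma to the finite family of all \sm s of rank-width at most $c$ containing a fixed vertex $v$ (which pairwise overlap at $v$) shows that their union $[v]_{\sim_c}$ is itself such a \sm{}, and is inclusion-maximal. Two classes are equal or disjoint by the equivalence property, and any inclusion-maximal \sm{} of rank-width at most $c$ equals $[v]_{\sim_c}$ for each of its members by maximality; hence the classes are \emph{exactly} the inclusion-maximal \sm s of rank-width at most $c$, and they partition $V(G)$ uniquely. For the running time, note that by the above $a\sim_c b$ holds if and only if $b$ lies in the inclusion-maximal \sm{} of rank-width at most $c$ containing $a$. I would compute the split decomposition of $G$ in polynomial time and use it, together with the rank-width test of Fact~\ref{fact:rankdecomp} (which for fixed $c$ certifies $\rw\le c$ or reports $\rw>c$ in $\bigoh(n^3)$), to identify the relevant \sm{} and verify its rank-width bound.

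The main obstacle is the Union Lemma, in both of its aspects: the combinatorial statement that overlapping split-modules recombine into a split-module, and the quantitative statement that the union's rank-width remains at most $c$. Establishing the latter through a careful gluing of rank-decompositions across the rank-$\le 1$ interfaces, and invoking the hypothesis $\rw(G)\ge c+2$ at precisely the point needed to exclude degenerate global splits, is where the real difficulty lies; the equivalence, partition, and algorithmic claims then follow essentially formally.
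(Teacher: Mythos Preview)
This statement is not proved in the present paper; it is imported as a fact from \cite{EibenGanianSzeider15} and used as a black box throughout, so there is no proof here to compare your proposal against.

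Your plan is the right shape---reduce transitivity to a Union Lemma for overlapping split-modules of rank-width at most $c$, then read off the partition and algorithmic claims---but the two steps you yourself flag as obstacles are genuine gaps rather than routine details. The gluing line ``attaching a set across a cut of rank at most $1$ does not raise the width'' is not free: the overlap $I=M_1\cap M_2$ sits in both rank-decompositions, and a naive splice need not keep every edge of the combined tree at width $\le c$. One either needs the $1$-join composition formula for rank-width together with a careful choice of marker vertex inside $I$ (so that each marked piece embeds as an induced subgraph of $G[M_1]$ or $G[M_2]$), or---as in the cited source---a detour through Cunningham's split decomposition, in which split-modules correspond to subtrees and the rank-width of $G[M]$ is governed by the prime bags contained in $M$'s subtree. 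Your invocation of the hypothesis $\rw(G)\ge c+2$ is likewise too vague: it does more than rule out $M_1\cup M_2=V(G)$. It forces some prime component of the split decomposition to have rank-width exceeding $c$, and that component then acts as a structural anchor guaranteeing that any two overlapping split-modules of rank-width $\le c$ lie on the same side of it and recombine into a split-module still of rank-width $\le c$. Until that mechanism is made explicit, the Union Lemma remains asserted rather than proved.
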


\subsection{Finding $(k,c)$-Well-Structured Modulators to Forests}
\label{sub:fvs}

Our starting point is the following lemma, which shows that long cycles which hit a non-singleton frontier imply the existence of short cycles. 

\lv{\begin{lemma}}
\sv{\begin{lemma}[$\star$]}
\label{lem:longcycle}
Let $C$ be a cycle in $G$ such that $C$ intersects at least three distinct equivalence classes of $\sim_c$, one of which has a frontier of cardinality at least $2$. Let $Z$ be the set of equivalence classes of $\sim_c$ which intersect $C$. Then there exists a cycle $C'$ such that the set $Z'$ of equivalence classes it intersects is a subset of $Z$ and has cardinality at most $3$.
\end{lemma}	

\newcommand{\pflongcycle}[0]{
\begin{proof}
Let $B$ be an equivalence class in $Z$ such that $b_1,b_2\in \lambda(B)$ are two distinct vertices. By assumption, $C$ must contain two distinct vertices $a,c\not \in B$ which are adjacent to $\lambda(B)$. Then $a,b_1,c,b_2$ forms a $C_4$ in $G[B\cup \{a,c\}]$.
\end{proof}}
\lv{\pflongcycle}

We will use the following observation to proceed when Lemma~\ref{lem:longcycle} cannot be applied.

\begin{observation}
\label{obs:fvs}
Assume that for each equivalence class $B$ of $\sim_c$ it holds that $G[B]$ is acyclic, and that no cycle intersects $B$ if $|\lambda(B)|\geq 2$. Then for every cycle $C$ in $G$ and every vertex $a\in C$, it holds that $a$ is in the frontier of some equivalence class of $\sim_c$.
\end{observation}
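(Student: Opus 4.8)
The plan is to prove the contrapositive within the given setup: I want to show that every vertex $a$ lying on a cycle $C$ must belong to the frontier $\lambda(B)$ of the equivalence class $B$ containing it. Let $a$ be an arbitrary vertex on a cycle $C$, and let $B = [a]_{\sim_c}$ be its equivalence class. I would argue based on the structure of split-modules: recall that if $B$ is a split-module with frontier $\lambda(B)$, then vertices of $B$ outside the frontier have all their neighbors inside $B$ (i.e., $N(B \setminus \lambda(B)) \subseteq B$). So the key dichotomy is whether $a \in \lambda(B)$ or not.

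First I would handle the case $a \notin \lambda(B)$ and derive a contradiction. Since $a$ lies on the cycle $C$, it has two neighbors on $C$, say $a^-$ and $a^+$. Because $a \notin \lambda(B)$, every neighbor of $a$ lies inside $B$; hence both $a^-, a^+ \in B$. Thus the edges of $C$ incident to $a$ stay within $B$. More generally, I want to trace the cycle: as long as $C$ remains inside $B$, it can only leave $B$ through a frontier vertex, since any vertex of $C \cap B$ that has a neighbor outside $B$ must itself be a frontier vertex. The plan is to show that either $C$ is entirely contained in $B$, or $C$ must pass through $\lambda(B)$.

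Now the two subcases: if $C$ is entirely contained in $B$, then $G[B]$ contains a cycle, contradicting the first hypothesis that $G[B]$ is acyclic. If instead $C$ leaves $B$, then $C$ contains a vertex of $\lambda(B)$; but a cycle passing through a vertex of $B$ and through a vertex outside $B$ must intersect $\lambda(B)$ (since the only vertices of $B$ adjacent to $V(G)\setminus B$ are the frontier vertices). In particular $C$ intersects $B$ and $|\lambda(B)| \geq 1$. The remaining worry is the hypothesis ``no cycle intersects $B$ if $|\lambda(B)| \geq 2$'': if $|\lambda(B)| \geq 2$ then the second hypothesis is immediately violated since $C$ intersects $B$, so this case cannot occur. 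Thus we may assume $|\lambda(B)| = 1$, say $\lambda(B) = \{f\}$. But then any path of $C$ entering and leaving $B$ must do so through the single vertex $f$, which would force $C$ to visit $f$ at least twice --- impossible for a cycle unless $C$ stays entirely inside $B$, which returns us to the acyclicity contradiction. This pins down that $a$ itself must lie in $\lambda(B)$.

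The main obstacle I anticipate is making the ``a cycle through $B$ must hit the frontier'' argument fully rigorous, in particular the single-frontier-vertex case: I must argue carefully that a cycle cannot enter and exit a split-module with a singleton frontier $\{f\}$ without revisiting $f$, and hence that the portion of $C$ inside $B$ together with $f$ would form a closed walk entirely within $B$, yielding a cycle in the acyclic graph $G[B]$. This is essentially a connectivity/separator argument, using that $\lambda(B)$ separates $B \setminus \lambda(B)$ from the rest of $G$, and I would phrase it as: $\lambda(B)$ is a vertex cut separating $B$ from $V(G)\setminus B$, so a cycle intersecting both sides must pass through $\lambda(B)$, and with $|\lambda(B)|=1$ a cycle cannot cross a single cut vertex and return. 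Once this separator fact is stated cleanly, the remaining steps are routine case analysis.
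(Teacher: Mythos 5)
Your argument is correct and complete: the paper states this Observation without any proof, and your case analysis (ruling out $|\lambda(B)|\geq 2$ via the second hypothesis, ruling out $C\subseteq B$ via acyclicity, and then using that a singleton frontier $\{f\}$ separates $B\setminus\{f\}$ from the rest so a cycle meeting both sides would have to visit $f$ twice) is precisely the reasoning the Observation tacitly relies on. No gaps.
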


Fact~\ref{fact:fvs} below is the last ingredient needed for the algorithm.

\begin{fact}[\cite{BeckerGeiger96}]
\label{fact:fvs}
\textsc{Feedback Vertex Set} can be $2$-approximated in polynomial time.
\end{fact}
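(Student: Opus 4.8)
The plan is to prove the stronger, vertex-weighted statement (of which the unweighted problem is the special case $w \equiv 1$) by the \emph{local ratio} technique, which is the method underlying the Becker--Geiger algorithm. Recall the Local Ratio Theorem: if the weight function decomposes as $w = w_1 + w_2$ with $w_1, w_2 \ge 0$, and a feasible feedback vertex set $F$ is simultaneously $2$-approximate with respect to $w_1$ and with respect to $w_2$, then $F$ is $2$-approximate with respect to $w$. I would repeatedly peel off a ``cheap'' component $w_1$ for which the ratio $2$ can be certified, recurse on the residual weight $w_2 = w - w_1$, and glue the results together. As a standing preprocessing step I delete every vertex of degree at most $1$: such a vertex lies on no cycle and hence belongs to no minimal feedback vertex set, so its removal changes neither the optimum nor the set of inclusion-minimal solutions among the remaining vertices. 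After exhaustive deletion we may assume $G$ has minimum degree at least $2$.

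The engine is a dichotomy on such a graph $G$. Call a cycle \emph{semidisjoint} if at most one of its vertices has degree greater than $2$. If $G$ contains a semidisjoint cycle $C$, I set $\delta = \min_{v\in C} w(v)$ and let $w_1$ assign $\delta$ to each vertex of $C$ and $0$ elsewhere: every feedback vertex set meets $C$, so the optimum has $w_1$-cost at least $\delta$, while any \emph{minimal} feedback vertex set meets a semidisjoint cycle in exactly one vertex and thus has $w_1$-cost exactly $\delta$, making every minimal solution even $1$-approximate for $w_1$. Otherwise $G$ has no semidisjoint cycle, and I use the degree-based weight $w_1(v) = \delta\,(d(v)-1)$ with $\delta = \min_v w(v)/(d(v)-1)$, where $d(v)$ denotes the degree of $v$. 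Writing $\beta(G) = |E(G)| - |V(G)| + c(G)$ for the cyclomatic number, with $c(G)$ the number of connected components, two estimates yield the ratio: the \emph{lower bound} that every feedback vertex set $F^\ast$ satisfies $\sum_{v\in F^\ast}(d(v)-1) \ge \beta(G)$, since deleting a vertex $v$ lowers $\beta$ by at most $d(v)-1$ (it removes $d(v)$ edges and one vertex and splits a component into at most $d(v)$ pieces) and $\beta$ must be driven to $0$; and the \emph{upper bound} that, because $G$ has no semidisjoint cycle, every minimal feedback vertex set $F$ satisfies $\sum_{v\in F}(d(v)-1) \le 2\,\beta(G)$. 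Combining these, $w_1(F) = \delta\sum_{v\in F}(d(v)-1) \le 2\delta\,\beta(G) \le 2\delta\sum_{v\in F^\ast}(d(v)-1) = 2\,w_1(F^\ast)$, so every minimal solution is $2$-approximate for $w_1$.

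In both cases the residual weight $w_2 = w - w_1$ is nonnegative and zeroes out at least one additional vertex. I would recurse on $(G, w_2)$ to obtain a solution $F'$, then extend $F'$ to a minimal feedback vertex set $F$ of $G$ by greedily adding only vertices of zero $w_2$-weight (in the semidisjoint-cycle case first ensuring that $C$ is hit); since the added vertices cost nothing under $w_2$, the set $F$ remains $2$-approximate for $w_2$, and by the case analysis it is $2$-approximate for $w_1$, so the Local Ratio Theorem certifies that $F$ is $2$-approximate for $w$. The base case is $w \equiv 0$, where any minimal feedback vertex set is returned. Each recursive step either deletes a vertex or creates a new zero-weight vertex, so there are $\bigoh(|V(G)|)$ iterations, each clearly polynomial, giving an overall polynomial running time.

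The main obstacle is the upper-bound lemma of the second case: that a minimal feedback vertex set of a graph with minimum degree at least $2$ and no semidisjoint cycle has total excess degree $\sum_{v\in F}(d(v)-1)$ at most twice the cyclomatic number. This is the structural heart of the argument, proved by a charging/counting scheme over the forest $G - F$ that uses minimality of $F$ (every $v\in F$ lies on a cycle closing through $G-F$) together with the absence of semidisjoint cycles to bound how much degree can accumulate on $F$; pinning down the constant $2$ here is exactly what makes the guarantee tight, whereas the lower bound and the local-ratio bookkeeping are comparatively routine.
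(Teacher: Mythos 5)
The paper does not prove this statement at all: it is imported as a black-box Fact with a citation to Becker and Geiger, so the only meaningful comparison is with the cited literature. Your reconstruction is a correct outline of a known factor-$2$ proof, but it is really the local-ratio algorithm of Bafna, Berman and Fujito (semidisjoint-cycle dichotomy, weight $w_1(v)=\delta\,(d(v)-1)$), not Becker--Geiger's own algorithm, which after the degree-$\leq 1$ cleanup greedily peels the weight $w_1(v)=\delta\cdot d(v)$ (picking a vertex minimizing $w(v)/d(v)$) and needs no semidisjoint-cycle case, followed by a reverse-delete pass to enforce minimality; the two were later unified in the primal-dual framework of Chudak, Goemans, Hochbaum and Williamson. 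The parts you work out are sound: the only cycle through a degree-$2$ vertex of a semidisjoint cycle $C$ is $C$ itself, so a minimal solution meets $C$ exactly once and is $1$-approximate for that layer; and the lower bound $\sum_{v\in F^{\ast}}(d(v)-1)\geq |E|-|V|+c(G)$ follows correctly since deleting a non-isolated vertex $v$ decreases the cyclomatic number by at most $d(v)-1$.

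There is, however, one genuine gap, and you name it yourself: the inequality $\sum_{v\in F}(d(v)-1)\leq 2\beta(G)$ for every \emph{minimal} feedback vertex set $F$ of a graph with minimum degree at least $2$ and no semidisjoint cycle is the entire mathematical content of the factor $2$, and ``a charging/counting scheme over the forest $G-F$'' is a statement of intent, not an argument. This is precisely BBF's key lemma; its proof requires a careful comparison of the edges from $F$ into the trees of $G-F$ with the branch structure of those trees, using minimality (each $v\in F$ has a private cycle through $G-F$) and the absence of semidisjoint cycles to prevent degree from accumulating on few trees, and getting the constant $2$ (rather than, say, $4$) is delicate. A blind proof attempt must supply this lemma to be complete. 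A smaller point of hygiene: after subtracting $w_1$, you write ``recurse on $(G,w_2)$'' with the same graph, but zero-weight vertices must be provisionally taken into the solution and deleted before recursing (otherwise the next $\delta$ is $0$ and the peeling stalls); your termination count of $\bigoh(|V(G)|)$ iterations implicitly assumes this standard step, so make it explicit.
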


\lv{\begin{theorem}}
\sv{\begin{theorem}[$\star$]}
\label{thm:findwsmforest}
Let $c\in \Nat$ and $\FF$ be the class of forests. There exists a polynomial algorithm which takes as input a graph $G$ of rank-width at least $c+2$ and computes a set $\vec{X}$ of \sm s such that $\vec{X}$ is a $(k,c)$-well-structured modulator to $\FF$ and $k\leq 3\cdot \wsn_c^\FF$.
\end{theorem}

\newcommand{\pffindwsmforest}[0]{
\begin{proof}
We first describe the algorithm and then argue correctness. 
The algorithm proceeds in three steps. 
\begin{enumerate}
\item[I] By deciding $a\sim_c b$ for each pair of vertices in $G$ as per Fact~\ref{fact:equiv}, we compute the equivalence classes of $\sim_c$. 
\item[II] For each set of up to three equivalence classes $\{A_1, A_2,A_3\}$ of $\sim_c$, we check if $G[A_1\cup A_2\cup A_3]$ is acyclic; if it's not, then we add $A_1$, $A_2$ and $A_3$ to $\vec{X}$ and set $G:=G-(A_1\cup A_2 \cup A_3)$. 
\item[III] We use Fact~\ref{fact:fvs} to $2$-approximate a feedback vertex set $S$ of $G$ in polynomial time; let $S'$ contain every equivalence class of $\sim_c$ which intersects $S$. We then set $\vec{X}:=\vec{X}\cup S'$, and output $\vec{X}$.
\end{enumerate}

For correctness, observe that Step III guarantees that $G-\vec{X}$ is acyclic. Hence we only need to argue that $|\vec{X}|\leq 3\cdot \wsn_c^\FF$. So, assume for a contradiction that there exists a $(k,c)$-\wsm{} $\vec{X'}$ to $\FF$ such that $|\vec{X}|>3\cdot k$. Let $\Lambda$ be the set of all equivalence classes of $\sim_c$ which were added to $\vec{X}$ in Step II of the algorithm. Since for each such $\{A_1, A_2,A_3\}$ the graph $G[A_1\cup A_2\cup A_3]$ contains a cycle and $A_1$, $A_2$, $A_3$ are inclusion-maximal \sm s of rank-width at most $c$ by Fact~\ref{fact:equiv}, $\vec{X}'$ must always contain at least one \sm $A'$ such that $A'\subseteq A_i$ for some $i\in [3]$. Let $\Lambda'$ contain all such \sm s $A'$, i.e., all elements of $\vec{X}'$ which form a subset of a \sm{} added to $\vec{X}$ in Step II.

Let $\vec{X}_3=\vec{X}\setminus \Lambda$ and $\vec{X}'_3=\vec{X}'\setminus \Lambda'$. Since $|\Lambda|\leq 3\cdot |\Lambda'|$ by the argument above, from our assumption it would follow that $|\vec{X}_3|>3\cdot |\vec{X}'_3|$. Let us consider the graphs $G_3=G-\Lambda$ and $G_3'=G'-\Lambda'$; observe that $G_3\subseteq G_3'$. Furthermore, by Lemma~\ref{lem:longcycle} a cycle $C$ in $G_3$ cannot intersect any equivalence class $B$ of $\sim_c$ such that $\lambda(B)\geq 2$. Hence we can apply Observartion~\ref{obs:fvs}, from which it follows that there is a one-to-one correspondence between any minimal feedback vertex set in $G_3$ and the equivalence classes of $\sim_c$ in $G_3$. Let $z$ be the cardinality of a minimum feedback vertex set in $G_3$; by the correctness of the algorithm of Fact~\ref{fact:fvs}, we have $z\leq |\vec{X}_3|\leq 2z$. Since $G_3'$ is a supergraph of $G_3$, it follows that $|\vec{X}'_3|\geq z$, and hence from our assumption we would obtain $2z\geq |\vec{X}_3|>3\cdot |\vec{X}'_3|\geq 3z$. We have thus reached a contradiction, and conclude that there exists no $(k,c)$-\wsm{} to $\FF$ such that $|\vec{X}|>3\cdot k$.
\end{proof}
}
\lv{\pffindwsmforest}

\sv{
\begin{proof}[Sketch of Proof]
The algorithm proceeds in three steps. 
\begin{enumerate}
\item[I] By deciding $a\sim_c b$ for each pair of vertices in $G$ as per Fact~\ref{fact:equiv}, we compute the equivalence classes of $\sim_c$. 
\item[II] For each set of up to three equivalence classes $\{A_1, A_2,A_3\}$ of $\sim_c$, we check if $G[A_1\cup A_2\cup A_3]$ is acyclic; if it's not, then we add $A_1$, $A_2$ and $A_3$ to $\vec{X}$ and set $G:=G-(A_1\cup A_2 \cup A_3)$. 
\item[III] We use Fact~\ref{fact:fvs} to $2$-approximate a feedback vertex set $S$ of $G$ in polynomial time; let $S'$ contain every equivalence class of $\sim_c$ which intersects $S$. We then set $\vec{X}:=\vec{X}\cup S'$, and output $\vec{X}$.\qedhere
\end{enumerate}
\end{proof}
}

\subsection{Finding $(k,c)$-Well-Structured Modulators via Obstructions}
\label{sub:fis}

Here we will show how to efficiently find a sufficiently small $(k,c)$-\wsm{} to any graph class which can be characterized by a finite set of forbidden induced subgraphs. Let us fix a graph class $\HH$ characterized by a set $\RR$ of forbidden induced subgraphs, and let $r$ be the maximum order of a graph in $\RR$. Our first step is to reduce our problem to the classical \textsc{Hitting Set} problem, the definition of which is recalled below.

\begin{center}
\vspace{-0.5cm}
  \begin{boxedminipage}[t]{0.99\textwidth}
\begin{quote}
\smallskip
  $d$-\textsc{Hitting Set}\\ \nopagebreak
  \emph{Instance}: A ground set $S$ and a collection $\CC$ of subsets of $S$, each of cardinality at most $d$.\\ \nopagebreak
  \emph{Notation}: A hitting set is a subset of $S$ which intersects each set in $\CC$.\\ \nopagebreak
  \emph{Task}: Find a minimum-cardinality hitting set.
   \smallskip
\end{quote}
\end{boxedminipage}
\end{center}

Given a graph $G$ (of rank-width at least $c+2$), we construct an instance $W_G$ of $r$-\textsc{Hitting Set} as follows. The ground set of $W$ contains each equivalence class $A\subseteq V(G)$ of $\sim_c$. For each induced subgraph $R\subseteq G$ isomorphic to an element of $\RR$, we add the set $C_R$ of equivalence classes of $\sim_c$ which intersect $R$ into $\CC$. This completes the construction of $W_G$; we let $\hit(W_G)$ denote the cardinality of a solution of $W_G$.

\lv{\begin{lemma}}
\sv{\begin{lemma}[$\star$]}
\label{lem:hitting}
For any graph $G$ of rank-width at least $c+2$, the instance $W_G$ is unique and can be constructed in polynomial time. Every hitting set $Y$ in $W_G$ is a $(|Y|,c)$-\wsm{} to $\HH$ in $G$.
Moreover, $\wsn_c^\HH=\hit(W_G)$.
\end{lemma}

\newcommand{\pflemhitting}[0]{
\begin{proof}
The uniqueness of $W_G$, as well as the fact that $W_G$ can be constructed in polynomial time, follow from Fact~\ref{fact:equiv} together with the observation that all subgraphs $R\subseteq G$ isomorphic to an element of $\RR$ can be enumerated in polynomial time. 

For the second claim, consider a hitting set $Y\subseteq S$. The graph $G-Y$ cannot contain any obstruction for $\HH$, and hence $G-Y\in \HH$. 

For the third claim, assume $G$ contains a $(\hit(W_G)-1,c)$-\wsm{} $\vec{X}$ to $\HH$. By Fact~\ref{fact:equiv}, each element $A$ of $\vec{X}$ forms a subset of an equivalence class $A'$ of $\sim_c$. Let $\vec{X}'$ be obtained by replacing each element of $A$ by its respective superset $A'$. Then $G-\vec{X}'$ is a subgraph of $G-\vec{X}$, and hence by our assumption $G-\vec{X}'$ cannot contain any subgraph isomorphic to an element of $\RR$. However, this would imply that $\vec{X}'$ is a hitting set in $W_G$ of cardinality $\hit(W_G)-1$, which is a contradiction.
\end{proof}
}
\lv{\pflemhitting}

The final ingredient we need for our approximation algorithm is the following folklore result.

\begin{fact}[Folklore, see also~\cite{KleinbergTardos06}]
\label{fact:hit}
There exists a polynomial-time algorithm which takes as input an instance $W$ of $r$-\textsc{Hitting Set} and outputs a hitting set $Y$ of cardinality at most $r\cdot \hit(W)$.
\end{fact}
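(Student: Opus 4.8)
The plan is to use the classical ``maximal disjoint collection'' argument, which generalizes the factor-$2$ approximation of \textsc{Vertex Cover} via maximal matchings. The algorithm is greedy: starting from $Y = \emptyset$ and an auxiliary collection $\mathcal{D} = \emptyset$, I would repeatedly scan the collection $\CC$ for a set $C$ that is disjoint from the current $Y$; whenever such a $C$ is found, I add $C$ to $\mathcal{D}$ and all elements of $C$ to $Y$. The process terminates once every set in $\CC$ intersects $Y$. Each scan runs in time polynomial in the size of the input, and since $Y$ strictly grows in every iteration that adds a set, the whole procedure runs in polynomial time.

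Correctness splits into two claims. First, $Y$ is a hitting set: by the termination condition every $C \in \CC$ intersects $Y$, so $Y$ meets each set in $\CC$. Second, for the size bound, note that the sets collected in $\mathcal{D}$ are pairwise disjoint by construction, since a set is added to $\mathcal{D}$ only when it is disjoint from the elements accumulated so far in $Y$. As every set in $\mathcal{D}$ has cardinality at most $r$, we obtain $|Y| \le r \cdot |\mathcal{D}|$. On the other hand, any hitting set---and in particular an optimum one of size $\hit(W)$---must contain at least one element from each of the $|\mathcal{D}|$ pairwise disjoint sets in $\mathcal{D}$; since these elements are necessarily distinct, $\hit(W) \ge |\mathcal{D}|$. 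Combining the two inequalities yields $|Y| \le r \cdot |\mathcal{D}| \le r \cdot \hit(W)$, as required.

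There is no genuine obstacle here, as the statement is folklore; the only point requiring care is the lower bound $\hit(W) \ge |\mathcal{D}|$, which relies crucially on the pairwise disjointness of the sets selected into $\mathcal{D}$. This is precisely what guarantees that hitting all of them simultaneously forces $|\mathcal{D}|$ distinct elements into any feasible solution, thereby linking the greedy quantity $|\mathcal{D}|$ to the true optimum.
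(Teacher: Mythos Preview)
Your argument is correct and is precisely the standard folklore proof. Note that the paper does not actually prove this fact; it merely states it with a citation, so there is nothing to compare against beyond observing that your maximal-disjoint-collection argument is the canonical justification one would expect behind such a reference.
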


\begin{theorem}
\label{thm:findwsm}
Let $c\in \Nat$ and $\HH$ be a class of graphs characterized by a finite set of forbidden induced subgraphs of order at most $r$. There exists a polynomial algorithm which takes as input a graph $G$ of rank-width at least $c+2$ and computes a $(k,c)$-well-structured modulator to $\HH$ such that $k\leq r\cdot \wsn_c^\HH$.
\end{theorem}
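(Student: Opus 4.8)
The plan is to reduce the problem to $r$-\textsc{Hitting Set} via the construction of $W_G$ and then chain together the three facts already established. First I would invoke Lemma~\ref{lem:hitting}, which tells us that $W_G$ can be constructed in polynomial time, that $\wsn_c^\HH = \hit(W_G)$, and crucially that \emph{every} hitting set $Y$ in $W_G$ yields a $(|Y|,c)$-\wsm{} to $\HH$ (since $G-Y$ is obtained by deleting full equivalence classes, each of which is an inclusion-maximal \sm{} of rank-width at most $c$ by Fact~\ref{fact:equiv}, and no forbidden induced subgraph survives). This is the key structural bridge: solving the hitting set instance is literally the same as producing a \wsm{}.

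The algorithm then reads off immediately. Given $G$ of rank-width at least $c+2$, I would (i)~build $W_G$ in polynomial time as guaranteed by Lemma~\ref{lem:hitting}, (ii)~apply the folklore approximation algorithm of Fact~\ref{fact:hit} to obtain a hitting set $Y$ with $|Y| \le r\cdot \hit(W_G)$, and (iii)~output the set $\vec{X}$ of equivalence classes of $\sim_c$ indexed by $Y$. Correctness of the output as a $(k,c)$-\wsm{} is exactly the second claim of Lemma~\ref{lem:hitting}, so $G-\vec{X}\in\HH$ with each member of $\vec{X}$ a \sm{} of rank-width at most $c$. For the size bound, combine $|Y|\le r\cdot\hit(W_G)$ with the identity $\hit(W_G)=\wsn_c^\HH$ from Lemma~\ref{lem:hitting} to conclude $k=|Y|\le r\cdot\wsn_c^\HH$, which is exactly what is claimed.

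The only points requiring care are bookkeeping rather than genuine obstacles. One should confirm that each set $C_R$ added to $\CC$ has cardinality at most $r$: since $R$ is isomorphic to a forbidden subgraph of order at most $r$, it touches at most $r$ distinct equivalence classes, so $W_G$ is a legitimate instance of $r$-\textsc{Hitting Set} and Fact~\ref{fact:hit} applies with parameter $d=r$. One should also note that enumerating all induced subgraphs isomorphic to elements of the fixed finite set $\RR$ takes polynomial time (brute force over vertex subsets of bounded size $r$), keeping the whole construction polynomial; this is subsumed by Lemma~\ref{lem:hitting} but worth recalling. The main conceptual work has effectively been front-loaded into Lemma~\ref{lem:hitting} and Fact~\ref{fact:equiv}, so the remaining argument is a short composition of these results rather than anything involving new ideas.
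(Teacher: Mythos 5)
Your proposal is correct and follows exactly the paper's own argument: construct the $r$-\textsc{Hitting Set} instance $W_G$, apply the $r$-approximation of Fact~\ref{fact:hit}, output the resulting hitting set as the modulator, and derive both correctness and the bound $k\leq r\cdot \wsn_c^\HH$ from Lemma~\ref{lem:hitting}. The extra bookkeeping you mention (each $C_R$ has cardinality at most $r$; enumeration of obstructions is polynomial) is indeed already subsumed by Lemma~\ref{lem:hitting}, so nothing further is needed.
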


\begin{proof}
We proceed in two steps: first, we compute the $r$-\textsc{Hitting Set} instance $W_G$, and then we use Fact~\ref{fact:hit} to compute an $r$-approximate solution $Y$ of $W_G$ in polynomial time. We then set $\vec{X}:=Y$ and output.
Correctness follows from Lemma~\ref{lem:hitting}. \lv{In particular, the hitting set $Y$ computed by Fact~\ref{fact:hit} has cardinality at most $r\cdot \hit(W_G)$ and hence $|\vec{X}|\leq r\cdot \wsn_c^\HH(G)$ by Lemma~\ref{lem:hitting}.}
\end{proof}

\section{Applications of $(k,c)$-Well-Structured Modulators}
\label{sec:using}

We now proceed by outlining the general applications of our results. Our algorithmic framework is captured by the following Theorem~\ref{thm:main-use}.

\begin{theorem}\label{thm:main-use}
Let $p, q$ be polynomial functions. For every $\MSO$ sentence $\phi$ and every graph class $\HH$ such that 
\begin{enumerate}
\item \label{cond1}$\MSOMC{\phi}$ admits a (bi)kernel of size $p(\md^\HH(G))$, and
\item \label{cond2}there exists a polynomial algorithm which finds a $(q(\wsn^\HH_c),c)$-\wsm{} to $\HH$,
\end{enumerate} 
the problem $\MSOMC{\phi}$ admits a (bi)kernel of size $p(q(\wsn_c^\HH(G)))$.
\end{theorem}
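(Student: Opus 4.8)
The plan is to compose the two hypotheses through the replacement machinery of Section~\ref{sec:vc}. Fix the $\MSO$ sentence $\phi$ and let $q_0$ denote its quantifier rank (I write $q_0$ to avoid clashing with the polynomial $q$ of the statement). Given an input graph $G$, I would first dispose of the easy case $\rw(G)<c+2$, consistent with the standing assumption of Section~\ref{sec:finding}: here $\MSOMC{\phi}$ is decidable in polynomial time by Fact~\ref{fact:msorankwidth}, so a trivial constant-size (bi)kernel can be output. Otherwise I invoke the algorithm of hypothesis~\ref{cond2} to compute in polynomial time a $(k,c)$-\wsm{} $\vec{X}=\{X_1,\dots,X_k\}$ to $\HH$ with $k\le q(\wsn_c^\HH(G))$.

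The crux is to convert this large-but-structured modulator into a genuinely small (cardinality) modulator while preserving the answer to $\MSOMC{\phi}$. For this I apply Lemma~\ref{lem:similar} with quantifier rank $q_0$, obtaining in time $\bigoh(n^3)$ a graph $G'$ together with a $(k,c)$-\wsm{} $\vec{X}'=\{X_1',\dots,X_k'\}$ to $\HH$ such that $(G,\vec{X})$ and $(G',\vec{X}')$ are $q_0$-similar and each $|X_i'|$ is bounded by a constant $c_0$ depending only on $q_0$ and $c$. Two consequences follow. First, since $\vec{X}'$ is a well-structured modulator to $\HH$, its union $\bigcup\vec{X}'$ is a modulator to $\HH$ in $G'$ of cardinality at most $c_0\cdot k$; hence $\md^\HH(G')\le c_0\cdot k\le c_0\cdot q(\wsn_c^\HH(G))$. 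Second, by Fact~\ref{fact:partitiongame} the $q_0$-similarity yields $\mathit{type}_{q_0}(G,\emptyset)=\mathit{type}_{q_0}(G',\emptyset)$, and as $\phi\in\MSO_{q_0,0}$ this gives $G\models\phi$ iff $G'\models\phi$. Thus $G'$ is an equivalent instance of $\MSOMC{\phi}$ carrying a small modulator to $\HH$.

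Finally I feed $G'$ (together with the modulator $\bigcup\vec{X}'$, should the kernelization of hypothesis~\ref{cond1} require one to be supplied) into the (bi)kernelization guaranteed by hypothesis~\ref{cond1}. This runs in time polynomial in $|V(G')|=\bigoh(|V(G)|)$ and outputs an equivalent instance of size at most $p(\md^\HH(G'))\le p\bigl(c_0\cdot q(\wsn_c^\HH(G))\bigr)$. Composing the equivalences, namely $G\models\phi$ iff $G'\models\phi$ iff the final instance is a yes-instance, shows that the whole pipeline is a valid (bi)kernelization for $\MSOMC{\phi}$ parameterized by $\wsn_c^\HH$; and since $c_0$ is a constant the size bound is polynomial in $\wsn_c^\HH(G)$, matching $p(q(\wsn_c^\HH(G)))$ up to the absorbed constant factor.

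I expect the only delicate points to be bookkeeping rather than deep. One must ensure that the quantifier rank passed to Lemma~\ref{lem:similar} is exactly $q_0$, so that Fact~\ref{fact:partitiongame} genuinely certifies $G\models\phi\iff G'\models\phi$. The other is the interface with hypothesis~\ref{cond1}, whose kernel is stated in terms of the \emph{minimum} modulator size $\md^\HH$: the point is that our explicitly constructed modulator of size $\bigoh(k)$ upper-bounds $\md^\HH(G')$, so the size guarantee $p(\md^\HH(G'))$ holds automatically, and if the algorithm needs a witnessing modulator as input we already possess one. The genuine insight that makes the composition work at all is that replacing each bounded-rank-width split-module by a constant-size type-equivalent gadget converts the unbounded total size of the well-structured modulator into a cardinality bound of $\bigoh(k)$, thereby reducing the $\wsn_c^\HH$-parameterization to the $\md^\HH$-parameterization without altering the $\MSO$ answer.
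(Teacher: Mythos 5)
Your proposal is correct and follows essentially the same route as the paper's own proof: compute the approximate well-structured modulator via Condition~2, shrink each split-module to constant size via Lemma~\ref{lem:similar}, invoke Fact~\ref{fact:partitiongame} to preserve the truth of $\phi$, and then apply the kernelization of Condition~1 to the resulting instance with its $\bigoh(q(\wsn_c^\HH(G)))$-size modulator. Your added bookkeeping (the low-rank-width base case, the explicit quantifier-rank argument, and the observation that the constructed modulator only upper-bounds $\md^\HH(G')$) is all consistent with, and somewhat more careful than, the paper's terse argument.
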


\begin{proof}
Let $G$ be a graph, $k=\wsn_c^\HH(G)$ and $s$ be the nesting depth of quantifiers in $\phi$. We begin by computing a $(q(\wsn^\HH_c),c)$-\wsm{} to $\HH$, denoted $\vec{X}$, in polynomial time by using Condition~\ref{cond2}. We then proceed by constructing $(G',\vec{X}')$ by Lemma~\ref{lem:similar}. Since each $X'_i\in\vec{X}'$ has size bounded by a constant and $|\vec{X}'|\leq k$, it follows that $\bigcup\vec{X}'$ is a modulator to $\HH$ graphs of cardinality $\bigoh(q(k))$. Then, using Condition~\ref{cond1}, $\MSOMC{\phi}$ admits a kernel of size $p(q(k))$ on $G'$. Finally, since $G$ and $G'$ are $q$-similar, it follows from Fact~\ref{fact:partitiongame} that $G\models \phi$ if and only if $G'\models \phi$.
\end{proof}

Let us briefly discuss the limitations of the above theorem. The condition that $\MSOMC{\phi}$ admits a polynomial (bi)kernel parameterized by $\md^\HH(G)$ is clearly necessary for the rest of the theorem to hold, since $\wsn^\HH(G)\leq \md^\HH(G)$. One might wonder whether a weaker necessary condition could be used instead; specifically, would it be sufficient to require that $\MSOMC{\phi}$ is polynomial-time tractable in $\HH$? This turns out not to be the case, as follows from the following fact.

\begin{fact}[\cite{EibenGanianSzeider15}]
\label{lem:MSOhard}
There exists an MSO sentence $\phi$ and a graph class $\HH$ characterized by a finite set of forbidden induced subgraphs such that $\MSOMC{\phi}$ is polynomial-time tractable on $\HH$ but NP-hard on the class of graphs with $\md^\HH(G)\leq 2$.
\end{fact}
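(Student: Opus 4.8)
The plan is to exhibit an explicit pair $(\phi,\HH)$ rather than to argue abstractly, and to engineer $\phi$ as a conjunction of a cheap \emph{guard} that is falsified by every graph in $\HH$ together with a genuinely hard \MSO condition that is only unlocked once two extra vertices are added. Concretely, I would take $\HH$ to be the class of \emph{triangle-free} graphs, which is characterized by the single forbidden induced subgraph $K_3$, and set
\[ \phi \;=\; \phi_{\mathrm{tri}}\wedge\phi_{\mathrm{3col}}, \]
where $\phi_{\mathrm{tri}}=\exists x\,\exists y\,\exists z\,\bigl(x\neq y\wedge y\neq z\wedge x\neq z\wedge E(x,y)\wedge E(y,z)\wedge E(x,z)\bigr)$ expresses the existence of a triangle and $\phi_{\mathrm{3col}}$ is the standard sentence asserting that $V$ can be partitioned into three independent sets. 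Both are fixed \MSO sentences, hence so is $\phi$.

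Tractability on $\HH$ is then immediate: every $G\in\HH$ is triangle-free, so $\phi_{\mathrm{tri}}$ fails and $\phi$ is false on all of $\HH$; thus $\MSOMC{\phi}$ on $\HH$ is the constant ``no'' answer and is solvable in polynomial time. I would stress at this point the structural reason forcing this particular choice of $\HH$: the class $\HH$ must have \emph{unbounded} rank-width, for otherwise $\{G:\md^{\HH}(G)\le2\}$ would also have bounded rank-width—deleting or adding one vertex changes rank-width by at most one, so two vertices change it by at most two—and Fact~\ref{fact:msorankwidth} would make $\MSOMC{\phi}$ polynomial there as well, killing the desired hardness. Triangle-free graphs indeed have unbounded rank-width, which is precisely what makes this base class a viable candidate while ruling out the ``tame'' hereditary classes.

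For hardness on $\{G:\md^{\HH}(G)\le2\}$ I would reduce from \textsc{$3$-Coloring} restricted to triangle-free graphs, which is well known to be \NP-complete. Given a triangle-free instance $G_0$ and an arbitrary vertex $w\in V(G_0)$, construct $G$ by adding two new vertices $a,b$ together with the edges $ab$, $aw$ and $bw$ and no other edges. Then $G-\{a,b\}=G_0$ is triangle-free, so $\md^{\HH}(G)\le2$; the set $\{a,b,w\}$ is a triangle, so $\phi_{\mathrm{tri}}$ holds in $G$; and since $a,b$ are adjacent only to each other and to $w$, any $3$-coloring of $G_0$ extends to $G$ by assigning $a$ and $b$ the two colors distinct from that of $w$, while conversely the restriction of any $3$-coloring of $G$ colors $G_0$. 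Hence $G$ is $3$-colorable iff $G_0$ is, and because $G$ always contains a triangle we obtain $G\models\phi$ iff $G_0$ is $3$-colorable. As the reduction is polynomial, $\MSOMC{\phi}$ is \NP-hard on $\{G:\md^{\HH}(G)\le2\}$.

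The single delicate point, and the step I expect to require the most care, is to plant the certifying triangle \emph{through} the two modulator vertices rather than inside the base graph: the triangle must use both $a$ and $b$ so that their deletion restores membership in $\HH$, while at the same time the apex vertices must not constrain the coloring of $G_0$. Routing the triangle through a single shared third corner $w$ accomplishes both simultaneously. Beyond this, the remaining obligations are merely to invoke the \NP-hardness of $3$-coloring on triangle-free graphs and to record the bounded-rank-width observation of the second paragraph, which is what forces $\HH$ to be width-unbounded and thereby identifies triangle-freeness as the correct base class.
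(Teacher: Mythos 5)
The paper does not actually prove this statement: it is imported verbatim as a Fact from~\cite{EibenGanianSzeider15}, so there is no in-paper proof to compare against. Your construction is correct and is the natural one for this fact --- the guard $\phi_{\mathrm{tri}}$ trivializes the problem on the triangle-free class $\HH$, the two apex vertices $a,b$ plant the certifying triangle while leaving $3$-colorability of $G_0$ untouched, and the reduction correctly rests on the \NP-completeness of \textsc{$3$-Coloring} on triangle-free graphs (Maffray--Preissmann), which is genuinely needed since the source instance itself must lie in $\HH$; the rank-width sanity check explaining why $\HH$ must have unbounded rank-width is a sound (if optional) addition.
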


Condition~\ref{cond2} is also necessary for our approach to work, as we need some (approximate) \wsm{}; luckily, Section~\ref{sec:finding} shows that a wide variety of studied graph classes satisfy this condition.
Finally, one can also rule out an extension of Theorem~\ref{thm:main-use} to $\MSOOPT{}{}$ problems (which was possible in the special case considered in Section~\ref{sec:vc}), as we show below.

\lv{\begin{lemma}}
\sv{\begin{lemma}[$\star$]}
\label{lem:MSOOPThard}
There exists an MSO formula $\varphi$ and a graph class $\HH$ characterized by a finite obstruction set such that $\MSOOPT{\leq}{\varphi}$ admits a bikernel parameterized by $\md^\HH$ but is \emph{para}\NP-hard parameterized by $\wsn_1^\HH$.
\end{lemma}

\newcommand{\pfMSOO}[0]{
\begin{proof}
Consider the formula $\varphi(S)=\text{fvs}(S)\vee \text{deg}(S)$, where $\text{fvs}(S)$ expresses that $S$ is a feedback vertex set in $G$ and $\text{deg}(S)$ expresses that $S$ is a modulator to graphs with maximum degree $4$. Let $\HH$ be the class of graphs of maximum degree $4$. 

First, we prove that $\MSOOPT{\leq}{\varphi}$ admits a polynomial bikernel parameterized by $k=\md^\HH(G)$. Consider the following algorithm: given an instance $(G,r)$ of $\MSOOPT{\leq}{\varphi}$, we check if $r\geq k$. If this is the case, we can immediately output YES, since $k$ is the minimum size of a set $S$ satisfying $\text{deg}(S)$. If $r<k$, we compute a polynomial kernel $(G_1,r_1)$ of $\MSOOPT{\leq}{\mtext{\emph{fvs}}}$ from $(G,r)$, and a polynomial kernel $(G_2,r_2)$ for $\MSOOPT{\leq}{\mtext{\emph{deg}}}$ from $(G,r)$. Both $(G_1,r_1)$ and $(G_2,r_2)$ have size bounded by a polynomial of $k$, $G_1$ can be constructed by using any kernelization algorithm for \textsc{Feedback Vertex Set}~\cite{Thomasse10}, and $G_2$ can be constructed by enumerating all obstructions (supergraphs of $K_{1,5}$) and then using a kernelization algorithm for $6$-\textsc{Hitting Set}~\cite{Abu-Khzam10}. Now it is easily observed that $(G,r)$ is a YES instance of $\MSOOPT{\leq}{\varphi}$ if and only if $((G_1, r_1), (G_2, r_2))$ is a YES instance of the following problem: given an instance $(A,a)$ of \textsc{Feedback Vertex Set} and an instance $(B,b)$ of \textsc{$6$-Hitting Set}, answer YES if and only if at least one of $(A,a)$ and $(B,b)$ is a YES instance.

To conclude the proof, we show that $\MSOOPT{\leq}{\varphi}$ is \NP-hard even when $G$ has a $(1,1)$-\wsm{} to $\HH$. We reduce from the (unparameterized) \textsc{Feedback Vertex Set} problem on graphs of degree at most $4$, which is known to be \NP-hard~\cite{GareyJohnson79}. Let $(G,r)$ be an instance of \textsc{Feedback Vertex Set} where $G$ is an $n$-vertex graph of degree at most $4$. We construct the graph $G'$ from $G$ by adding a single vertex $x$ and a disjoint union of $n+1$ stars with five leaves, denoted $S_1,\dots,S_{n+1}$, and then adding an edge from $x$ to the center of each star and from $x$ to a single arbitrary vertex in $G$. Observe that any feedback vertex set in $G$ is also a feedback vertex set in $G'$, and that any set $S$ sayisfying $\text{deg}(S)$ must have cardinality at least $n+1$. Hence $(G,r)$ is a YES-instance of \textsc{Feedback Vertex Set} if and only if $(G',r)$ is a YES-instance of $\MSOOPT{\leq}{\varphi}$. Since $x$ together with the stars added in $G'$ forms a tree (which has rank-width $1$), it follows that this forms a $(1,1)$-\wsm{} to $\HH$.
\end{proof}}
\lv{\pfMSOO}

\sv{
\begin{proof}[Sketch of Proof]
Consider the formula $\varphi(S)=\text{fvs}(S)\vee \text{deg}(S)$, where $\text{fvs}(S)$ expresses that $S$ is a feedback vertex set in $G$ and $\text{deg}(S)$ expresses that $S$ is a modulator to graphs with maximum degree $4$. Let $\HH$ be the class of graphs of maximum degree $4$. 
\end{proof}
}

\subsection{Applications of Theorem~\ref{thm:main-use}}

As our first general application, we consider the results of Gajarsk\' y et al. in~\cite{GajarskyHlinenyObdrzalek13}. Their main result is summarized below.

\begin{fact}[\cite{GajarskyHlinenyObdrzalek13}]\label{fact:kerneltdmod}
Let $\Pi$ be a problem with finite integer index, $\KK$ a class of graphs of bounded expansion,
$d\in \Nat$, and $\HH$ be the class of graphs of treedepth at most $d$. 
Then there exist an algorithm that takes as input $(G,\xi)\in \KK\times \Nat$ and in time $\bigoh(|G|+\log\xi)$ outputs $(G',\xi')$ such that 
\begin{enumerate}
 \item $(G,\xi)\in \Pi$ if and only if $(G',\xi')\in \Pi$; 
 \item $G'$ is an induced subgraph of $G$; and
 \item $|G'|=\bigoh(\md^\HH(G))$.
\end{enumerate}
\end{fact}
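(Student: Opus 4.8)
The plan is to prove this through the \emph{protrusion-replacement} framework, specialised to bounded-expansion graphs. The first step is to compute a modulator $X$ with $G-X\in\HH$ and $|X|=\bigoh(\md^\HH(G))$. Since treedepth at most $d$ is a hereditary class characterised by a finite set of forbidden induced subgraphs, finding $X$ reduces to $r$-\textsc{Hitting Set} on the obstructions (exactly as for the well-structured modulators of Section~\ref{sec:finding}), which is constant-factor approximable by Fact~\ref{fact:hit}; on a bounded-expansion class the obstructions can be enumerated in linear time. With $X$ in hand, the whole argument reduces to a single structural claim: the pair $(G,X)$ admits a protrusion decomposition $V(G)=R_0\cup R_1\cup\dots\cup R_\ell$ with $X\subseteq R_0$, $|R_0|=\bigoh(|X|)$, $\ell=\bigoh(|X|)$, each $R_i$ ($i\ge 1$) inducing a graph of treewidth $\bigoh(1)$ with at most $\beta=\beta(d)$ neighbours in $R_0$, and no edges joining distinct $R_i$.

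The main work is the construction of this decomposition. Every connected component $C$ of $G-X$ has treedepth at most $d$, hence $\tw(G[C])\le d$ and diameter bounded by a function of $d$. I would group the components of $G-X$ by their neighbourhood $N_X(C)$ in $X$; since these components have bounded diameter and $G$ has bounded expansion, Fact~\ref{fact:boundedexp} bounds the number of distinct neighbourhoods by $\bigoh(|X|)$. A component whose neighbourhood has size at most $\beta$ is already a protrusion of bounded boundary and bounded treewidth, regardless of how many vertices it contains, so it can be taken as one of the $R_i$. For a neighbourhood $N$ with $|N|>\beta$, bounded expansion forbids a large complete bipartite shallow minor, so only a bounded number of components can attach to the same large $N$; combined with the $\bigoh(|X|)$ bound on the number of distinct neighbourhoods, this yields only $\bigoh(|X|)$ large-neighbourhood components overall. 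Each such component still has bounded treedepth, so I would place a bounded number of the ``high'' vertices of its elimination forest into $R_0$, after which the pieces hanging below them become protrusions of bounded boundary; this keeps $|R_0|=\bigoh(|X|)$.

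Given the decomposition, I would invoke the finite integer index of $\Pi$. For boundary size $\beta$ there are only finitely many equivalence classes of $\beta$-boundaried graphs, each with a representative of bounded size. Any protrusion $R_i$ larger than this bound contains, by pigeonhole along a chain of nested sub-protrusions, two sub-protrusions of the same type, and the material strictly between them can be \emph{deleted} while shifting the optimum by a fixed integer, recorded by updating $\xi$ in binary. Deleting, rather than substituting a synthetic gadget, is exactly what guarantees that the output $G'$ is an \emph{induced} subgraph of $G$ (property~2). Iterating this reduction over all $\bigoh(|X|)$ protrusions shrinks each to constant size, so $|G'|=\bigoh(|X|)=\bigoh(\md^\HH(G))$ (property~3); the equivalence $(G,\xi)\in\Pi\Leftrightarrow(G',\xi')\in\Pi$ (property~1) follows from the definition of finite-integer-index equivalence together with the accumulated integer offsets, and the running time $\bigoh(|G|+\log\xi)$ follows because each step is local, bounded-expansion graphs admit linear-time enumeration of the relevant bounded structures, and $\xi$ is only ever manipulated in binary.

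The hard part is the structural claim of the first paragraph: producing a decomposition whose $R_0$ is genuinely \emph{linear} in $|X|$. This requires juggling three sources of growth at once---the number of distinct neighbourhood types (controlled by Fact~\ref{fact:boundedexp}), the possibly large components with large boundaries (controlled by the absence of dense shallow bipartite minors, then tamed through the treedepth elimination forest), and the internal size of each protrusion---while arranging the final reduction to proceed by deletion so that the induced-subgraph requirement of property~2 survives. It is this last constraint, deletion-based rather than gadget-based protrusion replacement, that separates the argument from a routine application of the protrusion machinery.
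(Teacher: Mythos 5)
This statement is a \emph{Fact} imported verbatim from \cite{GajarskyHlinenyObdrzalek13}; the paper you were given contains no proof of it at all (it is used as a black box, together with Fact~\ref{fact:MSOisFII} and Theorem~\ref{thm:findwsm}, to derive Theorem~\ref{thm:kerneltd}). So there is no ``paper's own proof'' to compare against, and what you have written is a reconstruction of the argument of the cited external paper rather than of anything in this one. Note also that your first step is not needed for the statement as given: the fact asserts a bound in terms of $\md^\HH(G)$ and its proof may assume an (approximate) modulator is computed as a subroutine, but approximating $X$ via $r$-\textsc{Hitting Set} on obstructions is the concern of Section~\ref{sec:finding} of the present paper, not part of what Fact~\ref{fact:kerneltdmod} is used for.

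As a blind reconstruction of the cited result, your outline follows the intended protrusion-decomposition route, but one step is genuinely shaky: the treatment of components of $G-X$ whose neighbourhood in $X$ is large. The $K_{\beta,m}$ shallow-minor argument bounds how many components can share the \emph{same} large neighbourhood, but a single component can have an arbitrarily large neighbourhood in $X$ (already $K_{1,|X|}$ is consistent with bounded expansion), and moving ``high'' vertices of its elimination forest into $R_0$ does not shrink the attachment of the lower pieces to $X$ --- the subtrees hanging below still see the same large set of $X$-neighbours, so they do not become bounded-boundary protrusions. The standard repair (and, in essence, what \cite{GajarskyHlinenyObdrzalek13} does) is a degree-counting step: in a uniformly sparse class, at most $\bigoh(|X|)$ vertices of $G-X$ have more than a constant number of neighbours in $X$, so these vertices are absorbed into $R_0$, and one iterates this augmentation a bounded number of times (guided by the depth-$d$ elimination forest) before applying Fact~\ref{fact:boundedexp} to the residue. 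Your deletion-based, pigeonhole-along-nested-sub-protrusions replacement to preserve the induced-subgraph property is the right idea for property~2, and the rest of the sketch is plausible at this level of detail.
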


The following fact provides a link between the notion of \emph{finite integer index} used in the above result and the $\MSOMC{\varphi}$ problems considered in this paper.

\begin{fact}[\cite{ArnborgEtal93}, see also \cite{BodlaenderEtal09}]\label{fact:MSOisFII}
  For every \MSO{} sentence $\varphi$, it holds that $\MSOMC{\varphi}$
  is finite-state and hence has finite integer index.
\end{fact}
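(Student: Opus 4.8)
The plan is to establish that $\MSOMC{\varphi}$ is \emph{finite-state} --- that is, that a natural Myhill--Nerode-type equivalence over boundaried graphs has finite index --- and then to observe that for a pure decision problem such as $\MSOMC{\varphi}$ this immediately yields finite integer index. Throughout, fix $q$ to be the quantifier rank of $\varphi$. Recall that a \emph{$t$-boundaried graph} is a graph together with an injective labelling of $t$ of its vertices by $\{1,\dots,t\}$, and that the \emph{gluing} $G_1 \oplus G_2$ of two $t$-boundaried graphs is the graph obtained from their disjoint union by identifying vertices carrying the same boundary label. For the decision problem $\MSOMC{\varphi}$ the relevant canonical equivalence on $t$-boundaried graphs is: $G_1 \equiv_t G_2$ if and only if for every $t$-boundaried graph $H$ we have $G_1\oplus H\models\varphi \Leftrightarrow G_2\oplus H\models\varphi$. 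Showing that each $\equiv_t$ has finitely many classes is exactly finite-statedness.

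First I would invoke the finiteness of \MSO types. Viewing the $t$ boundary vertices of a $t$-boundaried graph $G$ as a distinguished tuple $\vec{b}$, the value $\mathit{type}_q(G,\vec{b})$ (the $q$-type relative to the boundary) ranges over only finitely many possibilities, since by Fact~\ref{fact:representatives} the set $\MSO_{q,t}/\!\equiv$ is finite. Thus the map sending a $t$-boundaried graph to its boundary $q$-type partitions all $t$-boundaried graphs into finitely many classes.

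The heart of the argument is a composition (Feferman--Vaught/Shelah) theorem for \MSO: the $q$-type of a gluing $G\oplus H$ is determined, as a function, by the boundary $q$-types of $G$ and of $H$. I would prove this via the Ehrenfeucht--Fra\"iss\'e game characterisation of \MSO equivalence (see \cite{Libkin04}). Concretely, if $G_1$ and $G_2$ have the same boundary $q$-type, then the Duplicator has a winning strategy in the $q$-round \MSO game on $(G_1,\vec{b}_1)$ and $(G_2,\vec{b}_2)$; one composes this with the identity strategy on the shared part $H$ to obtain a winning strategy on $G_1\oplus H$ and $G_2\oplus H$. Each Spoiler move on the $G$-side is answered using the $G_1/G_2$ strategy and each move on the $H$-side is copied verbatim, with set moves split into their $G$- and $H$-portions; since the gluing identifies precisely the boundary vertices, whose images agree by assumption, the resulting position is a partial isomorphism. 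Hence $G_1\oplus H$ and $G_2\oplus H$ have the same $q$-type, and in particular agree on the sentence $\varphi$ of quantifier rank $q$.

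Combining these steps, equality of boundary $q$-types implies $G_1\equiv_t G_2$, so $\equiv_t$ has at most as many classes as there are boundary $q$-types --- a finite number --- for every $t$; this is finite-statedness. Finally, since $\MSOMC{\varphi}$ carries no solution-size component, the integer offset in the definition of finite integer index is vacuously $0$, so finite-statedness coincides with finite integer index in this case, which is the implication recorded in \cite{ArnborgEtal93,BodlaenderEtal09}. I expect the main obstacle to be the composition theorem: one must carefully handle the \emph{amalgamation} (rather than disjoint union) at the boundary and ensure that quantified \emph{set} moves can be split consistently across the two glued sides without conflicting on the identified boundary vertices --- precisely where the game-based argument, with set moves partitioned and boundary images forced to coincide, does the work.
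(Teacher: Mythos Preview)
Your proposal is correct and follows the standard Feferman--Vaught/Ehrenfeucht--Fra\"iss\'e route that underlies the cited results. Note, however, that the paper itself does \emph{not} prove this statement: it is recorded as a Fact with references to \cite{ArnborgEtal93,BodlaenderEtal09} and used as a black box, so there is no in-paper proof to compare against. What you have sketched is essentially the argument one finds in those sources---finiteness of boundary $q$-types combined with compositionality of types under gluing---so your approach matches the literature the paper defers to.

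One small technicality: the paper's notion $\mathit{type}_q(G,\vec{U})$ is defined for tuples of \emph{sets} rather than tuples of individual vertices, so to stay within its formalism you would encode the $t$ boundary vertices as the $t$-tuple of singletons $(\{b_1\},\dots,\{b_t\})$ (or equivalently work with free individual variables). This does not affect the argument, since Fact~\ref{fact:representatives} still gives finitely many such types and the game composition goes through unchanged.
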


Finally, the following well-known fact is the last ingredient we need to apply our machinery.

\begin{fact}[\cite{NesetrilOssonademendez12}, page 138]\label{fact:tdinducedsubgraphs}
 Let $d\in \Nat$ and $\HH$ be the class of graph of treedepth at most $d$. 
 Then $\HH$ can be characterized by finite set of forbidden induced subgraphs. 
\end{fact}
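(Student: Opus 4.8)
The plan is to use the standard principle that a class closed under induced subgraphs is characterised by its family of minimal forbidden induced subgraphs, and then to prove that for the treedepth class this family is finite. First I would record that $\HH$ (graphs of treedepth at most $d$) is hereditary, since deleting a vertex never increases treedepth; hence $\HH$ is exactly the class of graphs containing no member of $\mathcal{O}_d$ as an induced subgraph, where $\mathcal{O}_d$ is the set of graphs $H$ with $\mathrm{td}(H)>d$ all of whose proper induced subgraphs have treedepth at most $d$. Every $H\in\mathcal{O}_d$ then satisfies $\mathrm{td}(H)=d+1$, and by minimality $\mathrm{td}(H-v)=d$ for every vertex $v$. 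Finiteness of the forbidden-subgraph characterisation of $\HH$ is therefore equivalent to finiteness of $\mathcal{O}_d$.

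Next I would prove a connectivity lemma that prevents members of $\mathcal{O}_d$ from being large in an uncontrolled way. Each $H\in\mathcal{O}_d$ is connected, since a disconnected minimal obstruction would have a single component of treedepth $d+1$ forming a proper induced subgraph of treedepth $>d$. More usefully, if $r$ is the root of an optimal elimination tree of $H$ then $H-r$ is connected as well: every component $C$ of $H-r$ is joined to $r$ (as $H$ is connected), so deleting the vertex set of $C$ leaves a connected proper induced subgraph of treedepth $1+\max_{C'\neq C}\mathrm{td}(C')$, which by minimality is at most $d$. This rules out any component of treedepth strictly below $d$ (deleting it would leave treedepth $d+1$) and any second component of treedepth $d$, so $H-r$ is a single component. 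Because the edges of $H$ join only comparable vertices of the elimination tree, $r$ must then have a unique child, so the top of the decomposition is path-like.

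With this structure I would bound $|V(H)|$ by induction on $d$, separating a depth bound from a width bound. The height of the optimal elimination tree equals $\mathrm{td}(H)=d+1$ by definition, and the treedepth bound additionally forbids a path on $2^{d+1}$ vertices, since $\mathrm{td}(P_{2^{d+1}})=d+2$; together with $\mathrm{td}(H)\le\ell(H)$ (a DFS tree is an elimination tree whose height is at most the longest path) this pins down the relevant depth. The remaining width bound is the heart of the argument: I would attach to each subtree of the elimination tree a bounded-size \emph{type} recording its isomorphism class together with the adjacency pattern of its vertices to the boundedly many ancestors, and argue that two sibling subtrees of the same type are interchangeable, so one could be deleted without altering the treedepth, contradicting minimality. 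Hence the branching at every node is bounded by the number of types, which by the inductive size bound on subtrees depends only on $d$; multiplying bounded height by bounded branching bounds $|V(H)|$, and $\mathcal{O}_d$ is finite.

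I expect the width bound to be the main obstacle. The delicate point is to make the subtree type fine enough that interchangeable siblings are genuinely redundant — deleting one must leave treedepth \emph{exactly} $d+1$, not merely at most $d+1$ — while keeping the number of types finite; ensuring the pumping argument respects the precise treedepth value rather than an upper bound is the crux. An attractive alternative that sidesteps the explicit counting is to invoke the known theorem that graphs of bounded treedepth are well-quasi-ordered by the induced subgraph relation (a Higman/Kruskal-type result over their bounded-depth elimination forests): $\mathcal{O}_d$ is an antichain contained in $\{H:\mathrm{td}(H)=d+1\}$, and a well-quasi-order has no infinite antichain, so $\mathcal{O}_d$ is finite at once. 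In that route the obstacle simply shifts to establishing or citing the well-quasi-ordering.
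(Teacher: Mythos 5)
The paper does not prove this statement at all: Fact~\ref{fact:tdinducedsubgraphs} is imported as a black box from Ne\v{s}et\v{r}il and Ossona de Mendez, so there is no in-paper argument to compare against. Measured against the standard proof in the cited source, your primary plan is the right one (reduce to finiteness of the set $\mathcal{O}_d$ of minimal obstructions, then bound their order by a height-times-branching argument in an elimination tree, pruning duplicate sibling branches of equal type), and your wqo alternative is also sound. But as written the proposal has two genuine defects. First, the connectivity lemma for $H-r$ is false as stated: the path $P_4=a\hbox{-}b\hbox{-}c\hbox{-}d$ is a minimal obstruction for treedepth at most $2$, and rooting an elimination tree at $b$ is optimal (it gives depth $1+\max\{\mathrm{td}(\{a\}),\mathrm{td}(\{c,d\})\}=3=\mathrm{td}(P_4)$), yet $P_4-b$ is disconnected. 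The step that breaks is your assertion that $\mathrm{td}(H-V(C))$ \emph{equals} $1+\max_{C'\neq C}\mathrm{td}(C')$ and, in particular, that deleting a component of treedepth below $d$ ``would leave treedepth $d+1$'': in the $P_4$ example deleting $\{a\}$ leaves $P_3$ of treedepth $2$, not $3$. You have replaced an upper bound by an equality, which is precisely the exact-value-versus-upper-bound trap you yourself flag later. Fortunately this lemma is not needed for the rest of the plan.

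Second, and more importantly, the width bound is the entire theorem and it is not proved. ``Two sibling subtrees of the same type are interchangeable, so one could be deleted without altering the treedepth'' is the statement that requires proof, not an observation; you correctly identify it as the crux and then leave it unresolved. The missing argument is a pigeonhole/re-insertion step: if among the siblings below a node more than $N(d)$ branches induce isomorphic boundaried structures, then any elimination forest of depth at most $d$ of the graph with one copy deleted must treat two of the surviving duplicates with the same depth-pattern (there are only boundedly many ways to thread a bounded-size branch into a depth-$d$ forest), and the deleted copy can be re-inserted mimicking one of them without increasing the depth; hence deletion cannot drop the treedepth below $d+1$, contradicting minimality only when no such duplication exists, which bounds the branching. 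Quantifying $N(d)$ and verifying the re-insertion is the real content, and without it the proposal is a plan rather than a proof. The wqo route (minimal obstructions form an antichain of graphs of treedepth exactly $d+1$, and bounded-treedepth classes are well-quasi-ordered under induced subgraphs) is a legitimate shortcut, but, as you note, it only relocates the same pumping argument into the cited wqo theorem.
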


\lv{\begin{theorem}}
\sv{\begin{theorem}[$\star$]}
\label{thm:kerneltd}
Let $c,d\in \Nat$ and $\HH$ be the class of graphs of treedepth at most $d$. For every \MSO{} sentence $\varphi$, it holds that $\MSOMC{\varphi}$ admits a linear kernel parameterized by $\wsn^{\HH}_c$ on any class of graphs of bounded expansion.
\end{theorem}

\newcommand{\pfkerneltd}[0]{
\begin{proof}
Our proof relies on Theorem~\ref{thm:main-use}.
From Fact~\ref{fact:MSOisFII} and Fact~\ref{fact:kerneltdmod} 
it follows that $\MSOMC{\varphi}$ admits a kernel of size $\bigoh(\md^\HH(G))$. Hence the assumptions of the theorem satisfy Condition~\ref{cond1} of Theorem~\ref{thm:main-use}, where the polynomial $p$ is a linear function.

By Fact~\ref{fact:tdinducedsubgraphs} $\HH$ can be characterized by finite set of forbidden induced subgraphs.
 Therefore, by Theorem~\ref{thm:findwsm} there exists a polynomial algorithm that can find a $(k, c)$-\wsm{} to $\HH$ and $k\le r\cdot\wsn_c^\HH$, where $r\in\Nat$ is a constant (depending on $d$). Therefore, Condition~\ref{cond2} of Theorem~\ref{thm:main-use} is also satisfied, where the polynomial $q$ is also linear function. Hence we conclude that $\MSOMC{\varphi}$ admits a kernel of size $r\cdot\wsn_c^\HH(G)$.
\end{proof}}
\pfkerneltd

As our second general application, we consider \wsm s to the class of forests. 
Lemma~\ref{lem:kernelfvs} shows that feedback vertex set may be used to kernelize any MSO-definable decision problem on graphs of bounded degree. 
\lv{However, before we proceed to the lemma itself, we need to briefly introduce the \emph{protrusion replacement} rule.}

 \newcommand{\protrusions}[0]{ 
 Given a graph $G$, an $r$-\emph{protrusion} is a set of vertices $L\subseteq V(G)$ such that $|N(G-L)|\leq r$ and the treewidth of $G[L]$ is at most $r$. The set $N(G-L)$ is called the \emph{boundary} of $L$. A much more in-depth explanation of protrusion replacement can be found, e.g., in~\cite{BodlaenderEtal09} or \cite{KimLangerPaulReidlRossmanith13}.

\begin{fact}[\cite{BodlaenderEtal09,KimLangerPaulReidlRossmanith13}]
\label{fact:protrusion}
Let $r\in \Nat$ be a constant, $\phi$ be a fixed \MSO{} formula and $G$ be a graph containing an $r$-protrusion $L$. Then there exists a constant-size graph $G'_L$ such that the graph $G'$ obtained by deleting $L$ and making at most $r$ vertices of $G'_L$ adjacent to the neighbors of $L$ satisfies $G'\models \varphi$ iff $G\models \varphi$. Furthermore the graph $G'_L$ may be computed from $G$ and $L$ in polynomial time. 
\end{fact}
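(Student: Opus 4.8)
The plan is to treat the protrusion as a \emph{boundaried graph} and replace it by a constant-size boundaried graph carrying the same \MSO{} type relative to its boundary, then argue via a composition (Feferman--Vaught style) theorem that gluing either boundaried graph to the same exterior yields the same truth value of $\phi$. Write $q$ for the quantifier rank of $\phi$ and let $B=N(G-L)\subseteq L$ be the boundary, so $t:=|B|\le r$; fix an ordering $\vec b=(b_1,\dots,b_t)$ of $B$. Let $H:=G-L$ be the exterior. The only edges of $G$ joining $L$ to $H$ are incident to $B$, so $G$ is obtained by gluing the boundaried graph $(G[L],\vec b)$ to $H$ along the attachment pattern that records, for each $i$, the set of exterior neighbours of $b_i$.

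First I would record the protrusion's behaviour by its type $\mathit{type}_q(G[L],\vec b)$, where $\vec b$ is read as a tuple of $t\le r$ distinguished (singleton) set variables. Since treewidth at most $r$ gives $\rw(G[L])\le r+1$ (see~\cite{OumSeymour06}), the type machinery of this paper applies to $G[L]$: by Fact~\ref{fact:representatives} there are only finitely many such types (the number of free set variables is bounded by $r$), and by Fact~\ref{fact:msorankwidth} together with Lemma~\ref{lem:typeformula} the type $\mathit{type}_q(G[L],\vec b)$ can be computed in polynomial time. Invoking the tuple version of Fact~\ref{fact:constantrep} then yields, in polynomial time, a boundaried graph $(G'_L,\vec b')$ of size bounded by a constant depending only on $r$ and $q$ with $\mathit{type}_q(G'_L,\vec b')=\mathit{type}_q(G[L],\vec b)$; equivalently, one precomputes a minimum representative for each of the finitely many types and looks it up.

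I would then build $G'$ exactly as in the statement: delete $L$, insert $G'_L$, and make each boundary vertex $b'_i$ of $G'_L$ adjacent to precisely the exterior neighbours that $b_i$ had in $G$, so that at most $t\le r$ vertices of $G'_L$ receive edges. For correctness it suffices to show $\mathit{type}_q(G,\emptyset)=\mathit{type}_q(G',\emptyset)$, since $\phi$ has quantifier rank $q$. This is where the composition argument enters: because $(G[L],\vec b)$ and $(G'_L,\vec b')$ realise the same $q$-type, the Duplicator wins the $q$-round \MSO{} game between them; I would combine this strategy with the identity strategy on the shared exterior $H$ to win the $q$-round game between $G$ and $G'$. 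Concretely, every point/set move of the Spoiler splits into a part inside the protrusion, answered by the protrusion strategy, and a part inside $H$, answered by the identity; the resulting correspondence is a partial isomorphism because edges inside $H$ and inside the protrusion are preserved by the respective strategies, while every edge crossing between $H$ and the protrusion is incident to a boundary vertex $b_i$, which is matched with $b'_i$ carrying the same exterior neighbours. This yields $\mathit{type}_q(G,\emptyset)=\mathit{type}_q(G',\emptyset)$ and hence $G\models\phi\iff G'\models\phi$.

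The main obstacle is precisely this gluing step: verifying that the two winning strategies compose into a single winning strategy, and in particular that \MSO{} set moves decompose cleanly across the two sides and the partial-isomorphism condition survives because all exterior--protrusion edges route through the fixed, identically-attached boundary. This is the Feferman--Vaught/Courcelle composition theorem for \MSO{} specialised to gluing along a bounded vertex boundary; it is the local-boundary analogue of Fact~\ref{fact:partitiongame}, which handles the uniform-neighbourhood case of split-modules. The polynomial running time is then immediate: computing $\mathit{type}_q(G[L],\vec b)$ is polynomial by bounded rank-width, producing $G'_L$ is a constant-size lookup, and assembling $G'$ is linear.
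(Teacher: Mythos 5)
Your argument is correct in substance, but it takes a genuinely different route from the one this paper relies on: the statement is imported here as a black box from the protrusion-replacement framework of \cite{BodlaenderEtal09,KimLangerPaulReidlRossmanith13}, where one works with boundaried graphs and the \emph{finite integer index} (finite-state) property of $\MSOMC{\varphi}$ (cf.\ Fact~\ref{fact:MSOisFII}), replacing a protrusion by a representative of its canonical equivalence class. You instead give a direct, self-contained proof using this paper's own type machinery: encode the boundary $\vec b$ as $t\le r$ distinguished singletons, pass from treewidth at most $r$ to rank-width at most $r+1$, compute $\mathit{type}_q(G[L],\vec b)$ via Fact~\ref{fact:msorankwidth} and Lemma~\ref{lem:typeformula}, swap in a constant-size realizer of the same type, and prove $G\models\varphi\Leftrightarrow G'\models\varphi$ by composing Duplicator strategies (identity on the exterior, the type strategy inside the protrusion). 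This is the Feferman--Vaught-style composition that also underlies Fact~\ref{fact:partitiongame}, and you correctly identify why that fact cannot be invoked directly: boundary vertices of a protrusion may have pairwise different exterior neighbourhoods, so the boundary must be tracked vertex-by-vertex rather than as a uniform split-module frontier. What your route buys is effectivity and uniformity in $\varphi$ --- the FII-based replacement is, for general finite-state problems, non-constructive unless representatives are supplied in advance, whereas your type computation makes the lookup explicit --- at the price of re-deriving the composition lemma by hand; the cited route buys generality beyond MSO-definable problems.

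Two small points you should patch. First, the ``tuple version'' of Fact~\ref{fact:constantrep} is asserted rather than proved; it does follow by the same argument with $l=t$ singleton parameters, but you must note that singleton-hood, pairwise distinctness of the $b_i$, and hence the index correspondence $b_i\mapsto b'_i$ are preserved only because these properties are themselves MSO-expressible with quantifier rank $2$, so one should work with rank $\max(q,2)$ (or simply $q+2$). Second, your gluing step silently uses that the Duplicator's protrusion strategy answers a Spoiler point move on $b_i$ with $b'_i$; this is forced by the partial-isomorphism condition on the singleton parameters and is precisely what makes the exterior--protrusion crossing edges survive, so it deserves an explicit sentence. Neither point threatens correctness.
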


The graph $G'_L$ is sometimes called a \emph{representative}.}
\lv{\protrusions}

\lv{\begin{lemma}}
\sv{\begin{lemma}[$\star$]}
\label{lem:kernelfvs}
Let $\FF$ be the class of forests and $d\in \Nat$. For every \MSO{} sentence $\varphi$, it holds that $\MSOMC{\varphi}$ admits a linear kernel parameterized by $\md^{\FF}$ on any class of graphs of degree at most~$d$.
\end{lemma}

\newcommand{\pfkernelfvs}[0]{
\begin{proof}
Let $G'$ be the input graph. We first use Fact~\ref{fact:fvs} to compute a feedback vertex set $X\subseteq V(G)$ of cardinality $k\leq 2\md^\FF(G)$. Let $G'_0$ denote the graph containing all the connected components of $G'-X$ which are not adjacent to $X$. Since $G'_0$ has treewidth $1$ and an empty boundary, by Fact~\ref{fact:protrusion} we can find a constant-size graph $G^*_0$ such that the graph $G^*$ obtained from $G'$ by replacing $G'_0$ with $G^*_0$ satisfies the following: $G^*\models \varphi$ iff $G'\models \varphi$.

Let $G$ be obtained from $G^*$ by deleting $G^*_0$. Then the number of connected components in $H=G-X$ is bounded by $k\cdot d$, since each tree in $H$ is adjacent to at least one vertex in $X$. Let $u$ be the order of the largest representative $G'_L$ as per Fact~\ref{fact:protrusion} for treewidth $1$, boundary-size $2$ and $\varphi$. We proceed by \emph{marking} each vertex in $H$ which is adjacent to some vertex in $X$. This is followed by a secondary marking procedure, where we mark each vertex $v\in V(H)$ of degree at least $3$ with the following property: deleting $v$ separates its connected component in $H$ into at least $3$ connected components of $H$, each containing at least $1$ marked vertex. Observe that the total number of marked vertices is bounded by $2dk$.

Now assume that $H$ contains a tree $T$ of diameter $r$ such that $r>u\cdot (z+1)$, where $z$ is the number of marked vertices in $T$. Let $P$ be a path in $T$ of length $r$. Let us partition $P$ into path segments $P_0,P_1,\dots P_y$, where $P_0$ ends at the first marked vertex on $P$, $P_y$ starts at the first marked vertex on $P$, and each other $P_j$ starts at the $j$-th and ends at the $(j+1)$-th marked vertex on $P$. Then there must exist some path segment of diameter greater than $u$; let us pick an arbitrary such path segment $P_i$ between vertices $a$ and $b$. Let $T'$ be the subtree of $T$ between $a$ and $b$ (including $a,b$); observe that the only marked vertices which may occur in $T'$ are $a$ and $b$. Then $T'$ forms a subgraph in $G$ with a boundary size of $2$, treewidth $1$, and size larger than the representative of $T'$. Hence invoking the replacement procedure on $T'$ results in a new, equivalent graph $G$ of smaller order. The step outlined in this paragraph takes polynomial time and is guaranteed to reduce the order of $G$ by at least $1$.

So, assume that each tree in $H$ has diameter at most $u\cdot (z+1)$, where $z$ is the number of marked vertices in that tree. In particular, this implies that each tree has order at most $ud\cdot (z+1)$. Since the total number of marked vertices in $H$ is at most $dk$, it follows that $V(H)\in \bigoh(k)$. In particular, this implies that $V(G^*)\in \bigoh(k)$, which means that we have a linear kernel.
\end{proof}}
\lv{\pfkernelfvs}

With Lemma~\ref{lem:kernelfvs}, the proof of the following theorem is analogous to the proof of Theorem~\ref{thm:kerneltd}.

\lv{\begin{theorem}}
\sv{\begin{theorem}[$\star$]}
\label{thm:kernelfvs}
Let $c\in \Nat$ and $\FF$ be the class of forests. For every \MSO{} sentence $\varphi$, it holds that $\MSOMC{\varphi}$ admits a linear kernel parameterized by $\wsn^{\FF}_c$ on any class of graphs of bounded degree.
\end{theorem}

\newcommand{\pfthmkernelfvs}[0]{
\begin{proof}
The theorem once again follows from Theorem~\ref{thm:main-use}. In particular, Condition~\ref{cond1} follows directly from Lemma~\ref{lem:kernelfvs} with a linear function $p$. Theorem~\ref{thm:findwsmforest} then gives us a polynomial time algorithm to find a $3$-approximation of $\wsn^{\FF}_c$, satisfying Condition~\ref{cond2} with $q$ also linear. Therefore we conclude by Theorem~\ref{thm:main-use} that $\MSOMC{\varphi}$ admits a linear kernel parameterized by $\wsn^{\HH}_c$ on any class of graphs of bounded degree.
\end{proof}}
\lv{\pfthmkernelfvs}




\section{Conclusion}
\label{sec:hardness}
Our results show that measuring the structure of modulators can lead to an interesting and, as of yet, relatively unexplored spectrum of structural parameters. Such parameters have the potential of combining the best of decomposition-based techniques and modulator-based techniques, and can be applied both in the context of kernelization (as demonstrated in this work) and FPT algorithms~\cite{EibenGanianSzeider15}. We believe that further work in the direction of modulators will allow us to push the frontiers of tractability towards new, uncharted classes of inputs.

One possible direction for future research is the question of whether the class of $\MSO$-definable problems considered in Theorem~\ref{thm:main-use} can be extended to other finite-state problems. It would of course also be interesting to see more applications of Theorem~\ref{thm:main-use} and new methods for approximating \wsm s. Last but not least, we mention that the split-modules used in the definition of our parameters could in principle be refined to less restrictive notions (for instance cuts of constant \emph{cut-rank}~\cite{OumSeymour06}); such a relaxed parameter could still be used to obtain polynomial kernels, as long as there is a way of efficiently approximating or computing such modulators.

\bibliographystyle{abbrv} \bibliography{literature}

\newpage

\sv{
\appendix

\section{Omissions for Subsection~\ref{sub:mso}}
\introtypes

\section{Omissions for Section~\ref{sec:kcwsm}}
\sparsity

\section{Omissions and Proofs for Section~\ref{sec:vc}}
\subsection{Proof of Proposition~\ref{prop:betterVC}}
\pfbetterVC

\subsection{Omissions Required for the Proof of Theorem~\ref{thm:msovc}}
\MSOVC

\subsection{Proof of Theorem~\ref{thm:msovc}}
\pfmsovc

\subsection{Omissions Required for the Proof of Theorem~\ref{thm:ann}}
\annotation

\subsection{Proof of Theorem~\ref{thm:ann}}
\pfann

\section{Proofs for Section~\ref{sec:finding}}
\subsection{Proof of Lemma~\ref{lem:longcycle}}
\pflongcycle

\subsection{Proof of Theorem~\ref{thm:findwsmforest}}
\pffindwsmforest

\subsection{Proof of Lemma~\ref{lem:hitting}}
\pflemhitting

\section{Omissions and Proofs for Section~\ref{sec:using}}
\subsection{Proof of Lemma~\ref{lem:MSOOPThard}}
\pfMSOO

\subsection{Proof of Theorem~\ref{thm:kerneltd}}
\pfkerneltd

\subsection{Omissions Required for the Proof of Lemma~\ref{lem:kernelfvs}}
\protrusions

\subsection{Proof of Lemma~\ref{lem:kernelfvs}}
\pfkernelfvs

\subsection{Proof of Theorem~\ref{thm:kernelfvs}}
\pfthmkernelfvs

}

\end{document}